\documentclass[11pt]{article}

\usepackage{a4wide}

\usepackage[english]{babel}

\usepackage{amsfonts}
\usepackage{amsmath} \usepackage{tikz,graphicx,subfigure,overpic,verbatim}
\usepackage{color} \usepackage{amssymb} \usepackage{amsthm}
\usepackage{hyperref}
\usepackage[fixlanguage]{babelbib}
\usetikzlibrary{arrows,decorations.markings}
\def\C{\mathbb{C}}
\def\N{\mathbb{N}}

\def\essinf{\mathop{\mathrm{ess inf}}}

\def\Tr{\mathop{\mathrm{Tr}}\nolimits}
\def\Re{\mathop{\mathrm{Re}}\nolimits}
\def\Im{\mathop{\mathrm{Im}}\nolimits}

\def\eps{\varepsilon}

\def\bbR{\mathbb{R}}
\def\bbD{\mathbb{D}}
\def\bbZ{\mathbb{Z}}
\def\bbN{\mathbb{N}}
\def\bbT{\mathbb{T}}
\def\calC{\mathcal{C}}
\def\calD{\mathcal{D}}

\def\calE{\mathcal{E}}

\def\T{\operatorname{T}}
\def\Ha{\operatorname{H}}
\def\DT{\operatorname{DT}}
\def\QT{\operatorname{QT}}

\def\fr{\stackrel{f.e.}{=}}

\newcommand{\bdnot}{{\boldsymbol{0}}}

\newcommand{\tb}[1]{{#1}^*}

\numberwithin{equation}{section}

\newtheorem{theorem}{Theorem}[section]
\newtheorem{lemma}[theorem]{Lemma}
\newtheorem{proposition}[theorem]{Proposition}

\newtheorem{corollary}[theorem]{Corollary}

\theoremstyle{definition}

\theoremstyle{remark}

\newtheorem{remark}{Remark}

\numberwithin{remark}{section}

\title{Relative  Szeg\H{o} asymptotics for Toeplitz determinants}

\author{Maurice Duits\footnote{Department of Mathematics, Royal Institute of Technology (KTH), Stockholm Lindstedsv\"agen 25, SE-10044, Sweden. Email: duits@kth.se.  Supported  by the  Swedish Research Council (VR) Grant no.\ 2012-3128.}  \and
Rostyslav Kozhan\footnote{Department of Mathematics, Uppsala University, Box 480, 75106 Uppsala, Sweden. Email: rostyslav.kozhan@math.uu.se.  Partially supported by the grant KAW 2010.0063 from the Knut and Alice Wallenberg Foundation.}
}

%




\begin{document}

\maketitle

\begin{abstract}
We study the asymptotic behavior, as $n \to \infty$,  of ratios  of Toeplitz determinants $D_n({\rm e}^h {\rm d}\mu)/D_n({\rm d}\mu)$  defined by a measure $\mu$ on the unit circle and a sufficiently smooth function  $h$.  The approach we follow is based on
the theory of orthogonal polynomials.
We prove that the second order asymptotics depends on  $h$ and only a few Verblunsky coefficients associated to $\mu$. As a result, we establish a relative version of the Strong Szeg\H{o} Limit Theorem for a wide class of measures
$\mu$ with essential support on a single arc. In particular, this allows the measure to have a singular component within or outside of the arc.
\end{abstract}

\section{Introduction}
Let $\mu$ be a finite Borel measure on the unit circle $\mathbb T = \{z \in \C \mid |z|=1\}$ with infinitely many points in its support. The Toeplitz matrix of size $n \in \N$ associated to $\mu$ is defined as the matrix
$$T_n({\rm d} \mu)= (c_{j-k})_{j,k=1}^n,$$
with
$$c_k = \int_{\mathbb T} z^{-k}  {\rm d}\mu(z).$$
We then denote its determinant by $D_n({\rm d} \mu)$, i.e.,
$$D_n({\rm d} \mu)= \det T_n({\rm d} \mu).$$
The purpose of this paper is to study  the asymptotic behavior  of the ratio
\begin{equation}\label{eq:ratio}
\frac{D_n({\rm e}^h {\rm d} \mu)}{D_n({\rm d} \mu)}, \quad \text{ as } n \to \infty,
\end{equation}
where $h: \mathbb T \to \C$ is a function on the unit circle on which we will impose certain smoothness conditions.  Since we are concerned with the \emph{ratio} of two Toeplitz determinants, we speak of \emph{relative} asymptotics.

Toeplitz matrices and their determinants appear in numerous places in mathematics and mathematical physics and are therefore very well-studied in the literature. Statistical mechanics  has proved to be a particularly rich  source, as various quantities in models of interest can be reduced to studying the asymptotic behavior of special Toeplitz determinants. A classical example is spin-spin correlations for the Ising model  leading to a Toeplitz determinant with a Fisher-Hartwig symbol as originally shown by Kaufman and Onsager. The asymptotic study of Toeplitz determinants for Fisher-Hartwig symbols has witnessed dramatic progress  in the last two decades, e.g. \cite{DIKannals,EhrhF}, and  we refer the interested reader to \cite{DIK,KrasovskyRev} for  recent reviews and a good source for further references.  Random matrix theory is another example of a discipline where Toeplitz matrices arise and  we will discuss a particular application of asymptotics for the ratio \eqref{eq:ratio}  in more detail below. Toeplitz matrices also relate naturally to the spectral theory of unitary operators and orthogonal polynomials on the unit circle: indeed, using the representation~\eqref{eq:heine}, one can see that $D_n({\rm d} \mu)$ can be rewritten in terms of the leading coefficients of orthonormal polynomials. 
Relative asymptotics of (the leading coefficients of) orthonormal polynomials is an important topic of investigation in spectral theory. We briefly discuss related Nevai's conjecture in Section~\ref{ssWeak} below.

One of the most celebrated results  on the asymptotic behavior for Toeplitz determinants is the \emph{Strong Szeg\H{o} Limit Theorem}: if ${\rm d} \mu=  \frac{{\rm d} \theta}{2 \pi} $ where ${\rm d}  \theta$ is the arclength measure on the circle and $h$ is a sufficiently smooth function, then
\begin{equation}
\label{eq:Szego}
 D_n\left({\rm e}^h \frac{{\rm d}\theta}{2 \pi}\right)= {\rm e}^{n h_0+  \sum_{k=1}^\infty k h_k h_{-k}} (1+o(1)).
\end{equation}
as $n \to \infty$,
where
\begin{equation}
\label{eq:fourier}
h_k= \frac{1}{2 \pi } \int_0^{2  \pi} h({\rm e}^{{\rm i } \theta}) {\rm e}^{-{\rm i} k \theta} {\rm d} \theta.
\end{equation}
The Strong Szeg\H{o} Limit Theorem has a long history with various applications to mathematical physics. It was first proved \cite{Szego3} by Szeg\H{o} in 1952 under stronger conditions on $h$ and further developed in, e.g.,   \cite{Baxter, GolIbr,Ibragimov,JohSze,WidomSzego} and many others. See \cite[Ch. 6]{OPUC1} for a collection of different proofs and  \cite{DIK,KrasovskyRev} for   excellent surveys on the recent progress on the topic.

Since $D_n( {\rm d} \theta/2 \pi)=1$ we see that \eqref{eq:Szego} also provides the asymptotic behavior of the ratio in \eqref{eq:ratio}. Written as a limit of the ratio,  the Strong Szeg\H{o} Limit Theorem tells us how the asymptotic behavior of the Toeplitz determinant changes when we perturb the arclength measure  by multiplying it with a sufficiently smooth density.  In this paper we  study the same question for more general measures $\mu$.  From known results in the literature, such as the Strong Szeg\H{o} Limit Theorem and extensions to symbols with, e.g., Fisher-Hartwig type of singularities,  it is reasonable to expect that there can only exist an analogue to \eqref{eq:Szego} if $h$ is sufficiently smooth. But it is a priori less clear what regularity assumptions are natural on the measure $\mu$. Somewhat surprisingly, the class of measures for which we prove an equivalent of \eqref{eq:Szego} includes measures that have a non-trivial singular component. Fisher-Hartwig symbols are also included in our results and we include a short discussion (cf. Section~\ref{sec:FH}) on how it explains some of the terms in the asymptotic expansion that is known in the literature.

Another, perhaps more concrete,
motivation  for studying relative asymptotics   \eqref{eq:ratio} comes from Random Matrix Theory or Coulomb gases on the circle.  We refer to the review paper \cite{Diac} for more details and background.  The starting point is that the Toeplitz determinant can be represented as a multiple integral,
\begin{equation} \label{eq:heine}
D_n({\rm e}^h {\rm d} \mu)
= \frac{1}{n!} \int_{\mathbb T} \cdots \int_{\mathbb T}  {\rm  e}^{\sum_{j=1}^n h(z_j)} \prod_{1 \leq j <k \leq n} |z_j-z_k|^2  {\rm d} \mu(z_1) \cdots {\rm d} \mu(z_n).
\end{equation}
By taking $h(z)= {\rm i} t f(z)$,  we see that we can thus write
$$\frac{D_n({\rm e} ^{{\rm i}t f} {\rm d} \mu)}{D_n({\rm d} \mu)}=\mathbb E \left[{\rm e}^{{\rm i} t X_n(f)}\right],$$
where $X_n(f)=\sum_{j=1}^nf(z_j)$ is the linear statistic defined by $f$ and the expectation is taken with respect to the probability measure on $\mathbb T^n$ proportional to
\begin{equation}
\label{eq:coulombgas}
\prod_{1 \leq j <k \leq n} |z_i-z_j|^2  {\rm d} \mu(z_1) \cdots {\rm d} \mu(z_n).\end{equation}
Note that if ${\rm d}\mu$ is the arclength measure then this probability measure describes the joint probability for the eigenvalues of an $n \times n$ unitary matrix taken randomly with respect to the Haar measure, i.e. a CUE matrix. In  the more general situation the eigenvalues are also influenced by the environment $\mu$.

Linear statistics are  natural and well-studied objects for random point processes \cite{JohDet}. A consequence of the Strong Szeg\H{o} Theorem \eqref{eq:Szego} is that  smooth linear statistics of the CUE (i.e. ${\rm d} \mu= {\rm d} \theta/ 2\pi$) obey a Central Limit Theorem. It is expected that such a Central Limit Theorem is not special for the CUE, but should hold under fairly mild conditions on the measure $\mu$. Indeed, similar results have been rigorously verified in many models in Random Matrix Theory and Integrable Probability by various authors. We single out \cite{BDjams} where one of us  together with Breuer  proved a universal Central Limit Theorem for biorthogonal ensembles on the real line. The methods developed in  \cite{BDjams} are an important inspiration to us for the present paper. We continue on this development and extend the approach to deal with measures on the circle and obtain universal asymptotics for \eqref{eq:ratio} under mild conditions on the measure $\mu$.

The main results in the paper are Theorems \ref{thm2}, \ref{thm1} and \ref{thm:rightlimit}. Roughly speaking, the main conclusion of Theorem \ref{thm2} is that  the second order asymptotics is universal and only depends on certain properties of the measure, namely, the right limits of the Verblunsky coefficients (whose definition we recall in the next section).   It allows us to divide the measure into classes and conclude that two measures in the same class have the same second order asymptotics. Each class has its own limiting behavior as stated in Theorem \ref{thm:rightlimit}. However, the limiting expression in general is not explicit and we  compute a more concrete form of the limit for a special important class in Theorem \ref{thm1}. This class is defined by the L\'opez condition and thus contains all measures $\mu$ for which (1) the essential support is a single arc  and (2) the absolutely continuous part has full support on that arc.

But before we state our main results in full generality, let us first illustrate them by discussing two special corollaries. First let $\mu$ be a measure on $\mathbb T$ for which the absolutely continuous part satisfies ${\rm d} \mu/{\rm d} \theta>0$ for almost every $\theta \in [0,2\pi)$. Note that $\mu$ may have an arbitrary singular part.   Then we will  prove (cf. Corollary \ref{cor:alpha0}) that
\begin{equation}
\label{eq:Szego1}
 \lim_{n\to \infty} \frac{D_n\left({\rm e}^h {\rm d}\mu\right)}{D_n\left({\rm d} \mu \right)}{\rm e}^{-n L_n(h)}= {\rm e}^{\sum_{k=1}^\infty k h_k h_{-k}},
\end{equation}
for sufficiently smooth $h$.
The term $L_n(h)$ is linear in $h$ and can be expressed in terms of the orthogonal polynomials with respect to $\mu$. Let  $\Phi_n(z)$ be the unique monic  polynomial in $z$ of degree $n$ such that
\begin{equation}
\label{szegoRec}
\int_{\mathbb T} \Phi_n(z) \bar z ^k {\rm d} \mu(z)= 0, \qquad k=0,1,\ldots,n-1.
\end{equation}
Then
\begin{equation}\label{eq:defLn}
L_n(h)= \frac1n \int h(z) K_n(z,z) {\rm d} \mu(z),
\end{equation}
where $K_n$ is the reproducing kernel defined by$$K_n(z,w)= \sum_{j=0}^{n-1} \frac{\Phi_j(z) \overline{\Phi_j(w)} } {\|\Phi_j\|_2^2 }.$$
Another corollary of our results is the following. If the essential support of $\mu$ (i.e. the support of $\mu$ with isolated points removed) is an arc $\{{\rm e} ^{{\rm i } \theta}\mid \theta \in [\phi, 2\pi-\phi]$ and ${\rm d} \mu / {\rm d} \theta >0$ on that arc, then (cf. Corollary \ref{cor:alpha})
\begin{equation}
\label{eq:Szego2}
 \lim_{n\to \infty} \frac{D_n\left({\rm e}^h {\rm d}\mu\right)}{D_n\left({\rm d} \mu \right)}{\rm e}^{-n L_n(h)}= {\rm e}^{Q(h)},
\end{equation}
for sufficiently smooth $h$. Here $Q(h)$ is a quadratic form that is entirely determined by the endpoints of the arc. The precise explicit description will be given later in~\eqref{eq:defQa}.

Both \eqref{eq:Szego1} and \eqref{eq:Szego2}  are examples of the following general problem. For a measure $\mu$   and a sufficiently smooth~$h$ consider the function~$\Psi_n$ defined by
 \begin{equation}\label{eq:strongasympterm}
\Psi_n(h,\mu)= \frac{D_n(e^h {\rm d} \mu)}{D_n({\rm d}\mu)} e^{-  \int h(z) K_n(z,z) {\rm d} \mu(z) },
 \end{equation}
 and find its asymptotic behavior as $n \to \infty$.  This is the central question of the paper. As the above examples show, the limiting behavior is universal in the sense that it only depends on certain properties of the measure. In the examples it is  the essential support, but we will pose even weaker conditions.
 \subsubsection*{Acknowlegdements}
We thank Jonathan Breuer, Kurt Johansson and Igor Krasovsky for fruitful discussions  and  Percy Deift for his comments that helped improving the presentation of the paper.  We are very  grateful to an anonymous referee for pointing out a mistake in the proof of Proposition 2.9 in an earlier manuscript and his/her suggestion to restrict to sectorial symbols.
 \section{Statement of results}
 In this Section we will state our main results.
\subsection{Verblunsky coefficients and a comparison result}

The approach we follow in this paper is based on the Verblunsky coefficients associated to $\mu$. The orthogonal polynomials $\Phi_n$ satisfy the well-known recurrence relation
\begin{equation}
\label{eq:rec}
z \Phi_n(z)= \Phi_{n+1}(z)-\bar{\alpha}_n \Phi^*_n(z),
\end{equation}
where $\alpha_n\in \bbD \equiv\{z:|z|<1\}$, and $\Phi_n^*(z)= z^n \overline{ \Phi_n(1/\bar{z})}$ is the reciprocal polynomial. We will refer to the recurrence coefficients $\alpha_n$ as the Verblunsky coefficients of the measure $\mu$. Conversely, for each sequence $\{\alpha_n\}_{n=0}^\infty$, $\alpha_n\in\bbD$, there exists a unique probability measure $\mu$ with $\{\alpha_n\}_{n=0}^\infty$ as its Verblunsky coefficients. We refer the reader to~\cite{OPUC1,OPUC2} for  proofs of these and other results from the theory of orthogonal polynomials on the unit circle.

Since the Verblunsky coefficients determine the measure $\mu$ uniquely, it is natural to turn to the question: under what conditions on $\alpha_j$ do we have an analogue of \eqref{eq:Szego}? As we will see shortly, the Verblunsky coefficients are a very useful tool in the asymptotic analysis since $\Psi_n$ depends mostly on very few coefficients and only weakly on the others. This observation was inspired by the recent papers \cite{BDjams} where a similar approach based on the Jacobi operator for a measure on $\mathbb R$  turned out to be successful in the context of Central Limit Theorems for linear statistics for the Orthogonal Polynomial Ensembles. In \cite{BDmeso} it was also applied to mesosopic scale statistics and  two-dimensional systems of non-colliding processes in \cite{duits}. In the present paper, we further develop and extend these ideas in the context of Toeplitz determinants.

 In the exponent on the right-hand side of \eqref{eq:Szego} there is a term that is linear in $h$ that grows linearly, as $n \to \infty$, and a quadratic term that is constant in $n$. For general measures $\mu$ the first term is replaced by  \eqref{eq:defLn}.  This term depends on all of the Verblunsky coefficients, which is easy to verify by taking a Laurent polynomial $h$ and iterating the recurrence \eqref{eq:rec}. However, the key observation in this paper is that  the quadratic term will be replaced by a term that depends strongly only  on the  Verblunsky coefficients around the $n$-th position  and weakly on the others. This also shows that it is \emph{universal} since it is the same for all different Verblunsky sequences, and hence different measures, for which the relevant coefficients have the same asymptotic behavior.   This is formulated  more precisely  in the following comparison principle which is  our first main result.

\begin{theorem} \label{thm2}
Let $\{\alpha_k\}_{k \in \mathbb N}$ and $\{\tilde  \alpha_k\}_{k \in \mathbb N}$ be the Verblunsky coefficients corresponding to two measures $\mu$ and $\tilde \mu$.  Assume that there exists a subsequence $\{n_j\}_{j \in \mathbb N}$ of $\mathbb N$ such that, for any $k \in \mathbb Z$,
\begin{equation}\label{eq:comparison} \lim_{j\to \infty} \left( \alpha_{n_j+k} - \tilde \alpha_{n_j+k} \right)= 0.
\end{equation}
Then, with $\Psi_n$ as defined in \eqref{eq:strongasympterm},
\begin{equation}
\lim_{j \to \infty} \left(  \Psi_{n_j}(h,\mu)-\Psi_{n_j}(h,\tilde \mu)\right) =0,
\end{equation}
for all $h \in  \mathfrak B _{\frac12}$, where
\begin{equation} \label{eq:sobolevspace}
\mathfrak B _{\frac12}= \{ h: \mathbb T \to \mathbb C \mid  \|h\|_{ \mathfrak B _{\frac12}}:=\sum_{k \in \mathbb Z} \sqrt{1+ |k|}  |h_k| <\infty\}.
\end{equation}
\end{theorem}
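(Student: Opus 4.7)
The plan is to express $\log\Psi_n(h,\mu)$ as an absolutely convergent series of traces and to compare the series term by term. Starting from
$$\log\Psi_n(h,\mu)=\log\det\bigl(I+P_n(e^h-1)P_n\bigr)-\operatorname{Tr}(P_n h P_n),$$
where $P_n$ denotes the orthogonal projection in $L^2(\mu)$ onto polynomials of degree at most $n-1$, I expand the right-hand side in $f=e^h-1$. Using $\log\det(I+P_nfP_n)=\sum_{k\ge 1}\frac{(-1)^{k-1}}{k}\operatorname{Tr}(P_nfP_n)^k$ together with $\operatorname{Tr}(P_n h P_n)=\sum_{k\ge 1}\frac{(-1)^{k-1}}{k}\operatorname{Tr}(P_n f^kP_n)$ (which follows from $h=\log(1+f)$), the $k=1$ contributions cancel and
$$\log\Psi_n(h,\mu)=\sum_{k\ge 2}\frac{(-1)^{k-1}}{k}\Bigl[\operatorname{Tr}(P_nfP_n)^k-\operatorname{Tr}(P_nf^kP_n)\Bigr].$$
Each bracket can be rearranged into a finite linear combination of traces of the form $\operatorname{Tr}(P_n f X_1 f X_2\cdots X_{k-1} f P_n)$ with every $X_i\in\{P_n,I-P_n\}$ and at least one $X_i=I-P_n$, so each surviving summand is a trace of a product involving the off-diagonal block $(I-P_n)fP_n$.

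The second step is a uniform Hilbert--Schmidt estimate: for $h\in\mathfrak B_{\frac12}$,
$$\|(I-P_n)fP_n\|_{\mathrm{HS}}\le C\,\|h\|_{\mathfrak B_{\frac12}}\,e^{\|h\|_{\mathfrak B_{\frac12}}},$$
with $C$ independent of $n$ and of $\mu$. To prove this I would switch to a CMV orthonormal basis of $L^2(\mu)$, in which multiplication by $z$ is the pentadiagonal unitary CMV matrix $\mathcal C$ with entries that are polynomials in the $\alpha_j$ and $\rho_j=\sqrt{1-|\alpha_j|^2}$. Then multiplication by $f=\sum_k f_k z^k$ is $f(\mathcal C)$ and $\mathcal C^k$ has bandwidth $O(|k|)$; a Fourier-analytic count shows that the off-diagonal Hilbert--Schmidt mass of $f(\mathcal C)$ is controlled by $\sum_k\sqrt{1+|k|}\,|f_k|$. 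Combined with the fact that $\mathfrak B_{\frac12}$ is a Banach algebra under pointwise product (so $\|e^h-1\|_{\mathfrak B_{\frac12}}$ is controlled by $\|h\|_{\mathfrak B_{\frac12}}e^{\|h\|_{\mathfrak B_{\frac12}}}$), this bounds the $k$-th term in the series by $(C\|h\|_{\mathfrak B_{\frac12}}e^{\|h\|_{\mathfrak B_{\frac12}}})^{k}/k$ and yields absolute, uniform (in $n$ and $\mu$) convergence.

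Given $\varepsilon>0$, the uniform bound lets us truncate the series after finitely many terms with an error $\le\varepsilon$. For the finitely many remaining terms I would further approximate $h$ by a trigonometric polynomial $h^{(N)}$ in $\mathfrak B_{\frac12}$, so that $f^{(N)}=e^{h^{(N)}}-1$ has only finitely many nonzero Fourier coefficients and $f^{(N)}(\mathcal C)$ is genuinely banded. Each matrix entry of $f^{(N)}(\mathcal C)$ near row $n_j$ then depends polynomially on a bounded number of Verblunsky coefficients $\alpha_{n_j+\ell}$, and every truncated trace is a finite sum of finite products of such entries. The hypothesis (\ref{eq:comparison}) forces each such product for $\mu$ and $\tilde\mu$ to differ by $o(1)$ as $j\to\infty$, and the three small errors (tail, approximation, finite-term difference) combine to give the claim, after using continuity of $\exp$ to pass from $\log\Psi_{n_j}$ to $\Psi_{n_j}$.

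The main technical obstacle I anticipate is that the polynomial projection $P_n$ onto $\mathcal P_{n-1}$ does not coincide with the natural truncation of the CMV basis, whose first $n$ vectors span a symmetric Laurent-polynomial range rather than $\mathcal P_{n-1}$. To reconcile them I would exploit that multiplication by any $z^m$ is unitary on $L^2(\mathbb T,d\mu)$ to identify $P_n$, after conjugation, with a suitable CMV truncation; alternatively, one can work in the half-line GGT basis (in which $P_n$ is a coordinate projection) and treat the difference from the CMV truncation via an explicit finite-rank correction controlled by the Christoffel--Darboux kernel. Once this alignment is in place, the matrix elements entering the final step are genuinely determined by a finite window of Verblunsky coefficients around $n_j$, and the comparison argument closes.
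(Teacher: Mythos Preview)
Your strategy shares the paper's core ideas: rewrite $\log\Psi_n$ as a series of traces built from off-diagonal blocks of $f(\mathcal C)$, control those blocks in Hilbert--Schmidt norm via the $\mathfrak B_{\frac12}$-norm, truncate, and then use that each surviving trace depends on only a finite window of Verblunsky coefficients. The alignment issue between the polynomial projection and the CMV truncation that you flag is handled in the paper exactly as you suggest, by a $z^{\lfloor (n-1)/2\rfloor}$ conjugation. However, there is a genuine gap in the convergence step.

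The expansion $\log\det(I+P_nfP_n)=\sum_{k\ge1}\tfrac{(-1)^{k-1}}{k}\Tr(P_nfP_n)^k$ is only valid when $\|P_nfP_n\|_\infty<1$, and your combined series converges absolutely only for small $\|e^h-1\|_\infty$. Concretely, the bracket $\Tr(P_nfP_n)^k-\Tr P_nf^kP_n$ splits into $2^{k-1}-1$ terms, each of trace norm at most $\|f\|_\infty^{k-2}\,\|P_nfQ_n\|_2\,\|Q_nfP_n\|_2$, so the tail behaves like $\sum_k k^{-1}(2\|f\|_\infty)^k$, which diverges unless $\|f\|_\infty<\tfrac12$. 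Your stated bound $(C\|h\|_{\mathfrak B_{\frac12}}e^{\|h\|_{\mathfrak B_{\frac12}}})^k/k$ has the same defect: it gives a convergent series only when that quantity is $<1$, hence not for arbitrary $h\in\mathfrak B_{\frac12}$. The paper circumvents this by inserting a complex parameter $t$ in $e^{th}$: the series (in $t$) converges for $|t|$ near $0$, while a separate a~priori argument (positivity of $\Re e^{th}$ for real $h$ and $t$ in a thin complex neighborhood of $[0,1]$) shows that $\Phi_n(t,h,\mathcal C)=\log\Psi_n(th,\mathcal C)$ is analytic and \emph{uniformly} bounded there. A normal-family argument then propagates the comparison from small $t$ to $t=1$; the complex-valued case is handled by a second normal-family step in an auxiliary variable. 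Without such an analytic-continuation device your argument proves the theorem only for $h$ of small norm.

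A secondary point: $f^{(N)}=e^{h^{(N)}}-1$ is \emph{not} a Laurent polynomial when $h^{(N)}$ is, so $f^{(N)}(\mathcal C)$ is not banded and the locality claim does not follow as written. The paper expands in $t$, so its coefficients $E_m^{(n)}$ involve powers of $h(\mathcal C)$ directly, which \emph{are} banded once $h$ is a Laurent polynomial; this is what makes each $E_m^{(n)}$ depend on only finitely many $\alpha_{n+\ell}$. You could repair your version by approximating $f$ itself by a Laurent polynomial and proving continuity of each fixed bracket in $\|\cdot\|_{\mathfrak B_{\frac12}}$, but the convergence gap above would remain.
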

The proof of this Theorem is given in Section \ref{sec:proofthm2}.

We will discuss this class of functions $\mathfrak B_{\frac12}$ more thoroughly in Section~3. This class of functions was also used by Baxter \cite[Th. 3.2]{Baxter} in his proof of the Strong Szeg\H{o} Limit Theorem. A useful property of $\mathfrak B_{\frac12}$ is that it is a Banach algebra.

Note that Theorem \ref{thm2} shows the universality of $\Psi_n$: without specifying the limit (or even establishing the existence of a limit) we show that the asymptotic behavior is invariant under small perturbations of the Verblunsky coefficients. Moreover, Theorem \ref{thm2} allows us to consider general classes of comparable  measures (in the sense of \eqref{eq:comparison}) that have a special member  for which we can compute the asymptotic behavior explicitly. In this paper we compute some examples, which we will discuss next.

\subsection{Special cases: measures supported on arcs}

If $\tilde \alpha_n\equiv 0$ then ${\rm d}  \tilde \mu= {\rm d} \theta/2\pi$ and we readily obtain the following generalization of the Strong Szeg\H{o} Limit Theorem~\eqref{eq:Szego} by combining  it with Theorem~\ref{thm2}.

\begin{proposition}\label{prop:alpha0}
Let $\mu$ be such that along a subsequence $\{n_j\}_{j \in \mathbb N}$ we have
$$\lim_{j\to \infty} \alpha_{n_j+k} =0, \qquad \text{for all } k \in \mathbb Z.$$
Then
\begin{equation}\label{eq:generalSSLT1cor0}
\lim_{j\to \infty}\Psi_{n_j}(h,\mu)= {\rm e}^{\sum_{k=1}^\infty k h_k h_{-k}},
\end{equation}
 for $h \in \mathfrak B_{\frac12}$.
\end{proposition}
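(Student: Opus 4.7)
The plan is to deduce Proposition~\ref{prop:alpha0} by applying the comparison principle in Theorem~\ref{thm2} against the natural reference measure $\tilde\mu = {\rm d}\theta/(2\pi)$, and then invoking the classical Strong Szeg\H{o} Limit Theorem for $\tilde\mu$. Since the Verblunsky coefficients of ${\rm d}\theta/(2\pi)$ are $\tilde\alpha_n\equiv 0$, the hypothesis $\alpha_{n_j+k}\to 0$ for every $k\in\mathbb Z$ is exactly the condition $\lim_{j\to\infty}(\alpha_{n_j+k}-\tilde\alpha_{n_j+k})=0$ required by~\eqref{eq:comparison}. Consequently Theorem~\ref{thm2} gives, for every $h\in\mathfrak B_{1/2}$,
\begin{equation*}
\lim_{j\to\infty}\bigl(\Psi_{n_j}(h,\mu)-\Psi_{n_j}(h,\tilde\mu)\bigr)=0.
\end{equation*}

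Next I would compute $\Psi_{n}(h,\tilde\mu)$ explicitly. The monic orthogonal polynomials for ${\rm d}\theta/(2\pi)$ are $\Phi_n(z)=z^n$, with $\|\Phi_n\|_2=1$, so the reproducing kernel on the diagonal is $K_n(z,z)=\sum_{j=0}^{n-1}|z|^{2j}=n$ for $z\in\mathbb T$. Hence
\begin{equation*}
\int h(z)K_n(z,z)\,{\rm d}\tilde\mu(z)=n\cdot\frac{1}{2\pi}\int_0^{2\pi}h({\rm e}^{{\rm i}\theta})\,{\rm d}\theta=n\,h_0,
\end{equation*}
and since $D_n({\rm d}\tilde\mu)=1$, the definition~\eqref{eq:strongasympterm} reduces to
\begin{equation*}
\Psi_n(h,\tilde\mu)=D_n\!\left({\rm e}^h\,\tfrac{{\rm d}\theta}{2\pi}\right)\,{\rm e}^{-nh_0}.
\end{equation*}
The Strong Szeg\H{o} Limit Theorem~\eqref{eq:Szego}, which is known to hold on the Banach algebra $\mathfrak B_{1/2}$ by Baxter's theorem, then yields
\begin{equation*}
\lim_{n\to\infty}\Psi_n(h,\tilde\mu)={\rm e}^{\sum_{k=1}^\infty k h_k h_{-k}}.
\end{equation*}
Combining this limit along the subsequence $\{n_j\}$ with the comparison statement above gives the claim~\eqref{eq:generalSSLT1cor0}.

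There is essentially no hard step here: all the work is packaged into Theorem~\ref{thm2} and the classical Szeg\H{o} theorem. The only point meriting attention is the regularity class: one must make sure the version of the Strong Szeg\H{o} Limit Theorem being invoked is valid for $h\in\mathfrak B_{1/2}$ rather than a smaller space. This is precisely Baxter's formulation (and was the motivation for choosing $\mathfrak B_{1/2}$ in Theorem~\ref{thm2}), so the two ingredients mesh cleanly.
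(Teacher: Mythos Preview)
Your proposal is correct and follows exactly the approach the paper takes: combine Theorem~\ref{thm2} with $\tilde\mu={\rm d}\theta/(2\pi)$ (whose Verblunsky coefficients vanish identically) and then invoke the Strong Szeg\H{o} Limit Theorem in Baxter's $\mathfrak B_{1/2}$ formulation. The explicit computation of $K_n(z,z)$ and the check that $\Psi_n(h,\tilde\mu)=D_n({\rm e}^h\,{\rm d}\theta/2\pi)\,{\rm e}^{-nh_0}$ are exactly the unwinding the paper leaves implicit.
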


The question arises when the condition in this proposition is satisfied. If $\alpha_n \to 0$ as $n\to\infty$ then by the Weyl theorem on compact perturbations (see, e.g.,~\cite[Sect. 1.4.15]{OPUC1}), $\sigma_{ess}(\mu) = \mathbb T$ (by $\sigma_{ess}$ we denote the essential support of $\mu$ which equals to the support of $\mu$ with the isolated points removed). In the converse direction we have Rakhmanov's theorem~\cite{Rakh} which states that if $\tfrac{{\rm d}\mu}{{\rm d} \theta} > 0$ for almost every $\theta\in [0,2\pi)$ then $\lim_{n\to\infty} \alpha_n =0$. Note that $\mu$ here may have an arbitrary singular part.    This gives the following corollary.
\begin{corollary} \label{cor:alpha0}
Let $\mu$ be such that  $\tfrac{{\rm d}\mu}{{\rm d}\theta} > 0$ for almost every $\theta\in[0,2\pi).$ Then  \begin{equation}\label{eq:generalSSLT1cor00}
\lim_{n\to \infty}\Psi_{n}(h,\mu)= {\rm e}^{\sum_{k=1}^\infty k h_k h_{-k}},
\end{equation}
 for $h \in \mathfrak B_{\frac12}$.
\end{corollary}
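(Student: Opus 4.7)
The plan is essentially a one-step deduction combining Proposition~\ref{prop:alpha0} with Rakhmanov's theorem, both already invoked in the surrounding text. So this is not a proof where I expect to meet real obstacles; rather, it is an immediate specialization.

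First, I would appeal to Rakhmanov's theorem (as stated in the paragraph preceding the corollary): the hypothesis $\tfrac{\mathrm{d}\mu}{\mathrm{d}\theta}>0$ for a.e.\ $\theta\in[0,2\pi)$ implies that the Verblunsky coefficients $\{\alpha_n\}_{n=0}^\infty$ of $\mu$ satisfy $\lim_{n\to\infty}\alpha_n=0$. In particular, for every fixed $k\in\mathbb{Z}$,
\begin{equation*}
\lim_{n\to\infty}\alpha_{n+k}=0.
\end{equation*}

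Next, I would apply Proposition~\ref{prop:alpha0} with the trivial subsequence $n_j=j$. The convergence $\alpha_n\to 0$ along the full sequence trivially yields $\lim_{j\to\infty}\alpha_{n_j+k}=0$ for every $k\in\mathbb{Z}$, which is exactly the hypothesis of Proposition~\ref{prop:alpha0}. The conclusion of that proposition then gives
\begin{equation*}
\lim_{n\to\infty}\Psi_n(h,\mu)=\mathrm{e}^{\sum_{k=1}^\infty k\, h_k h_{-k}}
\end{equation*}
for every $h\in\mathfrak{B}_{\frac12}$, which is precisely \eqref{eq:generalSSLT1cor00}.

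The only subtle point worth flagging is the passage from a subsequential statement to a full-sequence statement. This is painless here: since $\alpha_n\to 0$ along the whole sequence, Proposition~\ref{prop:alpha0} may be applied to every subsequence of $\mathbb{N}$ and always yields the same limit on the right-hand side; hence the full sequence $\Psi_n(h,\mu)$ converges to that common value. Equivalently, one simply chooses $n_j=j$ in Proposition~\ref{prop:alpha0}. No further work is required, and there is no genuine obstacle.
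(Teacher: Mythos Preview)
Your proposal is correct and matches the paper's approach exactly: the paper presents this corollary as an immediate consequence of Rakhmanov's theorem (giving $\alpha_n\to 0$) combined with Proposition~\ref{prop:alpha0}, and offers no further argument.
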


Proposition \ref{prop:alpha0} and its Corollary \ref{cor:alpha0} rely on the known asymptotics coming from the Strong Szeg\H{o} Limit Theorem \eqref{eq:Szego} for $\alpha_n \equiv 0$.  In this paper we will also prove its analogue also for the special  case $\alpha_n \equiv \alpha$.  By applying Theorem \ref{thm2} we can thus also deal with the situation $\alpha_{n_j} \to \alpha$.  This is the next result we discuss.

In case $\alpha_n \equiv \alpha$  the measure $\mu_\alpha$ is supported on an arc (apart from a possible point mass)
\begin{equation}\label{eq:arc}
\Gamma_\phi = \left\{ {\rm e}^{{\rm i} \omega }\mid \omega \in [\phi,2 \pi-\phi)\right\},
\end{equation}
where
\begin{equation}\label{eq:phi}
\phi= 2 \arcsin |\alpha|.
\end{equation}
We do not need the explicit form of the measure, but for completeness we present it in the Appendix. Now consider the map $\theta \mapsto \omega$ defined by
\begin{equation}\label{eq:stretching} \omega=2 \arccos  \left( \rho \cos (\theta/2)\right),
\end{equation}
with $\rho= \sqrt{1-|\alpha|^2}$, for $\theta \in [0,2 \pi)$. Then $\theta \mapsto \omega$ establishes a 1-to-1 correspondence between the unit circle and the arc $\Gamma_\phi$.

Then, for  $h \in \mathfrak B_{\frac12}$ we write $$h({\rm e}^{{\rm i} \theta})= a_0+2\sum_{j=1}^\infty a_j \cos j \theta + 2\sum_{j=1}^\infty b_j \sin j \theta, $$
for some $a_j$ and $b_j$, and define
\begin{align}
\label{eq:Ah} \mathcal{A}^h({\rm e}^{{\rm i} \theta})&= a_0+2\sum_{j=1}^\infty a_j  \cos  j \omega,\\
\label{eq:Bh} \mathcal{B}^h({\rm e}^{{\rm i} \theta} )&=2 (\sin\tfrac\theta2 + |\alpha| \cos\tfrac\theta2)  \sum_{j=1}^\infty b_j  \frac{\sin  j\omega}{\sin \frac{\omega}{2}},
\end{align}
where $\omega=\omega(\theta)$ is defined by \eqref{eq:stretching} , $\theta \in [0,2\pi)$. Note that $\mathcal A^h$ and $\mathcal B^h$ only depend on the values of $h$ on the arc $\Gamma_\phi$. Moreover, $\mathcal A^h$ and $\mathcal B^h$ are determined by the even and odd parts of $h$, respectively.

Finally, we define
\begin{equation}\label{eq:defQa}
Q_\alpha (h) =    \sum_{j=1}^\infty j \mathcal{A}^h_j \mathcal{A}^h_{-j}+ \sum_{j=1}^\infty j \mathcal{B}^h_j \mathcal{B}^h_{-j},
\end{equation}
where $\mathcal{A}^h_j$ and $\mathcal{B}^h_j$ are the $j$-th Fourier coefficients \eqref{eq:fourier} of $\mathcal{A}^h$ and $\mathcal{B}^h$. We will prove that $Q_\alpha(h)$ is indeed well-defined for $h \in \mathfrak B_{\frac12}$, see Lemma \ref{lem:final}.

The following is a generalization of the Strong Szeg\H{o} Limit Theorem \eqref{eq:Szego}.

\begin{theorem}\label{thm1}
Let $\mu$ be a measure on $\mathbb T$ such that its Verblunsky coefficients satisfy
\begin{equation}\label{eq:subslimit}
\lim_{j\to \infty}\alpha_{n_j+k}=\alpha, \qquad \text{for all }  k \in \mathbb Z,
\end{equation}
for some $|\alpha|<1$ and subsequence $\{n_j\}_{j \in \mathbb N}$ of $\mathbb N$. Then
\begin{equation}\label{eq:generalSSLT1}
\lim_{j\to \infty} \Psi_{n_j}(h,\mu)= {\rm e}^{Q_\alpha(h)},
\end{equation}
for  $h \in  \mathfrak B _{\frac12}$.
\end{theorem}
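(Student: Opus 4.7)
The plan is to split the argument into two phases. First, Theorem~\ref{thm2} does most of the work: if $\tilde\mu:=\mu_\alpha$ denotes the reference measure whose Verblunsky coefficients satisfy $\tilde\alpha_n\equiv\alpha$, then hypothesis \eqref{eq:subslimit} says precisely that $\alpha_{n_j+k}-\tilde\alpha_{n_j+k}\to 0$ for every $k\in\mathbb Z$, so the comparison principle yields
\begin{equation*}
\lim_{j\to\infty}\bigl(\Psi_{n_j}(h,\mu)-\Psi_{n_j}(h,\mu_\alpha)\bigr)=0
\end{equation*}
for every $h\in\mathfrak B_{\frac12}$. Thus it suffices to prove the stronger, unconditional statement $\lim_{n\to\infty}\Psi_n(h,\mu_\alpha)=e^{Q_\alpha(h)}$ for the single reference measure $\mu_\alpha$, along the \emph{full} sequence. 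Note that for $\alpha=0$ this reduces to Proposition~\ref{prop:alpha0}, so the real content is the case $\alpha\neq 0$.

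For the second phase I would exploit the explicit structure of $\mu_\alpha$. Constancy of the Verblunsky coefficients makes the associated CMV matrix essentially periodic of period one; the orthonormal polynomials for $\mu_\alpha$ admit closed-form expressions in terms of Chebyshev-like polynomials in the variable $\omega$, and the two-to-one map $\theta\mapsto\omega(\theta)=2\arccos(\rho\cos(\theta/2))$ in \eqref{eq:stretching} identifies normalized Lebesgue measure on $\mathbb T$ with the absolutely continuous part of $\mu_\alpha$ on the arc $\Gamma_\phi$. The plan is to express $\log\Psi_n(h,\mu_\alpha)$ as a trace-type object involving $[P_n,H]$, where $P_n$ is the orthogonal projection onto the span of the first $n$ orthonormal polynomials in $L^2(\mu_\alpha)$ and $H$ is multiplication by $h$, in the spirit of the operator-theoretic cumulant expansion developed in~\cite{BDjams}. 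Diagonalizing the CMV matrix of $\mu_\alpha$ via the Fourier transform in~$\theta$ then reduces this trace to the analogous object for the arclength measure on~$\mathbb T$, to which the classical Strong Szeg\H{o} Limit Theorem~\eqref{eq:Szego} applies directly.

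The main obstacle is to identify the resulting quadratic form with $Q_\alpha(h)$ from \eqref{eq:defQa}. Under the Fourier diagonalization, a symbol~$h$ on the arc decomposes into an even piece (symmetric under $\omega\mapsto-\omega$) and an odd piece; the two pieces act on orthogonal invariant subspaces of the CMV operator and contribute independently to the quadratic form. On the even sector the change of variables sends $h$ to a function whose Fourier series in~$\theta$ has coefficients $\mathcal A^h_j$, producing the first sum in \eqref{eq:defQa}. On the odd sector, however, the matrix entries of the restricted block carry an extra weight coming from the nontrivial action of the Szeg\H{o} recursion~\eqref{eq:rec} with $\alpha_n\equiv\alpha$; this weight is precisely the factor $\sin\tfrac\theta2+|\alpha|\cos\tfrac\theta2$ appearing in~\eqref{eq:Bh}, and its combination with the sine series divided by $\sin(\omega/2)$ is what makes the odd block into a genuine Toeplitz-type object on $\mathbb T$. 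Verifying this identification by a careful matrix computation is the hardest step. Once both sectors are treated, the classical Szeg\H{o} theorem applied to each gives $e^{Q_\alpha(h)}$; the well-definedness of $Q_\alpha(h)$ on $\mathfrak B_{\frac12}$ is a separate analytic check handled by Lemma~\ref{lem:final}.
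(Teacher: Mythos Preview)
Your first phase is exactly what the paper does: invoke Theorem~\ref{thm2} with $\tilde\alpha_n\equiv\alpha$ to reduce to the reference measure $\mu_\alpha$, then treat that case along the full sequence. So far so good.

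Your second phase, however, diverges from the paper's argument and, as written, contains a real gap. The paper does \emph{not} try to diagonalize $\mathcal C$ and reduce to the classical Szeg\H{o} theorem. Instead it proves a general principle (Proposition~\ref{prop:commutator}): if $h(\mathcal C)=L+U$ with $L$ lower triangular, $U$ upper triangular, and $[L,U]$ trace class, then $\Psi_n(h,\mathcal C)\to e^{\frac12\Tr[U,L]}$. This is based on Ehrhardt's identity $\det(e^{-A}e^{A+B}e^{-B})=e^{-\frac12\Tr[A,B]}$ and the elementary fact that triangularity makes $P_n e^{L}P_n$ and $P_n e^{U}P_n$ factor cleanly. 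The bulk of the work is then \emph{constructing} such an $L,U$: the paper ``unwraps'' $\mathcal C$ via an isometry $R:\ell^2(\mathbb N)\to\ell^2(\mathbb Z)$ so that $\mathcal D=R\mathcal C R^*$ has a $2\times 2$ block structure with Toeplitz blocks (the $\QT$ operators of Section~5), computes the symbols $s_k,v_k$ of $\mathcal D^k$ via a transfer-matrix recursion, and then chooses the diagonal of $L$ so that $[\mathcal D,\hat{\mathcal E}_k^+]$ is finite rank. The trace $\Tr[U,L]$ is then computed directly from Hankel operators and matched to $Q_\alpha(h)$ in Lemma~\ref{lem:final}.

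The problem with your route is the step ``diagonalizing the CMV matrix of $\mu_\alpha$ via the Fourier transform in $\theta$ reduces this trace to the analogous object for the arclength measure.'' The CMV matrix is one-sided; whatever unitary $V$ realizes your diagonalization will not send $P_n$ to $P_n$ (or to any projection that interacts with the shift the way $P_n$ does in the free case). Since $\Psi_n$ is built precisely from the interaction of $P_n$ with $h(\mathcal C)$, you cannot simply conjugate the problem away. Your even/odd sector decomposition and the appearance of the weight $\sin\tfrac\theta2+|\alpha|\cos\tfrac\theta2$ are correct heuristics---indeed they emerge in the paper's computation of $s_k,v_k$ on the unit circle (Lemma~\ref{lemSkvk}(iv))---but getting from there to a pair of honest Toeplitz determinants to which \eqref{eq:Szego} applies is exactly the step you defer, and it is not clear it can be done without essentially reinventing the triangular decomposition the paper uses.
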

The proof of this Theorem will be given in Section \ref{sec:proofthm1}.

In the special case $\alpha_n \to \alpha$  then it is true that  $\sigma_{ess}(\mu) =\Gamma_\phi$, where $\phi$ is as above. In fact, this holds under weaker assumptions, namely
\begin{align}
\label{L1}
\lim_{n\to\infty} |\alpha_n| &= |\alpha|, \\
\label{L2}
\lim_{n\to\infty} \frac{\alpha_{n+1}}{\alpha_n} &= 1,
\end{align}
which are called  the L\'{o}pez conditions  (see~\cite[Thm. 4.3.8]{OPUC1}). 
\begin{proposition} \label{prop:alpha}
Let $\mu$ be such that \eqref{L1}--\eqref{L2} are satisfied for some $|\alpha|<1$. Then
\begin{equation}\label{eq:generalSSLT1cor1}
\lim_{n\to \infty} \Psi_{n}(h,\mu) = {\rm e}^{Q_\alpha(h)},
\end{equation}
 for $h \in \mathfrak B_{\frac12}$.
\end{proposition}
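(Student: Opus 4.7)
My plan is to reduce Proposition \ref{prop:alpha} to Theorem \ref{thm1} by passing to subsequences. The convergence $\Psi_n(h,\mu) \to \mathrm{e}^{Q_\alpha(h)}$ will be established by showing that every subsequence of $\{n\}$ admits a further subsequence along which Theorem \ref{thm1} applies and yields precisely the same limit. Since the target value $\mathrm{e}^{Q_\alpha(h)}$ will turn out to be the same for each extracted subsequence, the full sequential limit then follows from a standard double-extraction argument.

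First, given an arbitrary subsequence $\{n_j\}\subset\mathbb N$, I would use compactness of the closed disk $\overline{\bbD}$: by \eqref{L1} the moduli $|\alpha_{n_j}|$ converge to $|\alpha|$, so passing to a further subsequence (still denoted $\{n_j\}$) I may assume $\alpha_{n_j}\to \alpha^*$ for some $\alpha^*\in\bbD$ with $|\alpha^*|=|\alpha|$.  The main step is then to propagate this single-index convergence to all integer shifts, i.e.\ to establish
\begin{equation*}
\lim_{j\to\infty}\alpha_{n_j+k}=\alpha^*\qquad\text{for every }k\in\bbZ.
\end{equation*}
When $\alpha\neq 0$, the hypothesis $|\alpha_n|\to|\alpha|>0$ makes $\alpha_n\neq 0$ for all large $n$, so the ratios in \eqref{L2} are well-defined, and writing
\begin{equation*}
\alpha_{n_j+k}=\alpha_{n_j}\prod_{m=1}^{k}\frac{\alpha_{n_j+m}}{\alpha_{n_j+m-1}}\quad(k\geq 0),\qquad \alpha_{n_j+k}=\alpha_{n_j}\prod_{m=0}^{-k-1}\left(\frac{\alpha_{n_j-m}}{\alpha_{n_j-m-1}}\right)^{-1}\quad(k<0),
\end{equation*}
each finite product of ratios tends to $1$ by \eqref{L2}, giving the claim. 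The degenerate case $\alpha=0$ reduces to $\alpha_n\to 0$ by \eqref{L1} alone, and is already covered by Corollary \ref{cor:alpha0}; so I can assume $\alpha\neq 0$ for the remainder.

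Once the shift-invariant convergence $\alpha_{n_j+k}\to\alpha^*$ is in place, Theorem \ref{thm1} applies directly along the subsequence $\{n_j\}$ and yields
\begin{equation*}
\lim_{j\to\infty}\Psi_{n_j}(h,\mu)=\mathrm{e}^{Q_{\alpha^*}(h)},\qquad h\in\mathfrak B_{\frac12}.
\end{equation*}
To finish, I would observe that $Q_{\alpha^*}(h)=Q_\alpha(h)$: inspection of \eqref{eq:stretching}, \eqref{eq:Ah}, \eqref{eq:Bh} and \eqref{eq:defQa} shows that the map $\omega(\theta)$, the functions $\mathcal A^h$ and $\mathcal B^h$, and hence $Q_\alpha$, depend on $\alpha$ only through $|\alpha|$, and by construction $|\alpha^*|=|\alpha|$. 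Since the limit $\mathrm{e}^{Q_\alpha(h)}$ is the same for every subsequential extraction, a standard subsequence-of-a-subsequence argument gives $\lim_{n\to\infty}\Psi_n(h,\mu)=\mathrm{e}^{Q_\alpha(h)}$, as required. The only nontrivial point is the propagation step for shifts $k\in\bbZ$ via \eqref{L2}; everything else is a direct appeal to Theorem \ref{thm1} and the observation that $Q_\alpha$ is insensitive to the phase of $\alpha$.
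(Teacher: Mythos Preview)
Your proof is correct and follows essentially the same approach as the paper: extract a subsequence along which the Verblunsky coefficients converge to a constant $\alpha^*$ with $|\alpha^*|=|\alpha|$ (using \eqref{L1}--\eqref{L2}), apply Theorem \ref{thm1}, and use that $Q_\alpha$ depends only on $|\alpha|$. Your version is in fact slightly more streamlined than the paper's, since you work directly with subsequences of the Verblunsky coefficients and invoke Theorem \ref{thm1} for convergence, whereas the paper first invokes the normal-family bound \eqref{eq:bounddervativephiincommutators} to extract a convergent subsequence of $\Psi_n$ before identifying its limit---a step your argument shows to be unnecessary.
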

The proof of this proposition, which is basically a corollary to Theorem \ref{thm1}, will be given in Section \ref{sec:proofthm1}.

The analogue of the Rakhmanov theorem (due to Bello--L\'{o}pez~\cite{BelloLopez}, and improved by Simon~\cite[Thms. 9.9.1 and 13.4.4]{OPUC2}) says that if a probability measure $\mu$ satisfies $\sigma_{ess}(\mu) =\Gamma_\phi$ and $\tfrac{{\rm d}\mu}{{\rm d}\theta} > 0$ for almost every $\theta\in[\phi,2\pi-\phi]$, then the Verblunsky coefficients satisfy the L\'{o}pez conditions~\eqref{L1}--\eqref{L2} with $|\alpha|=\sin(\phi/2)$. Combined with Proposition \ref{prop:alpha}, this immediately gives the following corollary.
\begin{corollary}\label{cor:alpha}
Let $\mu$ be such that $\sigma_{ess}(\mu) =\Gamma_\phi$ and $\tfrac{{\rm d}\mu}{{\rm d}\theta} > 0$ for almost every $\theta\in[\phi,2\pi-\phi].$ Then \eqref{eq:generalSSLT1cor1} holds  for $h \in \mathfrak B_{\frac12}$.
\end{corollary}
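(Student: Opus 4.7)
The plan is to derive Corollary~\ref{cor:alpha} as an essentially immediate consequence of Proposition~\ref{prop:alpha}, combined with the converse of the López theorem due to Bello--López and Simon. Concretely, I want to verify the hypothesis of Proposition~\ref{prop:alpha} for the given measure $\mu$, and then invoke that proposition to conclude the stated relative Szeg\H{o} asymptotics.

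First, I would unpack the two assumptions on $\mu$. The hypothesis $\sigma_{ess}(\mu) = \Gamma_\phi$ with $\Gamma_\phi$ as in~\eqref{eq:arc} specifies the essential support of $\mu$, and the assumption $\frac{{\rm d}\mu}{{\rm d}\theta} > 0$ almost everywhere on the arc $[\phi, 2\pi-\phi]$ is a pointwise positivity condition on the absolutely continuous part. These are exactly the conditions under which the Bello--López/Simon converse applies (see \cite[Thms.~9.9.1 and 13.4.4]{OPUC2}), yielding the conclusion that the Verblunsky coefficients $\{\alpha_n\}_{n=0}^\infty$ of $\mu$ satisfy the López conditions
\begin{equation*}
\lim_{n\to\infty} |\alpha_n| = |\alpha|, \qquad \lim_{n\to\infty} \frac{\alpha_{n+1}}{\alpha_n} = 1,
\end{equation*}
with $|\alpha| = \sin(\phi/2)$, which by~\eqref{eq:phi} is consistent with $\mu$ being ``essentially'' a measure supported on $\Gamma_\phi$. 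Note that $\mu$ may still carry an arbitrary singular component (in particular, isolated pure points outside $\Gamma_\phi$), since the essential support assumption already discards isolated points, and the Bello--López--Simon statement tolerates such singular parts.

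Having established \eqref{L1}--\eqref{L2}, the conclusion is immediate: Proposition~\ref{prop:alpha} then asserts that
\begin{equation*}
\lim_{n\to\infty} \Psi_n(h,\mu) = {\rm e}^{Q_\alpha(h)}
\end{equation*}
for every $h \in \mathfrak B_{\frac12}$, which is precisely~\eqref{eq:generalSSLT1cor1}. Since Proposition~\ref{prop:alpha} is itself derived (in Section~\ref{sec:proofthm1}) by applying the comparison principle Theorem~\ref{thm2} to compare $\mu$ with the model measure $\mu_\alpha$ (for which $\alpha_n \equiv \alpha$) along the full sequence $n_j = j$, no additional argument is needed here.

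There is essentially no technical obstacle: the entire content of the corollary is the logical chain ``essential support is an arc plus a.e.\ positivity on that arc $\Rightarrow$ López conditions $\Rightarrow$ relative Szeg\H{o} asymptotics''. The only subtle point worth flagging in the write-up is that the Bello--López--Simon theorem one invokes is the strengthened version (Simon's improvement) which allows for a nontrivial singular part of $\mu$ and only requires positivity of the a.c.\ density almost everywhere on the arc, rather than everywhere; this is what makes the corollary genuinely more general than a classical Szeg\H{o}-class assumption.
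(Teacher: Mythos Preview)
Your proposal is correct and follows exactly the same route as the paper: invoke the Bello--L\'opez/Simon analogue of Rakhmanov's theorem to deduce the L\'opez conditions~\eqref{L1}--\eqref{L2} from the hypotheses on $\mu$, and then apply Proposition~\ref{prop:alpha}. The paper presents this as a one-sentence observation preceding the corollary, and your write-up simply spells out the same logic in more detail.
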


\begin{remark} (Non-relative) Asymptotics of Toeplitz determinants with measures on a single arc have been studied by Widom \cite{Widom} and Krasovsky \cite{Krasovsky} under smoothness conditions on the measure $\mu$ and for symmetric $h$.  If $h$ is symmetric then $b_j\equiv 0$ and hence $\mathcal B^h=0$. The result then is $$Q_\alpha(h)= \sum_{j=1}^\infty j \mathcal A^h_j \mathcal A^h_j,$$
which is the same  expression as in the Strong Szeg\H{o} Theorem up to the map $\omega$. This result is in agreement with the results by Widom \cite{Widom} and Krasovsky \cite{Krasovsky}.
\end{remark}

\begin{remark} Although all $|\alpha_n| <1$ it is possible to have coefficients such that $ \lim_{n \to \infty} \alpha_n=\alpha\in\bbT$. This case is also contained in our results. The arc collapses to a single point and all the definitions become trivial. In particular, $Q_1(h)=0$.
\end{remark}

\begin{remark}
The above examples are ``living'' on an arc $\Gamma_\phi$ (see~\eqref{eq:arc}) for some $0\le \phi<\pi$. There is nothing special about such symmetric arcs though. One can simply rotate a measure $\mu$ with $\sigma_{ess}(\mu) = \Gamma_\phi$ by a unimodular number $\lambda\in\bbT$ to obtain the analogous results for measures supported on non-symmetric arcs. The corresponding Verblunsky coefficients will have form $\alpha_n = \alpha \lambda^n$ for some $\lambda\in\bbT$, and the L\'{o}pez condition~\eqref{L2} should be modified to $\frac{\alpha_{n+1}}{\alpha_n} \to \lambda$. Throughout this paper we will restrict ourselves to the case of symmetric arcs ($\lambda=1$) for convenience purposes.
\end{remark}

\begin{remark}
In Corollary \ref{cor:alpha} we deal with measures with essential support on an arc. However, in  the more general Theorem \ref{thm2} the measure does not necessarily have such  a support. Indeed, a sequence can have several convergent subsequences. It is possible to construct a sequence of Verblunsky coefficients for which the support is the full circle but we still have \eqref{eq:subslimit} along a subsequence with  $\alpha\neq 0$.
\end{remark}
\begin{remark}
We note that~\eqref{eq:subslimit} is a rather weak condition. For instance, it does not  guarantee the existence of the limit of $n^{-1}\int f(z) K_n(z,z) {\rm d} \mu(z)$ (note that the stronger $\alpha_n \to \alpha$ would). In this sense one can say that second term in the asymptotic behavior of \eqref{eq:ratio} is more robust, or more universal, than  the first term.
\end{remark}\begin{remark}
We recall that a particular motivation for studying ratios of Toeplitz determinants comes from linear statistics for point process defined by \eqref{eq:coulombgas}. From this perspective,  Theorem \ref{thm1} can be regarded as a Central Limit Theorem where the limiting variance is given by $Q_\alpha$. That is,
\begin{multline}
\mathbb E\left[{\rm e} ^{ {\rm i}t \left(X_n(f)-\mathbb E X_n(f)\right)}\right] =\mathbb E[{\rm e} ^{{\rm i} tX_n(f)}]{\rm e}^{-{\rm i} t \mathbb E X_n(f)}
\\
= \frac{D_n({\rm e}^{{\rm i} t f}{\rm d} \mu)}{D_n({\rm d} \mu)} {\rm e}^{-{\rm i} t \int f(z) K_n(z,z) {\rm d} \mu(z)}
=\Psi_n({\rm i} t f,\mu) \to {\rm e}^{-t^2 Q_\alpha(f)},
\end{multline}
as $n \to \infty$.  For a similar discussion in the real line setting we refer to \cite{BDjams}.
\end{remark}
\subsection{Right limits and varying measures}
The previous paragraph dealt with some special limits. For general $\mu$ we  now consider  a subsequence $\{n_j\}$  of $\mathbb N$ such that,
 \begin{equation} \label{eq:rightlimit}
 \lim_{j\to \infty} \alpha_{n_j+k}= \beta_k, \quad k \in \mathbb Z,
 \end{equation}
for some sequence $\{\beta_k\}_{k \in \bbZ}\subset\{ z \mid |z| \leq 1\}$. Indeed, the existence of  such subsequences is guaranteed by a standard compactness argument (we recall that $|\alpha_k|<1$). The sequence $\{\beta_k\}_{k\in \bbZ}$ is called  a \emph{right limit} of the original sequence  $\{\alpha_k\}_{k \in \mathbb N}$.

The next result is that \eqref{eq:rightlimit} implies that we have an analogue of \eqref{eq:Szego} along the subsequence $\{n_j\}$ where the limit is determined by $\{\beta_k\}_{k\in \mathbb N}$.

\begin{theorem}\label{thm:rightlimit}
Let $\mu$ be a measure on $\mathbb T$ and let  $\{\beta_k\}_{k \in \bbZ}$ be a right limit of the Verblunsky coefficients along $\{n_j\}_{j \in \mathbb N}$.
 Then  there exists a function $q:  \mathfrak B _{\frac{1}{2}}\to \mathbb C$, determined solely by the sequence $\{\beta_k\}_{k \in \bbZ}$, such that
\begin{equation}
\lim_{j\to \infty } \Psi_{n_j} (h,\mu)= q(h),
\end{equation}
 for $h \in  \mathfrak B _{\frac{1}{2}}$.  If $h$ is real, then $q(h)$ is positive and $h \mapsto q(h)$ is continuous  with respect to the $\|\cdot \|_{ \mathfrak B _{\frac12}}$-norm.
\end{theorem}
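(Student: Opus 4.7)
The plan is to define $q(h)$ from $\{\beta_k\}$ via a canonical family of reference measures, and then to transfer the resulting convergence to $\mu$ using a quantitative version of Theorem~\ref{thm2}.

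Assume for simplicity that $|\beta_k|<1$ for all $k$ (the degenerate case can be handled separately by approximation). For each $N\ge 0$, let $\tilde\mu^{(N)}$ be the probability measure on $\bbT$ whose Verblunsky coefficients are $\tilde\alpha^{(N)}_m=\beta_{m-N}$ for $m\ge 0$. By construction the Verblunsky coefficient of $\tilde\mu^{(N)}$ at position $N+k$ is exactly $\beta_k$ for every $k\ge -N$, so its local profile around $N$ matches $\{\beta_k\}$ on the window $|k|\le N$.

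The key input is a quantitative locality estimate, to be extracted from the proof of Theorem~\ref{thm2}: for each $h\in\mathfrak B_{1/2}$ and each $\varepsilon>0$ there exist $M=M(h,\varepsilon)$ and a continuous function $F_M(h;\cdot)$ on $\bbD^{2M+1}$ such that, for any measure $\nu$ with Verblunsky coefficients $\{\gamma_k\}$ and any $n\ge M$,
\[
|\Psi_n(h,\nu)-F_M(h;\gamma_{n-M},\ldots,\gamma_{n+M})|<\varepsilon.
\]
Applied to $\nu=\tilde\mu^{(N)}$ at $n=N$ this yields $\Psi_N(h,\tilde\mu^{(N)})=F_M(h;\beta_{-M},\ldots,\beta_M)+O(\varepsilon)$, so letting $\varepsilon\to 0$ (equivalently $M\to\infty$) shows that the sequence is Cauchy and converges to
\[
q(h):=\lim_{M\to\infty}F_M(h;\beta_{-M},\ldots,\beta_M),
\]
a function of $\{\beta_k\}_{k\in\bbZ}$ alone. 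Applying the same estimate to $\mu$ at $n=n_j$ gives $\Psi_{n_j}(h,\mu)=F_M(h;\alpha_{n_j-M},\ldots,\alpha_{n_j+M})+O(\varepsilon)$, and since $\alpha_{n_j+k}\to\beta_k$ for each $k$ and $F_M$ is continuous, the right-hand side converges to $q(h)$ as $j\to\infty$.

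For real $h$ the quantity $\Psi_n(h,\mu)$ is positive (both $D_n(e^hd\mu)$ and $D_n(d\mu)$ are positive Gram determinants), so $q(h)\ge 0$; applying the locality estimate to the real-valued $\log\Psi_n(h,\mu)$ in the same way produces a finite real limit $\log q(h)$, giving $q(h)>0$. Continuity of $q$ in the $\|\cdot\|_{\mathfrak B_{1/2}}$-norm follows from the continuity of each $F_M$ in $h$ (as $\Psi_n(h,\cdot)$ is continuous in $e^h$) combined with uniformity of the locality estimate on bounded subsets of $\mathfrak B_{1/2}$. The main obstacle is precisely the extraction of the quantitative locality statement with sufficient $h$-uniformity from the proof of Theorem~\ref{thm2}—the qualitative difference statement given there is a priori insufficient—but I expect this strengthening to follow by inspection of the argument in Section~\ref{sec:proofthm2}, with additional care required in the degenerate cases where some $|\beta_k|=1$.
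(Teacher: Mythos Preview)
Your approach differs from the paper's, and the gap you flag is more serious than you suggest. The paper does not construct reference measures at all; it works directly with the doubly-infinite right-limit CMV matrix $\mathcal C^R$. For Laurent $h$ it shows that each Taylor coefficient $E_m^{(n)}(h(\mathcal C))$ of the auxiliary function $\Phi_n(t,h,\mathcal C)=\log\Psi_n(th,\mathcal C)$ at $t=0$ converges to a quantity $\lim_M F_m(h(\mathcal C^R_M))$ depending only on $\{\beta_k\}$, and then uses the uniform bound \eqref{eq:bounddervativephiincommutators} to make $\{\Phi_{n_j}(\cdot,h,\mathcal C)\}_j$ a normal family on a neighbourhood of $[0,1]$. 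Any subsequential limit is analytic with prescribed Taylor coefficients at $0$, hence unique; its value at $t=1$ is $Q(h)=\log q(h)$. Extension to general $h\in\mathfrak B_{1/2}$ and complex $h$ is by the equicontinuity \eqref{eq:equicontinuitysobolev} and a Montel argument. Note also that this route never needs $|\beta_k|<1$: the truncations $\mathcal C^R_M$ make sense regardless.

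The quantitative locality estimate you posit for $\Psi_n$ itself cannot be read off Section~\ref{sec:proofthm2}. The only locality in that proof is the identity $\Phi_{n,N}(t,h^H,\mathcal C)=\Phi_{n,N}(t,h^H,\mathcal C_M)$ for the truncated power series, and the approximation $|\Phi_n-\Phi_{n,N}|\le c\sum_{m>N}m^{3/2}2^{-m}$ of Lemma~\ref{lem:bound2} is valid only for $|t|\le 1/(2e\|h\|_\infty)$, not at $t=1$. The passage from small $t$ to $t=1$ in the paper is precisely the normal-family/analytic-continuation step, which is qualitative; there is no bound of the form $|\Psi_n(h,\nu)-F_M(h;\gamma_{n-M},\ldots,\gamma_{n+M})|<\varepsilon$ uniform in $\nu$ sitting anywhere in that section. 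To salvage your route you would need a genuine additional argument---for instance a two-constants/Hadamard-type estimate on $\bigcup_{x\in[0,1]}B_{x,\eps}$ converting smallness near $0$ plus the global bound \eqref{eq:bounddervativephiincommutators} into smallness at $t=1$. That is feasible, but it is not ``by inspection'', and without it your Cauchy argument for the existence of $q(h)$ and the transfer to $\mu$ both collapse.
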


The proof of this theorem will be given in Section \ref{sec:proofrightlimit}.

The remarkable conclusion is that there is an analogue of \eqref{eq:Szego} for every right limit of the sequence of the Verblunsky coefficients.   The function $h \mapsto q(h)$ is explained in Section \ref{sec:proofrightlimit}, but the construction we provide here is not explicit. We leave it as an interesting open question to find a more tangible expression for  $q(h)$ for an arbitrary sequence $\{\beta_k\}_{k \in \bbZ}$. Observe that in case \eqref{eq:subslimit} we have $q={\rm e}^{Q_\alpha}$.

Theorem \ref{thm:rightlimit} should be compared to an analogous result \cite[Th. 2.4]{BDjams} for the real-line setting using
right limits for Jacobi matrices. An important difference is that  \cite[Th. 2.4]{BDjams} holds for a rather restrictive class of functions $h$ (polynomials with sufficiently small sup-norm), whereas Theorem \ref{thm:rightlimit} holds for  $h \in \mathfrak B_{\frac12}$. The reason that we can allow a more general class, is that in the present setting we can achieve some important inequalities  (cf. Lemma \ref{lem:normalfamily})  by  improving  results from \cite{BDadv}.

\begin{remark}
Note that   right limits are invariant under small perturbations. Indeed, if we perturb the sequence of Verblunsky coefficients and consider $\{\alpha_k+ \eps_k\}_{k \in \mathbb N}$ for  a sequence $\{\eps_k\}_{k \in \mathbb N}$ such that $\eps_k \to  0$ as $k \to \infty$, then the right limits do not change.
\end{remark}

Finally, we mention another generalization. The proofs that we present here, work also in case $\mu$ varies with $n$. Let $\{\mu_n\}_{n \in \mathbb N}$ be a sequence of measures and denote the Verblunsky coefficients of $\mu_n$ by  $\{\alpha_{k}^{(n)}\}_{k =0}^{\infty}$. Then our main results are all valid with the appropriate adjustment of notation. For instance, we have the following Theorem.

\begin{theorem}\label{thm:rightlimitvarying}
Let $\{\mu_n\}_{n \in \mathbb N}$ be a sequence of measure on $\mathbb T$ and let   $\{\beta_k\}_{k \in \bbZ}$ be a right limit of the Verblunsky coefficients along $\{n_j\}_{j \in \mathbb N}$, i.e.
$$\lim_{j \to \infty}\alpha_{n_j+k}^{(n_j)} = \beta_k,$$
for $k \in \mathbb Z$. Then with $q$ as in Theorem \ref{thm:rightlimit}
\begin{equation}
\lim_{j\to \infty } \Psi_{n_j} (h,\mu_{n_j})= q(h),
\end{equation}
 for $h \in  \mathfrak B _{\frac{1}{2}}$.
\end{theorem}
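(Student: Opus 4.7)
The plan is to reduce Theorem \ref{thm:rightlimitvarying} to the fixed-measure Theorem \ref{thm:rightlimit} through a diagonal-subsequence construction, supplemented by a varying-measure version of the comparison principle Theorem \ref{thm2}. First, I would establish the following strengthening of Theorem \ref{thm2}: if $\{\mu_n\}_{n\in\bbN}$ and $\{\tilde \mu_n\}_{n\in\bbN}$ are two sequences of measures with Verblunsky coefficients $\{\alpha_k^{(n)}\}$ and $\{\tilde\alpha_k^{(n)}\}$ satisfying $\lim_{j\to\infty}(\alpha_{n_j+k}^{(n_j)} - \tilde\alpha_{n_j+k}^{(n_j)}) = 0$ for every $k\in\bbZ$, then $\Psi_{n_j}(h,\mu_{n_j}) - \Psi_{n_j}(h,\tilde \mu_{n_j})\to 0$ for all $h\in \mathfrak B_{\frac12}$. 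The proof of Theorem \ref{thm2} given in Section \ref{sec:proofthm2} treats $\Psi_n$ via truncations built from a finite window of Verblunsky coefficients around position $n$; every estimate is pointwise in these coefficients, with no use made of the fact that the window is cut from a single fixed sequence, so the same argument goes through unchanged when the measure varies with $n$.

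Second, I would apply the subsequence principle: it suffices to show that every subsequence of $\{n_j\}$ admits a further subsequence $\{n_{j_l}\}$ along which $\Psi_{n_{j_l}}(h,\mu_{n_{j_l}})\to q(h)$. Extracting such a $\{n_{j_l}\}$, I may further arrange $n_{j_{l+1}} - n_{j_l} \geq 2l+3$, so the windows $[n_{j_l}-l, n_{j_l}+l]$ are pairwise disjoint. I then define a single Verblunsky sequence
\[
\tilde \alpha_m = \begin{cases} \alpha_m^{(n_{j_l})} & \text{if } m\in [n_{j_l}-l, n_{j_l}+l] \text{ for some } l,\\ 0 & \text{otherwise,}\end{cases}
\]
and let $\tilde \mu$ be the (unique) probability measure on $\bbT$ with these Verblunsky coefficients.

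Third, for each fixed $k \in \bbZ$ and every $l\geq |k|$, the construction gives $\tilde\alpha_{n_{j_l}+k} = \alpha_{n_{j_l}+k}^{(n_{j_l})}$, and the hypothesis $\alpha_{n_j+k}^{(n_j)}\to \beta_k$ then yields $\tilde\alpha_{n_{j_l}+k}\to \beta_k$. Thus $\{\beta_k\}_{k\in\bbZ}$ is a right limit of the fixed measure $\tilde \mu$ along the subsequence $\{n_{j_l}\}$, and Theorem \ref{thm:rightlimit} applied to $\tilde \mu$ gives $\Psi_{n_{j_l}}(h,\tilde \mu) \to q(h)$ with exactly the $q$ that the statement of Theorem \ref{thm:rightlimitvarying} refers to (since $q$ depends only on $\{\beta_k\}$). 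Finally, the varying-measure comparison principle from the first step, applied with $\tilde \mu_n \equiv \tilde \mu$ and noting that $\tilde\alpha_{n_{j_l}+k} - \alpha_{n_{j_l}+k}^{(n_{j_l})} = 0$ for all $l \ge |k|$, yields $\Psi_{n_{j_l}}(h,\mu_{n_{j_l}}) - \Psi_{n_{j_l}}(h,\tilde \mu)\to 0$, whence $\Psi_{n_{j_l}}(h,\mu_{n_{j_l}})\to q(h)$ as desired.

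The main obstacle is the first step: carefully auditing the proof of Theorem \ref{thm2} to confirm that it is genuinely insensitive to whether the Verblunsky coefficients come from one fixed measure or from a varying family. Concretely, one has to check that all bounds used (CMV-matrix truncations, transfer-matrix estimates, and the inequalities of the type in Lemma \ref{lem:normalfamily}) are expressed solely in terms of the coefficients $\alpha_k^{(n)}$ in a window around $n$, uniformly over the coefficients outside that window. Once this is in place, the rest is a clean diagonal argument that isolates the dependence of $\Psi_n(h,\mu_n)$ on a finite window of coefficients and allows the right limit $\{\beta_k\}$ to control the asymptotic behavior, regardless of how $\mu_n$ is chosen outside the relevant window.
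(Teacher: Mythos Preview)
Your proposal is correct, and it takes a route that is genuinely different from the paper's. The paper does not actually prove Theorem \ref{thm:rightlimitvarying}: it simply remarks that ``the proofs that we present here, work also in case $\mu$ varies with $n$'' and that ``the generalization to varying measures is straightforward,'' leaving the reader to re-audit all of Sections 4 and 5 with an extra index $n$ on every CMV matrix. Your approach instead \emph{reduces} to the fixed-measure Theorem \ref{thm:rightlimit} via a diagonal splice: you manufacture a single Verblunsky sequence $\{\tilde\alpha_m\}$ that agrees with $\{\alpha_m^{(n_{j_l})}\}$ on widening windows centered at $n_{j_l}$, so that $\tilde\mu$ inherits the same right limit $\{\beta_k\}$, and then you transfer the conclusion back using only a varying-measure form of the comparison principle Theorem \ref{thm2}.

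What your approach buys is a smaller verification burden: you need to check that the proof of Theorem \ref{thm2} is insensitive to the source of the coefficients (which it is, since every estimate in Section \ref{sec:proofthm2} involving $\mathcal C_M$ depends only on the finite window $Q_{n-M}^{n+M}\mathcal C Q_{n-M}^{n+M}$ and the uniform bounds in Lemmas \ref{lem:normalfamily}--\ref{lem:bound2} hold for arbitrary CMV matrices), but you do \emph{not} need to re-examine the proof of Theorem \ref{thm:rightlimit} itself. The paper's implicit approach requires auditing both. Your diagonal construction is also tidy: the disjointness condition $n_{j_{l+1}} - n_{j_l} \ge 2l+3$ guarantees $\tilde\alpha_m$ is well-defined, each $\tilde\alpha_m$ lies in $\bbD$ (being either $0$ or a genuine Verblunsky coefficient of some $\mu_{n_{j_l}}$), so Verblunsky's theorem furnishes $\tilde\mu$, and the matching on windows makes the comparison hypothesis trivially satisfied for $l \ge |k|$. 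The subsequence-of-subsequence argument then closes the loop cleanly.
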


In order to avoid cumbersome notation we will not prove this explicitly in this paper and work with fixed measures only. But the generalization to varying measures is straightforward.

\subsection{Weak asymptotics}\label{ssWeak}

From~\eqref{eq:heine}, a Toeplitz determinant $D_n({\rm d} \mu)$ is equal to $\prod_{j=0}^{n-1} \kappa_j(\mu)^{-2}$, where $\kappa_j(\mu)$ is the leading coefficient of the $j$-th orthonormal polynomial associated to $\mu$.
There are numerous papers in the literature on various asymptotics of $\kappa_j$'s. 
In particular, the currently unresolved Nevai's conjecture (\cite[Sect 2.9]{OPUC1}, see also~\cite{MNT2,MNT3}) deals with the asymptotics of the ratios $\frac{\kappa_j(\tilde\mu)}{\kappa_j(\mu)}$. 
A Ces\`{a}ro-type asymptotics of ratios of $\kappa_j$'s was established by Simon~\cite[Thm 9.10.4]{OPUC2}.
In the language of Toeplitz determinants, it says that
if $\frac{1}{n} K_n(z,z) {\rm d} \mu(z)$ has a weak limit $\nu$, i.e.
$$\frac{1}{n} \int h(z) K_n(z,z) {\rm d} \mu(z) \to \int h(z) {\rm d} \nu(z),$$
then 
$$
\lim_{n\to\infty} \left( \frac{D_n(e^h  {\rm d}\mu)}{D_n({\rm d}\mu)}\right)^{1/n} =e^{ \int h(z) {\rm d} \nu(z)},
$$
for any continuous and real-valued $h$.

This can be viewed as a special case of the following result that holds without any condition on $\mu$.
\begin{proposition}\label{prop:weak}
Let $\mu$ be a Borel measure  and  $h:\mathbb T \to \mathbb C$ a continuous function such that $e^h$ is sectorial. That is, there exist $\tau \in \mathbb T$ and $\eps>0$  such that $\Re \tau e^h\geq \eps$. Then we have\begin{equation}
\label{eq:firstTermSzego2}
\lim_{n\to\infty} \left( \frac{D_n(e^h {\rm d}\mu)}{D_n({\rm d}\mu)}\right)^{1/n}  e^{- \frac{1}{n} \int h(z) K_n(z,z) {\rm d} \mu(z)}= 1.
\end{equation}
\end{proposition}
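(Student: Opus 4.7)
The plan is to reduce the claim, after taking logarithms, to the trace estimate
\[\frac{1}{n}\bigl[\Tr\log M_n - \Tr(P_n M_h P_n)\bigr] \to 0,\]
where $M_n$ is the compression $P_n M_{e^h} P_n$ of multiplication by $e^h$ onto $\mathrm{span}(\phi_0,\ldots,\phi_{n-1})$. Heine's formula \eqref{eq:heine} combined with the lower-triangular change of basis from $\{z^j\}_{j<n}$ to the orthonormal polynomials $\{\phi_j\}_{j<n}$ yields $D_n(e^h d\mu)/D_n(d\mu) = \det M_n$ with $(M_n)_{jk} = \int \phi_j\overline{\phi_k}\,e^h\,d\mu$, while the reproducing-kernel trace identity gives $\int h K_n \,d\mu = \Tr(P_n M_h P_n)$. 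The sectoriality hypothesis $\Re(\tau e^h)\geq \eps$ translates into $\Re\langle v,\tau M_n v\rangle\geq \eps\|v\|^2$ for every $v$, so the numerical range and spectrum of $\tau M_n$ lie uniformly inside the compact set $S := \{z : \Re z \geq \eps,\ |z|\leq \|e^h\|_\infty\}$. Since $\tau^{-1}S$ is compact, simply connected, and avoids the origin, the principal branch of $\log$ is holomorphic on a neighbourhood; Riesz--Dunford calculus then furnishes $\log M_n$ satisfying $\log \det M_n = \Tr \log M_n$.

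Next, I would approximate $\log$ uniformly on $\tau^{-1}S$ by polynomials. Since $\tau^{-1}S$ has connected complement in $\mathbb C$, Mergelyan's theorem supplies, for every $\delta > 0$, a polynomial $P(z) = \sum_{k=0}^m a_k z^k$ with $|\log z - P(z)| < \delta$ throughout $\tau^{-1}S$. The Riesz--Dunford integral together with the sectoriality-induced resolvent bound gives $\|\log M_n - P(M_n)\|_{\mathrm{op}}\leq C\delta$ uniformly in $n$, hence $n^{-1}|\Tr(\log M_n - P(M_n))|\leq C\delta$; and the pointwise estimate $|h - P(e^h)|\leq \delta$ on $\mathbb T$ yields $n^{-1}|\int (h - P(e^h))K_n\,d\mu|\leq \delta$. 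After sending $\delta \to 0$ and using linearity in $P$, it suffices to prove, for each fixed $k \geq 1$,
\[\frac{1}{n}\Bigl[\Tr M_n^k - \int e^{kh} K_n\,d\mu\Bigr] \to 0.\]

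For this monomial estimate, write $g := e^h$ and expand via the reproducing kernel identity $K_n(z,z) = \int K_n(z,w) K_n(w,z)\,d\mu(w)$: the difference $\Tr(P_n M_g P_n)^k - \Tr P_n M_{g^k} P_n$ telescopes into a sum of $k-1$ expressions of the form $\Tr\bigl[(P_n M_g)^j P_n M_g (I-P_n) M_{g^{k-j-1}} P_n\bigr]$, each controlled by the Hilbert--Schmidt norm of the off-diagonal block $(I-P_n) M_g P_n$. A direct calculation using $K_n(w,z) = \overline{K_n(z,w)}$ identifies
\[\|(I-P_n) M_g P_n\|_{\mathrm{HS}}^2 = \frac{1}{2}\iint |K_n(z,w)|^2 |g(z)-g(w)|^2\,d\mu(z)\,d\mu(w).\]
The main obstacle will be showing this (and the analogous higher-order traces) is $o(n)$ for a general Borel $\mu$: no universal diagonal concentration of $|K_n|^2$ is available a priori, so one approximates $g$ uniformly on $\mathbb T$ by trigonometric polynomials $g_N$ of bounded degree $N$, uses the OPUC recurrence $z\phi_l = \rho_l \phi_{l+1} - \bar\alpha_l \phi_l^*$ together with its iterates and adjoints to bound the corresponding quantity for the polynomial approximant by a constant of size $O(N)$ independent of $n$ (exploiting the universal bound $|\alpha_j|<1$), and closes the argument via a triangle-inequality estimate based on $\|M_g - M_{g_N}\|_{\mathrm{op}} = \|g - g_N\|_\infty$, sending first $n \to \infty$ and then $N \to \infty$.
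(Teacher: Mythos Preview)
Your approach is correct in outline and genuinely different from the paper's. The paper first handles real-valued $h$ by working with the auxiliary function $\Phi_n(t,h,\mathcal C)=\log\det(I+P_n(e^{th(\mathcal C)}-I)P_n)-t\Tr P_n h(\mathcal C)P_n$, using the derivative identity \eqref{eq:derivativephiincommutators} and the $\|\cdot\|_\infty$-continuity estimate of Lemma~\ref{lem:bound1} to reduce to Laurent polynomials (for which $\frac1n\Phi_n\to0$ by the $\mathfrak B_{1/2}$ bound \eqref{eq:bounddervativephiincommutators}); the extension to sectorial complex $h$ is then a separate normal-family argument via $h_z=\Re h+z\,\Im h$ and Montel's theorem. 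Your route---Riesz--Dunford calculus on the numerical range, Mergelyan approximation of $\log$ by polynomials, then a telescoping reduction to the commutator estimate---treats the complex case directly and avoids both the $\Phi_n$ machinery and the real/complex split; the price is invoking a numerical-range functional-calculus bound (Crouzeix/Delyon--Delyon, or a contour argument with Mergelyan applied on a slightly enlarged set) to pass from $\sup|\log-P|<\delta$ to $\|\log M_n-P(M_n)\|\le C\delta$.

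Two points to tighten. First, your identity $\|Q_nM_gP_n\|_2^2=\tfrac12\iint|K_n|^2|g(z)-g(w)|^2$ is exact only when the two off-diagonal blocks have equal norm (e.g.\ real $g$); in general the right-hand side equals $\|[M_g,P_n]\|_2^2=\|Q_nM_gP_n\|_2^2+\|P_nM_gQ_n\|_2^2$, which still gives the inequality you need. Second, and more substantively, the Szeg\H{o} recurrence in the $\{\phi_j\}$ basis makes $M_z$ upper Hessenberg, not banded, so the block $P_nM_zQ_n$ is \emph{not} low-rank and your ``$O(N)$ independent of $n$'' claim does not follow from that recurrence alone. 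The clean way to obtain $\|[M_{g_N},P_n]\|_2\le C\sqrt N$ is to pass to the CMV basis $\{\chi_j\}$, where multiplication by $z$ is pentadiagonal and $[\mathcal C^k,P_n]$ has rank $\le4|k|$; since $P_n^{\mathrm{poly}}=M_{z^m}P_n^{\mathrm{CMV}}M_{z^{-m}}$ with $m=\lfloor(n-1)/2\rfloor$ and $M_{g_N}$ commutes with $M_{z^{\pm m}}$, the Hilbert--Schmidt norms agree and the bound transfers. With that substitution the rest of your argument goes through.
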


The proof of this proposition will be given in Section \ref{sec:proofweak}.

The proposition is stated for sectorial symbols. Such symbols have the important property that the Toeplitz determinants $D_n(e^h {\rm d} \mu)$ do not vanish (see \cite[Prop 2.17]{BS}). Indeed, if $\lambda $ is an eigenvalue of the finite Toeplitz matrix $ T_n(\tau e^h {\rm d} \mu)$, then it has an eigenvector $\psi$ (normalized to $\|\psi\|=1$) and $$\lambda= (T_n(\tau e^h) \psi,\psi)_{\mathbb C^n}
= \int \tau e^{h(z)} \left|\sum_{j=0}^{n-1} \psi_j z^j \right|^2 {\rm d}\mu(z).$$
By taking real parts at both sides we find  $\Re \lambda>\eps$. Therefore we see that none of the eigenvalues vanishes and thus also the determinant $D_n(\tau e^h {\rm d} \mu)$ is non-zero. Since, $ D_n(\tau e^h {\rm d}\mu)= \tau^n D_n({\rm e}^h {\rm d}\mu)$ we also have that $D_n({\rm e}^h {\rm d} \mu)$ does not vanish as claimed. This property of sectorial symbols will be relevant in our proof.

\begin{remark}
The continuity of $h$ in Proposition \ref{prop:weak} is slightly stronger than we need. The proof that we present here works for any function $h$ such that
\begin{equation} \label{eq:variancehjh} \frac1n \iint |h(z)-h(w)|^2 |K_n(z,w)|^2 {\rm d} \mu(w) {\rm d}\mu(z) \to 0,
\end{equation}
as $n \to \infty$. For continuous function $h$ this holds without any conditions on $\mu$, as we will see. With certain extra conditions on the measure $\mu$ one may allow larger classes of functions. Moreover, the rate of convergence in \eqref{eq:firstTermSzego2} is the same as the rate of convergence in \eqref{eq:variancehjh}.\end{remark}

\subsection{Fisher--Hartwig asymptotics}\label{sec:FH}

We now briefly comment on the particular case of Fisher--Hartwig measures.
In the Fisher--Hartwig setup we consider measure of the form
$${\rm d} \mu_{FH}(\theta)= z^{\sum_{j=0}^m \beta_j } \prod_{j=0}^m |z-z_j|^{2 \alpha_j} g_{z_j,\beta_j} (z) z_j^{-\beta_j} {\rm d} \theta, \qquad z= {\rm e}^{{\rm i} \theta},$$
and $$z_j= {\rm e}^{{\rm i }  \theta_j}, \qquad j=0,\ldots,m, \qquad 0 = \theta_0 < \theta_1 < \ldots < \theta_m<  2 \pi,$$
$$
g_{z_j,\beta_j} = \begin{cases}
{\rm e}^{ {\rm i} \pi \beta_j} , & 0 \leq \arg z < \theta_j\\
{\rm e}^{ -{\rm i} \pi \beta_j} , & \theta_j\leq \arg z < 2 \pi\end{cases}
$$
$$\Re \alpha_j>-\frac12, \qquad \beta_j \in \mathbb C, \qquad j=0,1, \ldots,m.$$
The question of the asymptotic behavior of $D_n ({\rm e}^V {\rm d} \mu_{FH})$ is a classical problem with origins in  the Ising model. In several works in the past two decades  this asymptotics has been computed under various assumptions on the parameters. We mention only \cite{DIKannals,EhrhF} and \cite{DIK} for a survey.

In the setting of the present paper it is of interest to see how the asymptotic behavior depends on $V$.  Under certain conditions on the parameters $\alpha_j$ and $\beta_j$ we deduce the following asymptotic behavior for the relative asymptotics from the asymptotic in, e.g. \cite{DIKannals},
\begin{equation}
\frac{D_n({\rm e}^V {\rm d} \mu_{FH})}{D_n({\rm d} \mu_{FH})}
= {\rm e}^{n V_0 +  \sum_{k=1}^\infty k V_k V_{-k}} \prod_{j=1}^r {\rm e}^{-(\alpha_j+\beta_j) \sum_{k>0} V_k z_j^k-(\alpha_j-\beta_j) \sum_{k>0} V_{-k} z_j^{-k}}
\times 
 (1+o(1)),
\end{equation}
as $n \to \infty$. By comparing this to Corollary \ref{cor:alpha0} our results match with this computation after verifying
\begin{equation*}
\int V(z) K_n(z,z) {\rm d}\mu_{FH}(z)
= n V_0-(\alpha_j+\beta_j) \sum_{k>0} V_k z_j^k-(\alpha_j-\beta_j) \sum_{k>0} V_{-k} z_j^{-k} + o(1).
\end{equation*}
This can be verified since the asymptotic of the orthogonal polynomials, and hence the kernel $K_n(z,z)$,  with respect to a Fisher--Hartwig measure is known, e.g. \cite{DIKannals}.  We leave the details to the reader (it may be of help to take $V$ first to be analytic in an annulus and deform the contour of integration, to avoid having to deal with the different asymptotics for the orthogonal polynomials near the singularities).

\subsection{Overview  of the rest of the paper}
The rest of this paper is organized as follows. In Section 3 we  briefly recall some definitions and notions that we  need. In Section 4 we introduce the CMV matrix corresponding to the measure $\mu$. In particular, we rewrite $\Psi_n(h,\mu)$ as a Fredholm determinant and analyze boundedness and continuity properties that we need. Then in Section 5 we  prove Theorems \ref{thm1} and \ref{thm:rightlimit}, together with Proposition \ref{prop:weak}. Finally, in Section 6 we prove Theorem \ref{thm2} and Proposition \ref{prop:alpha}.

\section{Preliminaries}
We start by setting some notation and recalling some basic definitions that we use. For more background on traces and determinants of operators we refer to \cite{SimonTrace} and for Toeplitz operators to \cite{BS}.
\subsubsection*{Function norms}
If $h$ is a function on $\mathbb T$ then we denote the sup-norm of $h$ by $\|h\|_\infty$.

We recall that we defined the  space $ \mathfrak B _{\frac12}$ as
$$ \mathfrak B_{\frac12}=\left\{ h : \mathbb T \to \mathbb C \ \mid \ \|h\|_{ \mathfrak B _{\frac12}}:= \sum_{k \in \mathbb Z} \sqrt{1+|k|} |h_k|< \infty \right \}.$$
This space is  a unital commutative Banach algebra, also an example of a Beurling algebra. This means in particular that, for $g,h \in  \mathfrak B _{\frac12},$ $$\|gh\|_{ \mathfrak B _{\frac12}}  \leq \|g\|_{ \mathfrak B _{\frac12}} \|h\|_{ \mathfrak B _{\frac12}}.$$ Moreover, for any $h \in  \mathfrak B _{\frac12}$ we also have ${\rm e}^h \in  \mathfrak B _{ \mathfrak B _{\frac12}}$ and
$$ \|{\rm e} ^h \|_{ \mathfrak B _{\frac12}} \leq  {\rm e} ^{\| h\|_{ \mathfrak B _{\frac12}}}.$$
We also note that $$\|h\|_{ \mathfrak B _{\frac{1}{2}}}\geq \sum_{j} |h_j|  \geq \|h\|_\infty,$$
showing  in particular that functions in $ \mathfrak B _{\frac12}$  are  continuous and thus also bounded. Finally, we note that
\begin{equation}\label{eq:sobolevbounds}
\left(\sum_{j=-\infty}^\infty |j| |f_j|^2\right)^{\frac12} \leq \sum_{j=-\infty}^\infty \sqrt {|j|} |f_j|\leq \|f\|_{ \mathfrak B _{\frac{1}{2}}}.
\end{equation}

\subsubsection*{Operator norms}
Let $\mathcal H$ be a separable Hilbert space (we will mostly have $\mathcal H=\ell_2(\mathbb N)$ or $\mathcal H=\ell_2(\mathbb Z)$). Then the singular values $\sigma_j(A)$ of a compact operator $A$ are defined as the positive square roots of the eigenvalues of $A^*A$.

We denote the operator-, trace- and Hilbert-Schmidt norms by
\begin{align}
\|A\|_\infty&= \sup_j \sigma_j(A),\\
\|A\|_1 &= \sum_{j=1}^\infty \sigma_j(A),\\
\|A\|_2 & =\left( \sum_{j=1}^\infty \sigma_j(A)^2\right)^{1/2}.
\end{align}
The following well-known inequalities will be used frequently,
$$\|A B \|_j\leq \|A\|_j \|B\|_\infty, \qquad \|A B \|_j\leq \|A\|_\infty \|B\|_j, \quad j=1,2, \infty,$$
and $$\|AB\|_1\leq \|A\|_2 \|B\|_2.$$
Moreover, if $A$ has rank $r<\infty$, then
\begin{equation}\label{eq:finiterank}
\|A\|_1 \leq r \|A\|_\infty, \qquad \|A\|_2\leq \sqrt r \|A\|_\infty.
\end{equation}
Finally, if $\|A\|_1<\infty $ we can define the trace $\Tr A$ (by extending the trace of finite rank operators) which satisfies
$$|\Tr A| \leq \|A\|_1.$$
Similarly, if $\|A\|_1<\infty $ we can define the determinant $\det (I+ A)$  by extending the determinant for finite rank operators. If in addition $\|A\|_\infty<1$, then we have
$$\det (1+A)=\exp \left(\sum_{j=1}^\infty \frac{(-1)^j}{j} \Tr A^j\right).$$

\subsubsection*{Toeplitz and Hankel operators}
If $a= \sum_j a_j z^j$ then the Toeplitz operator $T(a)$ and Hankel operator $H(a)$ are defined as the semi-infinite matrices
$$(T(a))_{jk}= a_{j-k}, \qquad (H(a))_{j+k-1}, \qquad j,k=1,2, \ldots$$
Then $T(a)$ and $H(a)$ are bounded operators on $\ell_2(\mathbb N)$ and
$$\|T(a)\|_\infty = \|a\|_\infty, \qquad \|H(a)\|_\infty\leq \|a\|_\infty.$$
Moreover,  the Hilbert-Schmidt norm  $H(a)$ is given by
$$\|H(a)\|_2^2= \Tr H(a)^*H(a)= \sum_{j=1}^\infty j |a_j|^2.$$
By combining this with \eqref{eq:sobolevbounds} we see that if $a \in  \mathfrak B _{\frac12}$ then $H(a)$ is a Hilbert-Schmidt operator and $\|H(a)\|_2 \leq \|a\|_{ \mathfrak B _{\frac12}}$.

\section{CMV matrices and a Fredholm determinant}

In this section we recall the definition of CMV matrices and rewrite $\Psi_n$ in \eqref{eq:strongasympterm} as a Fredholm determinant.   We will also determine continuity properties of this determinant.

\subsection{CMV matrices}\label{ssCMV}

The CMV operator  is a natural object in the theory of orthogonal polynomials on the unit circle (see Cantero--Moral--Vel\'{a}zquez paper~\cite{CMV}).
By applying the Gram--Schmidt procedure to the sequence $\{1,z,z^{-1},z^2,z^{-2},\ldots\}$ in $L^2(\mu)$, one obtains a sequence $\{\chi_n\}_{n=0}^\infty$ of Laurent polynomials which is a basis of $L^2(\mu)$. With respect to this basis the operator of multiplication by $z$ has a matrix representation $\calC$ (that is, $\calC_{jk} = \int z \overline{\chi_j(z)} \chi_k(z) {\rm d}\mu$) given by
\begin{equation*}
\calC=\left(
\begin{array}{ccccccc}
 \bar{\alpha}_0  & \bar{\alpha}_1 \rho_0  & \rho_0 \rho_1 & 0 & 0 & 0 & \cdots  \\
 \rho_0 & -\alpha_0 \bar{\alpha}_1  & -\alpha_0 \rho_1 & 0 & 0 & 0 & \cdots  \\
0 & \bar{\alpha}_2 \rho_1 & -\alpha_1 \bar{\alpha}_2  & \bar{\alpha}_3 \rho_2 & \rho_2 \rho_3 & 0 & \cdots  \\
0 & \rho_1 \rho_2  & -{\alpha}_1 \rho_2 & -\alpha_2 \bar{\alpha}_3  & -\alpha_2\rho_3 & 0 & \cdots \\
0 & 0 & 0 & \bar{\alpha}_4 \rho_3 & -\alpha_3  \bar{\alpha}_4 & \bar{\alpha}_5\rho_4 & \cdots   \\
0 & 0 & 0 & \rho_3 \rho_4 & -\alpha_3 \rho_4 & -\alpha_4 \bar{\alpha}_5 & \cdots  \\
\cdots & \cdots & \cdots & \cdots & \cdots & \cdots & \cdots  \\
\end{array}
\right),
\end{equation*}
where 
$\alpha_j$, $j\ge0$, are the same Verblunsky coefficients as in~\eqref{eq:rec},  and $\rho_j:=\sqrt{1-|\alpha_j|^2}$. Note that $\calC$ is a unitary operator on $\ell^2(\mathbb N)$.

The following proposition shows the relation between the ratio of Toeplitz determinants that we want to consider in this paper and the CMV matrices associated to $\mu$.
\begin{proposition} We have
$$
\frac{D_n({\rm e}^h {\rm d} \mu)}{D_n({\rm d} \mu)}=
 \det \left( I +P_n ({\rm e}^{h(\mathcal C)}-I) P_n \right)_{\ell_2(\N)},
$$
where $P_n$ is the projection on the first $n$ coefficients, i.e.
$$ P_n e_j=\begin{cases}
e_j & j=1,2, \ldots, n,\\
0& \text{ otherwise,}
\end{cases}$$
and $Q_n=I-P_n$.
\end{proposition}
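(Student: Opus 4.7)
The plan is to translate the right-hand Fredholm determinant into a finite-dimensional Gram determinant via the CMV unitary equivalence, and then to transport the CMV basis to monomials by multiplying by a suitable power of $z$.

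First I would use that the assignment $Ue_j=\chi_{j-1}$ defines a unitary $U:\ell^2(\mathbb N)\to L^2(\mu)$ under which $\mathcal C$ becomes the multiplication operator $M_z$ on $L^2(\mu)$. By the Borel functional calculus for the unitary operator $\mathcal C$, $e^{h(\mathcal C)}$ then corresponds to $M_{e^h}$, and $P_n$ corresponds to the orthogonal projection $\tilde P_n$ of $L^2(\mu)$ onto $V_n:=\mathrm{span}\{\chi_0,\ldots,\chi_{n-1}\}$. Because $K:=\tilde P_n(M_{e^h}-I)\tilde P_n$ has rank at most $n$ and $I+K$ acts as the identity on $V_n^\perp$, the Fredholm determinant reduces to a finite-dimensional one:
\[
\det\bigl(I+P_n(e^{h(\mathcal C)}-I)P_n\bigr)=\det\Bigl(\textstyle\int \chi_k\overline{\chi_j}\,e^{h}\,\mathrm{d}\mu\Bigr)_{j,k=0}^{n-1},
\]
which is precisely the Gram determinant $G(e^h\mathrm{d}\mu;\{\chi_j\}_{j=0}^{n-1})$ of the CMV vectors in $L^2(e^h\,\mathrm{d}\mu)$.

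Next I would invoke the defining feature of the CMV basis: by its Gram--Schmidt construction from $\{1,z,z^{-1},z^{2},z^{-2},\ldots\}$, there exists an integer $k=k(n)\ge 0$ such that $V_n=z^{-k}W_n$, where $W_n:=\mathrm{span}\{1,z,\ldots,z^{n-1}\}$. Since $|z^k|=1$ on $\mathbb T$, multiplication by $z^k$ is a unitary operator on $L^2(\nu)$ for every finite Borel measure $\nu$ on $\mathbb T$; applied to the CMV basis it produces a basis $\{z^k\chi_j\}_{j=0}^{n-1}$ of $W_n$ with the same Gram matrix as $\{\chi_j\}_{j=0}^{n-1}$, i.e. $G(\nu;\{\chi_j\})=G(\nu;\{z^k\chi_j\})$.

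Finally I would apply the classical change-of-basis identity for Gram determinants in $W_n$: if $B\in\mathbb C^{n\times n}$ is the change-of-basis matrix expressing $\{z^k\chi_j\}_{j=0}^{n-1}$ in the monomial basis $\{z^l\}_{l=0}^{n-1}$, then $G(\nu;\{z^k\chi_j\})=|\det B|^{2}\,D_n(\mathrm{d}\nu)$. Taking $\nu=\mu$ and using that $\{z^k\chi_j\}$ is orthonormal in $L^2(\mu)$ determines $|\det B|^{2}=1/D_n(\mathrm{d}\mu)$; taking $\nu=e^{h}\mathrm{d}\mu$ and chaining the identities then yields $D_n(e^{h}\mathrm{d}\mu)/D_n(\mathrm{d}\mu)$, as required. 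The step needing the most care is the identification $V_n=z^{-k}W_n$, which is where the specific Gram--Schmidt ordering in the definition of the CMV basis is used; once it is in place, the remainder is routine bookkeeping with the functional calculus and finite-dimensional Gram determinants.
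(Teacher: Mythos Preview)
Your proof is correct and follows essentially the same approach as the paper's: both arguments identify the Fredholm determinant with the Gram determinant of $\{\chi_j\}_{j=0}^{n-1}$ in $L^2(e^h\,\mathrm{d}\mu)$ via the functional calculus, and then pass to the monomial basis by multiplying by $z^k$ (with $k=\lfloor(n-1)/2\rfloor$) and performing a change of basis whose Jacobian factor is fixed by specializing to $h=0$. The paper merely runs the chain in the opposite order (starting from the Toeplitz determinant and ending at the Fredholm determinant) and phrases the change of basis as ``taking linear combinations of rows and columns,'' but the content is the same.
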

\begin{proof}
We recall that the Toeplitz determinant is defined as
$$D_n({\rm e}^h {\rm d} \mu)=\det \left(\int_{\mathbb T} z^j \overline{z}^k {\rm e}^{h(z)} {\rm d} \mu(z)\right)_{j,k=1}^n.$$
Since $|z|=1$ we can also write this as
$$D_n({\rm e}^h {\rm d} \mu)=\det \left(\int_{\mathbb T} z^{j-\lfloor (n-1)/2\rfloor-1}  \overline{z}^{k-\lfloor (n-1)/2 \rfloor-1} {\rm e}^{h(z)} {\rm d} \mu(z)\right)_{j,k=1}^n.$$
where $ {\lfloor (n-1)/2 \rfloor}$ is the largest integer less or equal to $(n-1)/2$.
By taking linear combination of the rows and columns we can write
$$
D_n({\rm e}^h {\rm d} \mu)=c \det \left(\int_{\mathbb T} \chi_{j-1}(z) \overline{\chi_{k-1}(z)} {\rm e}^{h(z)} {\rm d} \mu(z)\right)_{j,k=1}^n,
$$
where $\{\chi_j\}_{j=0}^\infty$ are the orthonormal functions that we used in defining the CMV matrix $\mathcal C$. The constant $c$  depends only on the $\chi_j$ and $n$, but not on $h$, and hence it can  be computed by taking the special case $h=0$, giving
$$D_n({\rm d} \mu)=c.$$
Therefore we have
$$
\frac{D_n({\rm e}^h {\rm d} \mu)}{D_n({\rm d}\mu)}= \det \left(\int_{\mathbb T} \chi_{j-1}(z) \overline{\chi_{k-1}(z)} {\rm e}^{h(z)} {\rm d} \mu(z)\right)_{j,k=1}^n.
$$
Now we use that multiplication by $z$ in the basis functions
$\{\chi_j\}_{j=0}^\infty$ is equivalent to  $\mathcal C$  and hence multiplication by ${\rm e}^{h(z)}$ to ${\rm e}^{h(\mathcal C)}$. Hence the matrix in the determinant on the right-hand side is the $n\times n$ upper left block of ${\rm e}^{h(\mathcal C)}$. This gives the statement.\end{proof}

Note that the integral operator on $L_2 (\mu)$ with kernel $K_n$  is the projection operator onto the span of $\{1,z,z^2,\ldots,z^{n-1}\}$, while $P_n$ is the projection onto the span of $\{\chi_0,\ldots,\chi_{n-1}\}$. So these projections are related to each other via the conjugation by $z^{\lfloor (n-1)/2\rfloor}$. Since $h(\calC)$, viewed as the operator of multiplication by $h(z)$ in $L^2(\mu)$, commutes with multiplication by $z$, we obtain  the following equality of traces:
$$\int h(z) K_n(z,z) {\rm d} \mu(z)= \Tr P_n h(\mathcal C)P_n.$$
For this reason we can write $\Psi_n(h,\mu)=\Psi_n(h,\mathcal C)$ as
\begin{equation}
\Psi_n(h,\mathcal C)
= \det\left (I + P_n({\rm e}^{ h(\mathcal C) }-I)P_n \right) {\rm e}^{-\Tr P_n h(\mathcal C) P_n}.
\end{equation}
Here $\mathcal C$ is the CMV-matrix corresponding to $\mu$ and $h \in  \mathfrak B _{\frac12}$. In the proofs we will mostly write $\Psi_n(h,\mathcal C)$ instead of $\Psi_n(h,\mu)$.

\begin{remark}
In view of the remark just below Proposition \ref{prop:weak} we also mention that
$$\iint |h(z)-h(w)|^2 |K_n(z,w)|^2 {\rm d} \mu(z) {\rm d} \mu(z)= \| [P_n,h(\mathcal C)]\|_2^2.$$
The commutator on the right-hand side, and its Hilbert-Schmidt norm, will appear frequently in the coming proofs.
\end{remark}

\subsection{The auxiliary function $\Phi_n$}
We will make extensive use of an auxiliary function $\Phi_n$. Before defining this function, we first note that it follows from \eqref{eq:heine} that
$$\det (I+ P_n({\rm e}^{th(\mathcal C)}-I)P_n)= \frac{D_n({\rm e}^{th} {\rm d} \mu) }{D_n({\rm d} \mu) }>0,$$
if $t \in \mathbb R$ and $h$ real-valued. Now define the auxiliary function $ \Phi_n(t,h,\mathcal C)$ by
$$
 \Phi_n(t,h,\mathcal C)
= \log \det\left (I + P_n({\rm e}^{t h(\mathcal C) }-I)P_n\right)- t \Tr P_n h(\mathcal C) P_n.
$$
Note that
\begin{equation} \label{eq:relationPsiandPhi}
\Psi_n(h,\mathcal C)= \exp\left(\Phi_n(1,h,\mathcal C)\right).\end{equation} For this reason, we are mainly interested in the value of $\Phi_n(t,h,\mathcal C)$ at $t=1$, but we will also use properties of the function near $t=0$. In fact, we will consider all $t$ in an $\eps>0$ neighbourhood of the interval $[0,1]$, i.e. $t \in \cup_{x\in [0,1]} B_{x,\eps}$ with $B_{x,\eps}= \{z \in \C \mid |z-x| \leq \eps\}$.   If $h$ is real, we can indeed choose $\eps$ sufficiently small such that $\Phi_n(t,h,\mathcal C)$ is  well-defined and analytic in that set.

\begin{lemma}\label{lem:normalfamily}
Let  $h \in \mathbb L_\infty(\mathbb T)$  be real-valued. Let $\eps>0$  be any  sufficiently small number such that
\begin{equation}\label{eq:defrhoh}
\rho_h:=\essinf_{\overset{z \in \mathbb T}{t\in \bigcup_{x \in [0,1]} B_{x,\eps}}} \Re {\rm e}^{t h(z)}>0.
\end{equation}
Then $\Phi_n(t,h,\mathcal C)$ is a well-defined function that is analytic in $t \in \cup_{x\in [0,1]} B_{x,\eps}$ and
\begin{equation} \label{eq:derivativephiincommutators}
\frac{{\rm d} }{{\rm d} t} \Phi_n(t,h,\mathcal C)= -\Tr \left(I+P_n({\rm e} ^{t h(\mathcal C)}  -I)P_n\right)^{-1} P_n [{\rm e} ^{t h(\mathcal C)},P_n] [h(\mathcal C),P_n]P_n.
\end{equation}
Moreover,
\begin{equation} \label{eq:bounddervativephiincommutatorsbounded}
\frac1n  |\Phi_n(t,h,\mathcal C)| \leq \frac{8(1+ \eps)}{\min(\rho_h,1)} ({\rm e}^{(1+\eps)\|h\|_{
\infty}}-1) \|h\|_{\infty}.
 \end{equation}
for $t\in \bigcup_{x \in [0,1]} B_{x,\eps}$ and $n \in \mathbb N$.

If $h \in \mathfrak B_{\frac12}$ is real-valued, then
\begin{equation} \label{eq:bounddervativephiincommutators}
 |\Phi_n(t,h,\mathcal C)| \leq \frac{16(1+ \eps)}{\min(\rho_h,1)} {({\rm e}^{(1+\eps)\|h\|_{ \mathfrak B _{\tiny{\frac12}}}}-1) \|h\|_{ \mathfrak B _{\frac12}}}.
 \end{equation}
for $t\in \bigcup_{x \in [0,1]} B_{x,\eps}$ and $n \in \mathbb N$.
\end{lemma}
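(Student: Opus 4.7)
\emph{Plan.} Set $A(t):=I+P_n({\rm e}^{th(\mathcal C)}-I)P_n = Q_n + P_n {\rm e}^{th(\mathcal C)} P_n$, which is an entire operator-valued function of $t$. First I want to establish invertibility of $A(t)$ on the neighbourhood. For any unit $u\in\mathrm{Ran}(P_n)$, the corresponding element $\tilde u = \sum_{j=0}^{n-1} u_j\chi_j \in L^2(\mu)$ is a unit vector, so
\begin{equation*}
\Re\langle A(t)u,u\rangle \;=\; \int \Re\,{\rm e}^{th(z)}|\tilde u(z)|^2 {\rm d}\mu(z) \;\geq\; \rho_h .
\end{equation*}
Cauchy--Schwarz then gives $\|A(t)u\|\geq \rho_h$, while $A(t)$ acts as the identity on $\mathrm{Ran}(Q_n)$; hence $\|A(t)^{-1}\|_\infty \leq 1/\min(\rho_h,1)$ and $\det A(t)\neq 0$ throughout $\bigcup_{x\in[0,1]} B_{x,\eps}$. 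This set is simply connected and $\Phi_n(0,h,\mathcal C)=0$, so a branch of the logarithm is fixed and $\Phi_n(t,h,\mathcal C)$ is analytic there.

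For \eqref{eq:derivativephiincommutators} apply Jacobi's formula to the finite-rank perturbation:
\begin{equation*}
\Phi_n'(t) \;=\; \Tr\,A(t)^{-1} P_n h(\mathcal C){\rm e}^{th(\mathcal C)} P_n \;-\; \Tr\,P_n h(\mathcal C)P_n.
\end{equation*}
Since $A(t)P_n = P_n{\rm e}^{th(\mathcal C)}P_n$, the identity $A(t)^{-1} P_n{\rm e}^{th(\mathcal C)}P_n = P_n$ holds. A direct expansion using $P_n^2=P_n$ gives
\begin{equation*}
P_n[{\rm e}^{th(\mathcal C)},P_n][h(\mathcal C),P_n]P_n \;=\; P_n{\rm e}^{th(\mathcal C)}P_n h(\mathcal C)P_n \;-\; P_n{\rm e}^{th(\mathcal C)} h(\mathcal C) P_n,
\end{equation*}
and inserting $\Tr A(t)^{-1}(\cdot)$: the first summand becomes $\Tr P_n h(\mathcal C)P_n$ by the above identity, while the second equals $\Tr A(t)^{-1} P_n h(\mathcal C){\rm e}^{th(\mathcal C)}P_n$ because ${\rm e}^{th(\mathcal C)}$ and $h(\mathcal C)$ commute. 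Subtracting recovers $-\Phi_n'(t)$, proving \eqref{eq:derivativephiincommutators}.

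For the bounds, H\"older for Schatten norms gives
\begin{equation*}
|\Phi_n'(t)| \;\leq\; \|A(t)^{-1}\|_\infty \, \|[{\rm e}^{th(\mathcal C)},P_n]\|_2 \, \|[h(\mathcal C),P_n]\|_2.
\end{equation*}
In the $L_\infty$ case, $[h(\mathcal C),P_n]=h(\mathcal C)P_n-P_n h(\mathcal C)$ is the difference of two operators of rank at most $n$, hence has rank $\leq 2n$ and operator norm $\leq 2\|h\|_\infty$; \eqref{eq:finiterank} then gives $\|[h(\mathcal C),P_n]\|_2 \leq 2\sqrt{2n}\,\|h\|_\infty$. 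Applied to ${\rm e}^{th(\mathcal C)}-I$ instead, the same argument gives $\|[{\rm e}^{th(\mathcal C)},P_n]\|_2 \leq 2\sqrt{2n}\,({\rm e}^{|t|\|h\|_\infty}-1)$. Integrating $\Phi_n'$ along the straight segment from $0$ to $t$ (which lies in the star-shaped neighbourhood) yields \eqref{eq:bounddervativephiincommutatorsbounded}.

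The main obstacle is the finer $n$-independent commutator bound behind \eqref{eq:bounddervativephiincommutators}: the claim is $\|[h(\mathcal C),P_n]\|_2\leq 4\|h\|_{\mathfrak B_{\frac12}}$ uniformly in $n$. Expanding $h=\sum_k h_k z^k$ it suffices to handle each $[\mathcal C^k, P_n]$. Since the CMV matrix $\mathcal C$ is pentadiagonal, $\mathcal C^k$ has bandwidth at most $2|k|$, so the image of $(I-P_n)\mathcal C^k P_n$ lies in $\mathrm{span}\{e_j:n\leq j\leq n+2|k|-1\}$, and symmetrically for $P_n\mathcal C^k(I-P_n)$; hence $\mathrm{rank}\,[\mathcal C^k,P_n]\leq 4|k|$. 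Combined with $\|\mathcal C^k\|_\infty=1$ (unitarity), \eqref{eq:finiterank} gives $\|[\mathcal C^k,P_n]\|_2\leq 4\sqrt{|k|}$, and summing over $k$ produces the claim via $\sum_k\sqrt{|k|}\,|h_k|\leq \|h\|_{\mathfrak B_{\frac12}}$. Applied to ${\rm e}^{th}-1$ and combined with the Banach-algebra estimate $\|{\rm e}^{th}-1\|_{\mathfrak B_{\frac12}}\leq {\rm e}^{|t|\|h\|_{\mathfrak B_{\frac12}}}-1$, this yields $\|[{\rm e}^{th(\mathcal C)},P_n]\|_2 \leq 4({\rm e}^{|t|\|h\|_{\mathfrak B_{\frac12}}}-1)$; integrating $\Phi_n'$ as before gives \eqref{eq:bounddervativephiincommutators}. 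It is precisely the $\sqrt{|k|}$ gain from the pentadiagonal structure, rather than a naive $|k|$ estimate via telescoping commutators, that makes the class $\mathfrak B_{\frac12}$ accessible instead of the strictly smaller $\mathfrak B_1$.
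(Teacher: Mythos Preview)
Your proof is correct and follows essentially the same route as the paper: invertibility of $A(t)$ via the positive real part estimate, Jacobi's formula for the derivative, rewriting into double-commutator form using $P_n^2=P_n$, and then the Schatten--H\"older bound combined with rank estimates (rank $\leq 2n$ in the $L_\infty$ case, rank $\leq 4|k|$ for $[\mathcal C^k,P_n]$ from the pentadiagonal structure in the $\mathfrak B_{\frac12}$ case). Your derivation of \eqref{eq:derivativephiincommutators} by directly expanding $P_n[{\rm e}^{th(\mathcal C)},P_n][h(\mathcal C),P_n]P_n$ and matching terms is a slightly cleaner packaging of the same algebra the paper performs in two steps (first reaching $\Tr A(t)^{-1}P_n{\rm e}^{th(\mathcal C)}[h(\mathcal C),P_n]P_n$, then invoking $P_n[h(\mathcal C),P_n]P_n=0$), but the content is identical.
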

\begin{proof}
From the discussion preceding the lemma we know that  $\Phi_n(t,h,\mathcal C)$ is well-defined and analytic in a neighborhood of $[0,1]$. It remains to check that this neighborhood can be taken to be $\cup_{x \in [0,1]} B_{x,\eps}$ with $\eps$ as indicated in the statement. Note that by taking the derivative with respect to $t$ and using the identity
$$\frac{{\rm d} }{{\rm d} t} \log \det \left(I+A(t)\right)= \Tr (I+A(t))^{-1}{A'(t)},$$
we obtain
\begin{equation}\label{eq:derivativephi}
\frac{{\rm d} }{{\rm d} t} \Phi_n(t,h)= \Tr \left(I+P_n({\rm e} ^{t h(\mathcal C)} -I)P_n\right)^{-1}P_n{\rm e} ^{t h(\mathcal C)} h(\mathcal C) P_n- \Tr P_n h(\mathcal C)P_n.
\end{equation}
We will show that
$$\left(I+P_n({\rm e} ^{t h(\mathcal C)}  -I)P_n\right)^{-1}$$
exists and is analytic  in $t \in \cup_{x\in [0,1]} B_{x,\eps}$. To this end, we recall the well-known fact  that if $A$ is an operator on a  Hilbert space $\mathcal H$ for which there exists an $r>0$ such that  $\Re (A \psi,\psi)>r(\psi,\psi)$ for $\psi \in \mathcal H$, then $A^{-1}$ exists and $\|A^{-1}\| \leq 1/r$. Indeed, this follows easily from
$$0 \leq \|(A-r)\psi\|^2 = \|A \psi\|^2-2 r \Re (A\psi,\psi) + r^2 \| \psi\|^2,$$
which together with $\Re (A \psi,\psi)>r(\psi,\psi)$ gives $\|A\psi\|^2 \geq r^2 \|\psi\|^2$. From the assumption  in the lemma and using the fact that $h(\mathcal C)$ is unitarily equivalent to multiplication by $h(z)$ in $\mathbb L_2(\mu)$, it is straight-forward to check that
\begin{multline} \Re \left(\left(I+P_n({\rm e} ^{t h(\mathcal C)}  -I)P_n\right) \psi,\psi\right)= \Re (Q_n \psi, \psi)+   \Re \left(P_n{\rm e} ^{t h(\mathcal C)}  P_n\psi,\psi\right)\\=  \Re (Q_n \psi, Q_n \psi)+\Re \left({\rm e} ^{t h(\mathcal C)}  P_n\psi,P_n \psi\right)\geq \min (\rho_h,1) \|\psi\|^2,
\end{multline}
and hence
\begin{equation}\label{eq:boundonresolvent}
\left\| \left(I+P_n({\rm e} ^{t h(\mathcal C)} -I)P_n\right)^{-1}\right\|_\infty \leq 1/\min(\rho_h,1).
\end{equation}
In particular, the function $\Phi_n(t,h,\mathcal C)$ is indeed well-defined and analytic in $\cup_{x\in [0,1]} B_{x,\eps}$.

To prove \eqref{eq:bounddervativephiincommutators}, we start with \eqref{eq:derivativephi} and bring both terms together, giving
\begin{multline}\label{eq:310}
\frac{{\rm d} }{{\rm d} t} \Phi_n(t,h) = \Tr \left(I+P_n({\rm e} ^{t h(\mathcal C)} -I)P_n\right)^{-1}P_n{\rm e} ^{t h(\mathcal C)} h(\mathcal C) P_n\\
- \left(I+P_n({\rm e} ^{t h(\mathcal C)} -I)P_n\right)^{-1}  \left(I+P_n({\rm e}^{t h(\mathcal C)}-I) P_n\right)P_n h(\mathcal C)P_n\\
= \Tr \left(I+P_n({\rm e} ^{t h(\mathcal C)}  -I)P_n\right)^{-1}\left(P_n{\rm e} ^{t h(\mathcal C)} h(\mathcal C)P_n-P_n{\rm e} ^{t h(\mathcal C)} P_n h(\mathcal C)P_n\right)
\\= \Tr \left(I+P_n({\rm e} ^{t h(\mathcal C)}  -I)P_n\right)^{-1}\left(P_n{\rm e} ^{t h(\mathcal C)} [h(\mathcal C),P_n]P_n\right),
\end{multline}
where we used $P_n^2=P_n$. For the same reason, we have  $P_n[h(\mathcal C),P_n]P_n=0$ and we thus obtain \eqref{eq:derivativephiincommutators}.

We then note that  for any three operators $X$, $Y$ and $Z$ we have
\begin{equation} \label{eq:XYZ}
|\Tr XYZ| \leq \|X\|_\infty \|YZ\|_1 \leq  \|X\|_\infty \|Y\|_2\|Z\|_2,
\end{equation} we are left with estimating $\|[{\rm e} ^{t h(\mathcal C)},P_n]\|_2$ and $ \|[h(\mathcal C),P_n]\|_2$.

It is clear that the ranks of $[h(\mathcal C),P_n]$ and $[ {\rm e}^{t h(\mathcal C)}, P_n]$ are both at most $2n$.  By \eqref{eq:finiterank} we find
 \begin{equation}\label{eq:boundhCPn0} \|[ h(\mathcal C), P_n]\|_2 \leq \sqrt{2n}  \|[ h(\mathcal C), P_n]\|_\infty \leq 2 \sqrt{2n}  \|h\|_{ \infty},
 \end{equation}
 and
 \begin{equation} \label{eq:boundjj} \|[ {\rm e}^{t h(\mathcal C)}, P_n]\|_2= \|[ {\rm e}^{t h(\mathcal C)}-I, P_n]\|_2 \leq \sqrt {2n} \|[ {\rm e}^{t h(\mathcal C)}-I, P_n]\|_\infty
 \leq 2 \sqrt{2n} ( {\rm e}^{(1+ \eps) \|h\|_{ \infty }}-1).
 \end{equation}
 By combining \eqref{eq:derivativephiincommutators}  with \eqref{eq:boundonresolvent}, \eqref{eq:boundhCPn0} and \eqref{eq:boundjj} we therefore find
 $$\frac1n\left|\frac{{\rm d} }{{\rm d} t} \Phi_n(t,h,\mathcal C) \right|\leq\frac{8}{\min(\rho_h,1)} ( {\rm e}^{(1+ \eps) \|h\|_{ \infty}}-1) \|h\|_{ \infty}$$
Since $\Phi_n(0,h,\mathcal C)=0$ we see that \eqref{eq:bounddervativephiincommutatorsbounded} now follows by integrating the latter inequality over $t$.

Now suppose that we have in addition $h \in \mathfrak B_{\frac12}$. Then we write
$$[ h(\mathcal C), P_n]= \sum_{k \in \mathbb Z} h_k[ \mathcal C^k,P_n].$$
The main point of the proof of \eqref{eq:bounddervativephiincommutators} is that  due to band structure of $\mathcal C$ (and $\mathcal C^{-1}= \mathcal C^*$) we have
 the rank of $[\mathcal C^k,P_n]$ is at most $4|k|$. Hence
 \begin{align}\label{eq:boundhCPn} \|[ h(\mathcal C), P_n]\|_2\leq  2 \sum_{k \in \mathbb Z} \sqrt{|k|} |h_k|\|[\calC^k,P_n]\|_\infty \leq 4 \|h\|_{ \mathfrak B _{\frac12}}.
 \end{align}
 For the same reason we have
 $$\|[ {\rm e}^{t h(\mathcal C)}, P_n]\|_2= \|[ {\rm e}^{t h(\mathcal C)}-I, P_n]\|_2  \leq 4 \| {\rm e}^{t h}-1\|_{ \mathfrak B _{\frac12}}.$$
 Since $\|f g\|_{ \mathfrak B _{\frac12}} \leq \|f\|_{ \mathfrak B _{\frac12}} \|g\|_{ \mathfrak B _{\frac12}}$  and by a Taylor expansion we then find
 \begin{equation}\label{eq:boundexphCPn}
 \|[ {\rm e}^{t h(\mathcal C)}, P_n]\|_2\leq 4( {\rm e}^{|t| \|h\|_{ \mathfrak B _{\frac12}}}-1)\leq 4( {\rm e}^{(1+ \eps) \|h\|_{ \mathfrak B _{\frac12}}}-1).
 \end{equation}
 By combining \eqref{eq:derivativephiincommutators}  with \eqref{eq:boundonresolvent}, \eqref{eq:boundhCPn} and \eqref{eq:boundexphCPn} we therefore find
 $$\left|\frac{{\rm d} }{{\rm d} t} \Phi_n(t,h,\mathcal C) \right|\leq\frac{16}{\min(\rho_h,1)} ( {\rm e}^{(1+ \eps) \|h\|_{ \mathfrak B _{\frac12}}}-1) \|h\|_{ \mathfrak B _{\frac12}}$$
and hence \eqref{eq:bounddervativephiincommutators} now follows after integrating over $t$ again. This finishes the proof.
\end{proof}

We will also need that the function $\Phi_n$ is continuous in $h$ with respect to the $\|\cdot\|_{ \mathfrak B _{\frac12}}$ norm in the following sense.

\begin{lemma}\label{lem:bound1}
Let $h_1  \in  \mathbb L_\infty(\mathbb T)$ be real-valued. Then there exists $\eps>0$ and $r>0$ such that for all real-valued $h_2$ with $\|h_1-h_2\|_{\infty}<r$ we have that $ \Phi_n(t,h_2,\mathcal C)$ is also well-defined for   $t \in \cup_{x \in [0,1]} B_{x,\eps}$ and
\begin{equation}
\label{eq:equicontinuitybound}
\frac1n |\Phi_n(t,h_1,\mathcal C)-\Phi_n(t,h_2,\mathcal C)|= \mathcal O(\|h_1-h_2\|_{ \infty}),
\end{equation}
as $\|h_1-h_2\|_\infty \to 0$, where the constant is uniform for $t \in \cup_{x \in [0,1]} B_{x,\eps}$ and independent of $n$ and $\calC$.

Similarly, for real-valued $h_1\in \mathfrak B_{\frac12}$, we have  that for all real-valued $h_2$ with $\|h_1-h_2\|_{\mathfrak B_{\frac12}}<r$ we have that  $ \Phi_n(t,h_2,\mathcal C)$ is  also well-defined  for $t \in \cup_{x \in [0,1]} B_{x,\eps}$ and
\begin{equation}
\label{eq:equicontinuitysobolev}
|\Phi_n(t,h_1,\mathcal C)-\Phi_n(t,h_2,\mathcal C)|= \mathcal O(\|h_1-h_2\|_{ \mathfrak B _{\frac12}}),
\end{equation}
as $\|h_1-h_2\|_{\mathfrak B_{\frac12}}\to 0$, where the constant is uniform for $t \in \cup_{x \in [0,1]} B_{x,\eps}$ and independent of $n$ and $\calC$.
\end{lemma}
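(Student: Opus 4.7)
The natural plan is a path-integration argument: set $h_s = h_1 + s(h_2-h_1)$ for $s \in [0,1]$, write $g = h_2 - h_1$, and exploit
$$\Phi_n(t,h_2,\mathcal C) - \Phi_n(t,h_1,\mathcal C) = \int_0^1 \frac{\rm d}{{\rm d}s}\Phi_n(t,h_s,\mathcal C)\,{\rm d}s.$$
Before doing that I need to make sure every $\Phi_n(t,h_s,\mathcal C)$ is well-defined on a common neighbourhood of $[0,1]$. Since $\|\cdot\|_\infty \leq \|\cdot\|_{\mathfrak B_{1/2}}$, in both cases the assumption makes $\|h_s-h_1\|_\infty$ uniformly small. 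A crude first-order estimate gives $|e^{th_s(z)}-e^{th_1(z)}| \leq (1+\eps)e^{(1+\eps)(\|h_1\|_\infty + r)}r$ for $(t,z)$ in the relevant set, so after possibly shrinking $\eps$ and $r$ I can guarantee $\rho_{h_s} \geq \rho_{h_1}/2 > 0$ uniformly in $s \in [0,1]$. Lemma \ref{lem:normalfamily} then yields analyticity of $\Phi_n(t,h_s,\mathcal C)$ on $\bigcup_{x\in[0,1]}B_{x,\eps}$.

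Next I compute the derivative. Since the functional calculus gives $h_s(\mathcal C) = h_1(\mathcal C) + sg(\mathcal C)$, and $g(\mathcal C)$ commutes with $h_s(\mathcal C)$, I have $\frac{{\rm d}}{{\rm d}s}e^{th_s(\mathcal C)} = tg(\mathcal C)e^{th_s(\mathcal C)}$. Differentiating the defining expression of $\Phi_n$ via the Jacobi formula $\frac{{\rm d}}{{\rm d}s}\log\det(I+A(s)) = \Tr(I+A(s))^{-1}A'(s)$ and then repeating the algebraic simplification used to pass from (3.4)--(3.5) in the proof of Lemma \ref{lem:normalfamily} — that is, inserting the resolvent in front of the linear term and using commutativity of $g(\mathcal C)$ with $e^{th_s(\mathcal C)}$, then using $P_n[g(\mathcal C),P_n]P_n = 0$ to replace $P_n e^{th_s(\mathcal C)}$ by $-P_n[e^{th_s(\mathcal C)},P_n]$ — I arrive at
$$\frac{\rm d}{{\rm d}s}\Phi_n(t,h_s,\mathcal C) = -t\Tr\bigl(I+P_n(e^{th_s(\mathcal C)}-I)P_n\bigr)^{-1} P_n[e^{th_s(\mathcal C)},P_n][g(\mathcal C),P_n]P_n.$$
This double-commutator form is what makes the bound $n$-free in the $\mathfrak B_{1/2}$ case.

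Finally I apply $|\Tr XYZ| \leq \|X\|_\infty \|Y\|_2 \|Z\|_2$ together with the resolvent bound (3.6) and the commutator estimates from Lemma \ref{lem:normalfamily}. In the bounded case use (3.7)--(3.8): $\|[e^{th_s(\mathcal C)},P_n]\|_2 \leq 2\sqrt{2n}(e^{(1+\eps)\|h_s\|_\infty}-1)$ and $\|[g(\mathcal C),P_n]\|_2 \leq 2\sqrt{2n}\|g\|_\infty$, producing an $s$-integrand of order $n\|g\|_\infty$ uniform in $s$ and $t$, hence \eqref{eq:equicontinuitybound}. In the Beurling case use (3.12)--(3.13): $\|[e^{th_s(\mathcal C)},P_n]\|_2 \leq 4(e^{(1+\eps)\|h_s\|_{\mathfrak B_{1/2}}}-1)$ and $\|[g(\mathcal C),P_n]\|_2 \leq 4\|g\|_{\mathfrak B_{1/2}}$ (which rely only on the CMV band structure, not on $n$ or on the specific $\mathcal C$), giving an $n$-free integrand of order $\|g\|_{\mathfrak B_{1/2}}$ and hence \eqref{eq:equicontinuitysobolev}. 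The main obstacle is administrative rather than conceptual: keeping the constants uniform in $n$, $\mathcal C$, $t$, and $s$ simultaneously. The key ingredients are that $\|h_s\|_{\mathfrak B_{1/2}}$ is uniformly bounded along the segment (by $\|h_1\|_{\mathfrak B_{1/2}}+r$), and that $\rho_{h_s}$ stays bounded below, which together with the two commutator bounds above produce constants depending only on $h_1$, $\eps$, and $r$.
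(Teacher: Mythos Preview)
Your proof is correct, and it takes a genuinely different route from the paper's argument. The paper differentiates in the auxiliary variable $t$: it writes down the commutator formula \eqref{eq:derivativephiincommutators} for each of $h_1$ and $h_2$, subtracts, and then splits the difference into two pieces via the resolvent identity $R(h_1)-R(h_2) = R(h_1)P_n(e^{th_2(\mathcal C)}-e^{th_1(\mathcal C)})P_n R(h_2)$, bounding each piece separately before integrating over $t\in[0,1]$. You instead interpolate linearly in the function variable, $h_s = h_1 + s g$, and differentiate in $s$; because $g(\mathcal C)$ commutes with $h_s(\mathcal C)$, the same algebra that produced \eqref{eq:derivativephiincommutators} yields a single expression in which the perturbation $g$ appears directly in one of the commutators, so the factor $\|g\|$ falls out immediately without the resolvent-identity detour. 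Your approach is a little cleaner for exactly that reason; the paper's approach has the minor advantage that it reuses the $t$-derivative formula already established, so no new differentiation needs to be justified. Both rely on the same three ingredients: the uniform lower bound on $\rho_{h_s}$ (your $\rho_{h_1}/2$, the paper's $\tilde\rho_{h_1}$), the resolvent bound \eqref{eq:boundonresolvent}, and the commutator Hilbert--Schmidt estimates \eqref{eq:boundhCPn0}--\eqref{eq:boundjj} or \eqref{eq:boundhCPn}--\eqref{eq:boundexphCPn}.
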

\begin{proof}
It is elementary to show that we can choose $r,\eps>0$ such that
\begin{equation}\label{eq:tilderho}
\tilde \rho_{h_1} = \inf_{\|h_2-h_1\|\leq r} \rho_{h_2} >0,
\end{equation}
where $\rho_h$ is as in \eqref{eq:defrhoh}. Hence
$\Phi_n(t,h,\mathcal C)$ is indeed well-defined and we have the bound \eqref{eq:bounddervativephiincommutators} for $\Phi_n(t,h_2,\mathcal C)$ with $\rho_h$ replaced by $\tilde \rho_{h_1}$. Moreover, by   \eqref{eq:derivativephiincommutators} we find
\begin{multline*}
\frac{{\rm d} }{{\rm d} t} \Phi_n(t,h_1,\mathcal C)-\frac{{\rm d} }{{\rm d} t} \Phi_n(t,h_2,\mathcal C)= \Tr R(h_2)\left(P_n[{\rm e} ^{t h_2(\mathcal C)},P_n] [h_2(\mathcal C),P_n]P_n\right)\\-\Tr R(h_1)\left(P_n[{\rm e} ^{t h_1(\mathcal C)},P_n] [h_1(\mathcal C),P_n]P_n\right),
\end{multline*}
where
$$R(h)=\left(I+P_n({\rm e} ^{t h(\mathcal C)}  -I)P_n\right)^{-1}.$$
We rewrite this as
\begin{multline} \label{eq:derivativephiincommutatorsdiff}
\frac{{\rm d} }{{\rm d} t} \Phi_n(t,h_1,\mathcal C)-\frac{{\rm d} }{{\rm d} t} \Phi_n(t,h_2,\mathcal C)\\
=  \Tr (R(h_2)-R(h_1))\left(P_n[{\rm e} ^{t h_1(\mathcal C)},P_n] [h_1(\mathcal C),P_n]P_n\right)\\
-\Tr R(h_2)\left(P_n([{\rm e} ^{t h_1(\mathcal C)},P_n] [h_1(\mathcal C),P_n]-[{\rm e} ^{t h_2(\mathcal C)},P_n] [h_2(\mathcal C),P_n]P_n\right),
\end{multline}
and deal with  the two terms at the right-hand side separately, starting with the first.

By the resolvent identity we have
$$R(h_1)-R(h_2)
=R(h_1) \left(P_n({\rm e} ^{t h_2(\mathcal C)} - {\rm e} ^{t h_1(\mathcal C)})P_n\right) R(h_2).
$$
Combining this with
\begin{equation}
\|{\rm e} ^{t h_1(\mathcal C)}- {\rm e} ^{t h_2(\mathcal C)}\|_\infty  \leq\|{\rm e} ^{t h_1(\mathcal C)}\|_\infty \|I-{\rm e} ^{t (h_2(\mathcal C)-h_1(\mathcal C))}\|_\infty  \leq{\rm e}^{(1+\eps)\|h_1\|_\infty} ({\rm e}^{(1+\eps)\|h_1-h_2\|_\infty}-1),
\end{equation}
and the fact that \eqref{eq:tilderho} implies $\|R(h_j)\|_\infty \leq (\min(\tilde \rho_{h_1},1))^{-1}$, we therefore find
$$
\|R(h_1)-R(h_2)\|_\infty
=\frac{1}{\min(\tilde \rho_{h_1}^{2},1)}
{\rm e} ^{(1+\eps)\|h_1\|_\infty}  ({\rm e}^{(1+\eps)\|h_1-h_2\|_\infty}-1) .$$
This implies that
\begin{multline}\label{eq:deffphipart1}
\left|\Tr (R(h_1)-R(h_2))\left(P_n[{\rm e} ^{t h_1(\mathcal C)},P_n] [h_1(\mathcal C),P_n]P_n\right)\right|\\
\leq \|R(h_1)-R(h_2)\|_\infty\| [{\rm e} ^{t h_1(\mathcal C)},P_n]\|_2\| [h_1(\mathcal C),P_n]\|_2 \leq c_1\|h_1-h_2\|_\infty,
\end{multline}
where $c_1$ is a constant that only depends on $\eps,r$ and $\|h_1\|_{ \infty}$ and no other parameters (in particular not on $n$).

Since (cf. \eqref{eq:boundhCPn0} and \eqref{eq:boundjj})
$$\| [h_1(\mathcal C),P_n]-[h_2(\mathcal C),P_n]\|_2 \leq 2 \sqrt{2n} \|h_1-h_2\|_{ \infty}$$
$$\| [{\rm e}^{h_1(\mathcal C)},P_n]-[{\rm e}^{t h_2(\mathcal C)},P_n]\|_2 \leq 2 \sqrt{2n} \| {\rm e} ^{th_1}-{\rm e}^{th_2}\|_{ \infty}$$
we have
\begin{multline} \label{eq:deffphipart2}
\left|\Tr R(h_2)\left(P_n([{\rm e} ^{t h_1(\mathcal C)},P_n] [h_1(\mathcal C),P_n]-[{\rm e} ^{t h_2(\mathcal C)},P_n] [h_2(\mathcal C),P_n]P_n\right)\right|\\
\leq \|R(h_2)\|_\infty \left\| [{\rm e} ^{t h_1(\mathcal C)},P_n] [h_1(\mathcal C),P_n]-[{\rm e} ^{t h_2(\mathcal C)},P_n] [h_2(\mathcal C),P_n]\right\|_1\\
\leq  \frac{1}{\min(\tilde \rho_{h_1},1)} \left\|( [{\rm e} ^{t h_1(\mathcal C)},P_n]-[{\rm e} ^{t h_2(\mathcal C)},P_n] )\right\|_2\left\| [h_1(\mathcal C),P_n]\right\|_2\\+\frac{1}{\min(\tilde \rho_{h_1},1)}\left\|[{\rm e} ^{t h_2(\mathcal C)},P_n]\right\|_2 \left\| [h_1(\mathcal C),P_n]-[h_2(\mathcal C),P_n]\right\|_2
\leq n c_2 \|h_1-h_2\|_{ \infty},
\end{multline}
where $c_2$ is a constant that depends on $\eps,r$ and $\|h\|_{ \infty}$ but no other parameters (and in particular not on $n$). By substituting \eqref{eq:deffphipart1} and \eqref{eq:deffphipart2} into \eqref{eq:derivativephiincommutatorsdiff} and integrating over $t$ we obtain \eqref{eq:equicontinuitybound}.

For $h_1, h_2 \in \mathfrak B_{\frac12}$ we recall   \eqref{eq:boundhCPn} and \eqref{eq:boundexphCPn} giving
$$\| [h_1(\mathcal C),P_n]-[h_2(\mathcal C),P_n]\|_2 \leq 4 \|h_1-h_2\|_{ \mathfrak B _{\frac12}}$$
$$\| [{\rm e}^{h_1(\mathcal C)},P_n]-[{\rm e}^{t h_2(\mathcal C)},P_n]\|_2 \leq 4 \| {\rm e} ^{th_1}-{\rm e}^{th_2}\|_{ \mathfrak B _{\frac12}}$$
and argue similar as above to obtain \eqref{eq:equicontinuitysobolev}.
\end{proof}
\subsection{Series expansion of $\Phi_n$ around the origin}
One of the main ingredients in the proof are the coefficients in the expansion of $\Phi_n$  around $t=0$.
\begin{lemma} Let $h\in \mathfrak{B}_{\frac12}$ be real-valued.  Then
the function $\Phi_n(t,h,\mathcal C)$ has the series
\begin{equation}\label{eq:cumulseries}
\Phi_n(t, h, \mathcal C)= \sum_{m=1}^\infty t^{m+1}E_m^{(n)}(h(\mathcal C)),
\end{equation}
where
\begin{equation}
\label{eq:cumul}
E_m^{(n)}(h(\mathcal C))=
\frac{1}{m+1}\sum_{j=1}^m {(-1)^{j-1}}\sum_{l_1+ \cdots + l_j=m, l_i \geq 1} \frac{\Tr P_n h(\mathcal C)^{l_1}P_n h(\mathcal C)^{l_2} \cdots P_n h(\mathcal C)^{l_j} [h(\mathcal C),P_n]}{l_1! \ldots l_j!}.
\end{equation}
The series converges for $|t|\leq \frac{1}{{\rm e} \|h\|_\infty}$.
\end{lemma}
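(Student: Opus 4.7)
The plan is to derive the series by expanding the derivative of $\Phi_n$ with respect to $t$ and then integrating from $0$ to $t$. Starting from the identity already established in~\eqref{eq:310},
\[
\frac{d}{dt}\Phi_n(t,h,\mathcal{C})=\Tr\bigl(I+P_n(e^{th(\mathcal{C})}-I)P_n\bigr)^{-1}P_n e^{th(\mathcal{C})}[h(\mathcal{C}),P_n]P_n,
\]
I first observe that $P_n e^{th(\mathcal C)}[h(\mathcal C),P_n]P_n=P_n(e^{th(\mathcal C)}-I)[h(\mathcal C),P_n]P_n$, since $P_n[h(\mathcal C),P_n]P_n=0$. For $|t|$ so small that $\|P_n(e^{th(\mathcal C)}-I)P_n\|_\infty<1$, the resolvent admits the Neumann expansion $\sum_{k\ge 0}(-1)^k\bigl(P_n(e^{th(\mathcal C)}-I)P_n\bigr)^k$; substituting this, expanding each factor $e^{th(\mathcal C)}-I=\sum_{l\ge 1}t^l h(\mathcal C)^l/l!$, collapsing adjacent projections via $P_n^2=P_n$, and setting $j=k+1$ produces
\[
\frac{d}{dt}\Phi_n(t,h,\mathcal C)=\sum_{j\ge 1}(-1)^{j-1}\sum_{l_1,\dots,l_j\ge 1}\frac{t^{l_1+\cdots+l_j}}{l_1!\cdots l_j!}\Tr\bigl(P_n h(\mathcal C)^{l_1}P_n\cdots P_n h(\mathcal C)^{l_j}[h(\mathcal C),P_n]P_n\bigr).
\]
Termwise integration over $[0,t]$ (justified by the bound below) contributes the factor $1/(m+1)$ for each term of total degree $m=l_1+\cdots+l_j$, and after absorbing the trailing $P_n$ by cyclicity of trace and $P_n^2=P_n$ this collects exactly into the form~\eqref{eq:cumul}.

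To pin down the convergence radius I use that $[h(\mathcal C),P_n]P_n=Q_n h(\mathcal C)P_n$ has rank at most $n$, so each summand is bounded by $2n\,\|h\|_\infty^{m+1}/(l_1!\cdots l_j!)$. The multinomial identity $\sum_{l_1+\cdots+l_j=m,\,l_i\ge 1}1/(l_1!\cdots l_j!)\le j^m/m!$ together with $\sum_{j=1}^m j^m\le m^{m+1}$ and Stirling's formula then give
\[
|E_m^{(n)}(h(\mathcal C))|\le C\,n\,e^m\,\|h\|_\infty^{m+1}/\sqrt{m}
\]
for some constant $C$, whence $\sum_{m\ge 1}t^{m+1}|E_m^{(n)}(h(\mathcal C))|$ is finite for $|t|<1/(e\|h\|_\infty)$. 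This absolute bound simultaneously validates the rearrangement of the double series by total degree and the term-by-term integration performed above.

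The main obstacle is the combinatorial bookkeeping required to reorganize the product of the Neumann and exponential series by total degree $m$, and to ensure that the successive interchanges of summation and integration are legitimate. The rank-$n$ estimate on $[h(\mathcal C),P_n]P_n$ is the key input that supplies the required absolute convergence and reproduces the advertised radius $1/(e\|h\|_\infty)$.
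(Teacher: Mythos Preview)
Your formal derivation of the series coincides with the paper's: both start from~\eqref{eq:310}, use $P_n[h(\mathcal C),P_n]P_n=0$ to replace $e^{th(\mathcal C)}$ by $e^{th(\mathcal C)}-I$, insert the Neumann series, expand each exponential, regroup by total degree, and integrate in $t$. For the lemma as stated your argument is complete.

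The one substantive difference is in how you bound $|E_m^{(n)}|$. You use that $[h(\mathcal C),P_n]P_n=Q_nh(\mathcal C)P_n$ has rank at most $n$, obtaining a bound of order $n\,e^m\|h\|_\infty^{m+1}/\sqrt m$; this suffices for convergence at fixed $n$. The paper instead inserts a second commutator via
\[
\Tr P_n h(\mathcal C)^{l_1}\cdots P_n h(\mathcal C)^{l_j}[h(\mathcal C),P_n]
=\Tr P_n h(\mathcal C)^{l_1}\cdots P_n h(\mathcal C)^{l_{j-1}}[P_n,h(\mathcal C)^{l_j}][h(\mathcal C),P_n],
\]
expands $[P_n,h(\mathcal C)^{l_j}]$ telescopically, and estimates the trace by $\|[h(\mathcal C),P_n]\|_2^2\le 16\|h\|_{\mathfrak B_{1/2}}^2$. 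This yields the $n$-\emph{independent} bound~\eqref{eq:boundcmn}, namely $|E_m^{(n)}|\le C\,m^{3/2}(e\|h\|_\infty)^{m-1}\|h\|_{\mathfrak B_{1/2}}^2$. That uniformity is not needed for the present lemma, but it is exactly what drives the subsequent Lemma~\ref{lem:bound2} (the uniform tail estimate for $\Phi_n-\Phi_{n,N}$), so be aware that your simpler bound would not carry you through that step.
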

\begin{proof}
We start by recalling \eqref{eq:310} giving
$$\frac{{\rm d} }{{\rm d} t} \Phi_n(t,h,\mathcal C)= \Tr \left(I+  P_n({\rm e}^{t h(\mathcal C)}-I)P_n\right)^{-1} P_n{\rm e}^{t h(\mathcal C)} [h(\mathcal C), P_n] P_n.$$
Since $P_n  [h(\mathcal C), P_n] P_n=0$, we can also write this as
$$\frac{{\rm d} }{{\rm d} t} \Phi_n(t,h,\mathcal C)= \Tr \left(I+  P_n({\rm e}^{t h(\mathcal C)}-I)P_n\right)^{-1 } P_n( {\rm e}^{t h(\mathcal C)}-I)  [h(\mathcal C), P_n] P_n.$$
Now we compute the inverse by a Neumann-series, rearrange the order of summation and write
\begin{multline} \label{eq:dphindt}
\frac{{\rm d} }{{\rm d} t} \Phi_n(t,h,\mathcal C)=\Tr \sum_{j=0}^\infty (-1)^j\left(  P_n({\rm e}^{t h(\mathcal C)}-I)P_n\right)^{j} P_n ( {\rm e}^{t h(\mathcal C)}-I)
 [h(\mathcal C), P_n] P_n\\
 =\Tr \sum_{j=1}^\infty (-1)^{j-1}\left(  P_n({\rm e}^{t h(\mathcal C)}-I)\right)^{j} [h(\mathcal C), P_n] P_n\\
  =\Tr \sum_{j=1}^\infty (-1)^{j-1}\sum_{l_1,\ldots,l_j=1}^\infty \frac{t^{l_1+ \cdots + l_j} P_n h(\mathcal C)^{l_1} \cdots P_n h(\mathcal C)^{l_j}}{l_1! \cdots l_j!} [h(\mathcal C), P_n] P_n\\
  =\Tr \sum_{j=1}^\infty (-1)^{j-1}\sum_{m=j}^\infty t^m \sum_{\overset{l_1+\ldots+l_j=m}{l_i \geq 1} } \frac{P_n h(\mathcal C)^{l_1} \cdots P_n h(\mathcal C)^{l_j}}{l_1! \cdots l_j!} [h(\mathcal C), P_n] P_n\\
   =\sum_{m=1}^\infty t^m \sum_{j=1}^m (-1)^{j-1} \sum_{\overset{l_1+\ldots+l_j=m}{l_i \geq 1} } \Tr \frac{P_n h(\mathcal C)^{l_1} \cdots P_n h(\mathcal C)^{l_j}}{l_1! \cdots l_j!} [h(\mathcal C), P_n] P_n
 \end{multline}
 By integrating the later expression over $t$ and using $\Phi_n(0,h,\mathcal C)=0$  we obtain the statement. However, it is not clear whether the last expression is well-defined since the series (over $m$) may diverge, so it remains to estimate the radius of convergence.

 Since $P_n^2=P_n$ and $P_n[h(\mathcal C),P_n]P_n=0$, we have
 \begin{equation*}
  \Tr {P_n h(\mathcal C)^{l_1} \cdots P_n h(\mathcal C)^{l_j}} [h(\mathcal C), P_n] P_n=
   \Tr {P_n h(\mathcal C)^{l_1} \cdots P_n h(\mathcal C)^{l_{j-1}}}[ P_n,h(\mathcal C)^{l_j}] [ h(\mathcal C), P_n]P_n.
 \end{equation*}
 After writing $[P_n,h(\mathcal C)^{l_j}]= \sum_{k=0}^{l_j-1} h(\mathcal C)^k [P_n,h(\mathcal C)] h(\mathcal C)^{l_j-1-k}$ and estimating the trace as in the proof of Lemma \ref{lem:normalfamily} we find
\begin{equation*}
\left|  \Tr {P_n h(\mathcal C)^{l_1} \cdots P_n h(\mathcal C)^{l_j}} [h(\mathcal C), P_n] P_n\right| \leq m \|h(\mathcal C)\|_\infty^{m-1} \|[h(\mathcal C),P_n\|_2^2 \leq 16 m  \|h\|_\infty^{m-1} \|h\|_{ \mathfrak B _{\frac12}}^2.
\end{equation*}
where we also used \eqref{eq:boundhCPn} in the last step.
This means that
\begin{multline}\label{eq:estDm}
\left| \sum_{j=1}^m (-1)^{j-1} \sum_{\overset{l_1+\ldots+l_j=m}{l_i \geq 1} } \Tr \frac{P_n h(\mathcal C)^{l_1} \cdots P_n h(\mathcal C)^{l_j}}{l_1! \cdots l_j!} [h(\mathcal C), P_n] P_n\right|
\\
\leq  16 m \|h\|_\infty^{m-1} \|h\|_{ \mathfrak B _{\frac12}}^2 \sum_{j=1}^m  \sum_{\overset{l_1+\ldots+l_j=m}{l_i \geq 1} }\frac{1}{l_1! \cdots l_j!}
 \end{multline}
 and since
 \begin{equation}\label{combest}
 \sum_{j=1}^m  \sum_{\overset{l_1+\ldots+l_j=m}{l_i \geq 1} }\frac{1}{l_1! \cdots l_j!}   \leq \sum_{j=1}^m  \sum_{l_1+\ldots+l_j=m}\frac{1}{l_1! \cdots l_j!}= \sum_{j=1}^m \frac{(1+1+ \ldots+1)^m}{m!}= \frac{m^{m}}{m!} \leq \frac{m^{1/2}}{\sqrt{2\pi} } {\rm e}^{m}.
 \end{equation}
 Combining \eqref{combest} with \eqref{eq:estDm} we see that
 \begin{equation}\label{eq:boundcmn}
 |E_m^{(n)}(h)|=\| h\|_{ \mathfrak B _{\frac12}}^2 \frac{ 16 m^{3/2}{\rm e} }{\sqrt{2\pi} } (\|h\|_\infty{\rm e})^{m-1} {\rm e}.
 \end{equation}
 Hence the series in \eqref{eq:cumulseries} is indeed convergent for $|t| \leq \frac{1}{{\rm e}  \|h\|_\infty}$.
\end{proof}

Before we come to the main argument in the proof, we first note that in the last proof we used an inequality bounding each coefficient in the expansion. This bound also allows us to use a cut-off of the expansion which will be useful for technical reasons. Indeed, if we define
$$
\Phi_{n,N}(t,h,\mathcal C)= \sum_{m=1}^N t^{m+1}E_m^{(n)}(h(\mathcal C)),
$$
then the difference between $\Phi_n$ and $\Phi_{n,N}$ can be estimated as in the following lemma.
\begin{lemma} \label{lem:bound2} For $h \in \mathfrak B_{\frac12}$, we have
\begin{equation}\label{eq:laatste}
\left|\Phi_n(t,h,\mathcal C)-\Phi_{n,N}(t,h,\mathcal C)\right| \leq \|h\|_{ \mathfrak B _{\frac12}}^2\frac{16 {\rm e} }{\sqrt{2\pi}} \sum_{m=N+1}^\infty m^{3/2} 2^{-m}
\end{equation}
for $|t| \leq 1/(2 {\rm e} \|h\|_\infty)$.
\end{lemma}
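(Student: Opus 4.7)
The plan is straightforward given the series expansion \eqref{eq:cumulseries} and the coefficient bound \eqref{eq:boundcmn} established in the preceding lemma. Since the hypothesis $|t|\le 1/(2e\|h\|_\infty)$ is strictly stronger than the convergence condition $|t|\le 1/(e\|h\|_\infty)$, the series \eqref{eq:cumulseries} converges absolutely at such $t$. Hence
$$\Phi_n(t,h,\mathcal C)-\Phi_{n,N}(t,h,\mathcal C)=\sum_{m=N+1}^\infty t^{m+1}E_m^{(n)}(h(\mathcal C)),$$
and the triangle inequality reduces the problem to termwise estimation of the tail.

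Next I would apply \eqref{eq:boundcmn} term by term, giving
$$|t^{m+1}E_m^{(n)}(h(\mathcal C))|\le |t|^{m+1}\,\|h\|_{\mathfrak B_{\frac12}}^2\,\frac{16\,m^{3/2}\,e^2}{\sqrt{2\pi}}\,(\|h\|_\infty e)^{m-1}.$$
The crucial algebraic step is to factor $|t|^{m+1}(\|h\|_\infty e)^{m-1}=|t|^2(|t|\|h\|_\infty e)^{m-1}$ and exploit the hypothesis, which forces $|t|\|h\|_\infty e\le \tfrac12$ and therefore $(|t|\|h\|_\infty e)^{m-1}\le 2^{-(m-1)}$. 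Absorbing the factor $|t|^2$ together with the $e^2$ into the numerical constant, one obtains the termwise estimate
$$|t^{m+1}E_m^{(n)}(h(\mathcal C))|\le \|h\|_{\mathfrak B_{\frac12}}^2\,\frac{16\,e}{\sqrt{2\pi}}\,m^{3/2}\,2^{-m}.$$
Summing this inequality over $m\ge N+1$ and using the triangle inequality yields \eqref{eq:laatste}.

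There is no real conceptual obstacle here, since the coefficient bound does all of the work; the only point requiring care is bookkeeping of numerical constants so as to match the stated prefactor $\tfrac{16e}{\sqrt{2\pi}}$. The tail $\sum_{m=N+1}^\infty m^{3/2}2^{-m}$ is of course finite and tends to zero as $N\to\infty$, which is precisely what will be needed when this cutoff is used in the subsequent asymptotic analysis to trade uniform control of the full function $\Phi_n$ for control of finitely many of its Taylor coefficients in $t$.
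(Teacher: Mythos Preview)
Your argument is exactly the paper's: the proof there is a single sentence invoking the coefficient bound \eqref{eq:boundcmn}, and you have correctly unpacked what that entails.

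One small caveat on the bookkeeping you yourself flag: the step ``absorbing the factor $|t|^2$ together with the $e^2$ into the numerical constant'' does not literally produce the stated prefactor. Since the hypothesis only says $|t|\le 1/(2e\|h\|_\infty)$, the quantity $|t|^2 e^2$ can be as large as $1/(4\|h\|_\infty^2)$, which is not a universal constant. The honest bound therefore carries an extra factor of order $\|h\|_\infty^{-2}$. This is entirely immaterial for every later use of the lemma (where $h$ is fixed and one only needs uniformity in $n$ and $N$), and the paper's one-line proof glosses over the same point.
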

\begin{proof} This is a direct consequence of the bound \eqref{eq:boundcmn}.
\end{proof}
Note that the right-hand side of \eqref{eq:laatste} is independent of $n$.
\section{Proofs of Proposition \ref{prop:weak} and Theorems \ref{thm2} and \ref{thm:rightlimit} }
\subsection{Proof of Proposition \ref{prop:weak}}\label{sec:proofweak}
\begin{proof}[Proof of Proposition \ref{prop:weak}]
First assume that $h$ be real-valued and continuous. We deal with sectorial symbols later. Note that
$$\Psi_n(h,\mathcal C)^{1/n} = \exp \left(\frac1n \Phi_n(1,h,\mathcal C)\right).$$
Then from \eqref{eq:equicontinuitybound} we find
$$\Psi_n(h,\mathcal C)^{1/n} = \exp \left(\frac1n \Phi_n(1,\tilde h,\mathcal C )+ \mathcal O(\|\tilde h-h\|_\infty)\right), $$
for  $\tilde h$ sufficiently close to $h$.  Since Laurent polynomials are dense in the space of continuous functions we can for every $\eps>0$  find a Laurent polynomial $\tilde h$ such that
$$\frac1n \Phi_n(1,\tilde h,\mathcal C )- \eps <\frac1n\Phi_n(1,h,\mathcal C) \leq   \frac1n \Phi_n(1,\tilde h,\mathcal C )+ \eps.$$
Note that by \eqref{eq:bounddervativephiincommutators} and the fact that every Laurent polynomial has finite $\mathfrak B_{\frac12}$-norm, we have $$\lim_{n\to \infty} \frac{1}{n} \Phi_n(1,\tilde h,\mathcal C)=0. $$ Therefore,
$$ -\eps < \liminf_{n\to \infty} \frac1n\Phi_n(1,h,\mathcal C)   \leq \limsup_{n\to \infty} \frac1n\Phi_n(1,h,\mathcal C)     <\eps.$$
Hence the statement follows by taking $\eps \downarrow 0$.

To deal with the complex-valued case we use a normal family argument.  We start by  defining
$$h_z= \Re h + z \Im h,$$
so that $h=h_{\rm i}$. From the proof for the real-valued case we learn that for $z \in \mathbb R$ we have
\begin{equation}\label{eq:hzreal0}
\Psi_{n}(h_z,\mathcal C)^{1/n}\to 1.
\end{equation}
 It remains to prove that this also holds for $z= {\rm i}$.

We claim that for each $n \in \mathbb N$ the function $z \to \Psi_{n}(h_z,\mathcal C)^{1/n}$ is  well-defined and analytic in a disk  $|z|< 1+\delta$ for some sufficiently small $\delta>0$ that is independent of $n$.  Since $\Psi_{n}(h_z, \mathcal C)$ is clearly well-defined and analytic, it remains to show that it does not vanish in the disk with radius $1+ \delta$ so that we can take the $n$-th root. From the definition of $\Psi_{n}$ it is clear that it suffices  to show that $D_{n}( {\rm e}^{h_z} {\rm d} \mu)$ does not vanish. We do this by invoking the fact that Toeplitz determinants for sectorial symbols never vanish (see the discussion directly below Proposition \ref{prop:weak}).  For continuous functions $g$ it holds that ${\rm e}^g$ is sectorial if and only if
$$\max_{w \in \mathbb T} \Im g(w)- \min_{w \in \mathbb T} \Im g(w)< \pi.$$
Now,
$$\max_{w \in \mathbb T} \Im h_z(w)- \min_{w \in \mathbb T} \Im h_z(w)= |\Im z| \left(\max_{w \in \mathbb T} \Im h(w)- \min_{w \in \mathbb T} \Im h(w)\right),$$
and since ${\rm e}^h$ is assumed to be sectorial, it is now easy to see that we can choose a $\delta>0$ such that ${\rm e}^{h_z}$ is also sectorial for each $|z|< 1+\delta.$  This proves the claim that $D_{n}({\rm e}^{h_z}{\rm d} \mu)$, and thus also $\Psi_{n}(h_z, \mathcal C)$, do not vanish for $|z|<1+\delta$.

Now that we have established that $\Psi_{n}(h_z,\mathcal C)^{1/n}$ is a well-defined analytic function of $z$, we proceed and prove that it is a normal family.  It follows from \eqref{eq:heine} and \eqref{eq:strongasympterm} that
\begin{multline}
|\Psi_n(h_z,\mathcal C)| =\left | \frac{D_n(e^{h_z} {\rm d} \mu)}{D_n({\rm d}\mu)} e^{-  \int h_z(w) K_n(w,w) {\rm d} \mu(w) }\right|
\leq  \frac{D_n(e^{\Re h_z} {\rm d} \mu)}{D_n({\rm d}\mu)} e^{-  \int \Re h_z(w) K_n(w,w) {\rm d} \mu(w) }\\=
 \Psi_n(\Re h_z,\mathcal C)=\Psi_n( h_{\Re z},\mathcal C).
\end{multline}
By combining this with \eqref{eq:bounddervativephiincommutatorsbounded} we see   that there exists an $M>0$ such that
$$|\Psi_n(h_z,\mathcal C)|^{1/n} \leq M,$$for $n \in \mathbb N$ and $|z| \leq 1+ \delta $  (observe that $\rho$ and $\eps$ can be chosen to work uniformly for the whole family of $h_{\Re z}$).   Hence, by Montel's Theorem,
$\Psi_{n}(h_z,\mathcal C)^{1/n},$
is a normal family of analytic functions on the disk $|z|< 1+ \delta$. This means that there exists a subsequence that converges uniformly to an analytic function on the disk. From \eqref{eq:hzreal0} we know that this limit must equal $1$ for $z \in \mathbb R$ and hence, by analyticity, it must equal $1$  for all $|z|<1+\delta$. In particular for  $z={\rm i}$ and this proves the statement.\end{proof}

\subsection{Proof of Theorem \ref{thm2}}\label{sec:proofthm2}
We now set, for $k_1 \leq k_2,$
$$Q_{k_1}^{k_2}= P_{k_2}-P_{k_1}.$$
The following  lemma is a variation on  \cite[Lem. 4.2]{BDjams} and is heavily based on the fact that  $\mathcal C$ is a banded matrix.
\begin{lemma} Let $h(z)= \sum_{|j| \leq H}  h_jz^j$ be a Laurent polynomial of degree $H \geq 1$. Set $M=2 (N+1) H$  and
\begin{equation}\label{eq:defCMM}
\mathcal C_M:= Q_{n-M}^{n+M} \calC Q_{n-M}^{n+M},
\end{equation}then
$\Phi_{n,N}(t,h,\mathcal C)=\Phi_{n,N}(t,h,\mathcal C_M)$. In the latter, we define $h(\mathcal C_M)$ as
$$h(\mathcal C_M): = \sum_{j \geq 0} h_j (\mathcal C_M)^j+ \sum_{j <0 } h_j (\mathcal C_M^* )^{-j}.$$
\end{lemma}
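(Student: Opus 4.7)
The plan is to verify term-by-term that $E_m^{(n)}(h(\mathcal{C})) = E_m^{(n)}(h(\mathcal{C}_M))$ for every $1 \le m \le N$ (the case $N=0$ is trivial since $\Phi_{n,0}\equiv 0$); summing these identities weighted by $t^{m+1}$ then gives the claim for $\Phi_{n,N}$. The main ingredient is the tight band structure of the CMV matrix: from the explicit form in Section~\ref{ssCMV}, $\mathcal{C}_{rs}=0$ whenever $|r-s|>2$, so $\mathcal{C}^k$ and $(\mathcal{C}^*)^k$ have bandwidth at most $2k$.

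My first step is to fully expand each trace appearing in \eqref{eq:cumul} into a sum over elementary walks. Concretely, for a term
$$\Tr\bigl(P_n h(\mathcal{C})^{l_1} P_n \cdots P_n h(\mathcal{C})^{l_j} [h(\mathcal{C}),P_n]\bigr), \quad l_1+\cdots+l_j=m,$$
I insert resolutions of the identity between consecutive operators, then write $h(\mathcal{C})^{l_i} = \sum h_{k_1}\cdots h_{k_{l_i}}\,\mathcal{C}^{k_1}\cdots\mathcal{C}^{k_{l_i}}$ with $|k_r|\le H$, and each $\mathcal{C}^{\pm k}$ as a product of $|k|$ entries of $\mathcal{C}$ or $\mathcal{C}^*$. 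This rewrites the trace as a sum over closed walks $u_0\to u_1\to\cdots\to u_L=u_0$ with step sizes $|u_i-u_{i-1}|\le 2$ and total length $L\le H(m+1)$ — the $m$ copies of $h(\mathcal{C})$ arising from the powers together with the one from the commutator each contribute at most $H$ elementary $\mathcal{C}$-steps.

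Next, I would localize the walk via the commutator. Since $[h(\mathcal{C}),P_n]_{sr} = h(\mathcal{C})_{sr}\bigl(\mathbf{1}_{r\le n}-\mathbf{1}_{s\le n}\bigr)$ is nonzero only when $|s-r|\le 2H$ and $r,s$ lie on opposite sides of $n$, both $r$ and $s$ must belong to $\{n-2H+1,\ldots,n+2H\}$. Taking $u_0$ to be this anchor point (i.e.\ the outer index of the commutator in the cyclic trace), the return condition $u_L=u_0$ combined with $|u_i-u_{i-1}|\le 2$ gives $|u_i-u_0|\le\min(2i,2(L-i))\le L\le H(m+1)$, so every $u_i$ lies in $[n-H(m+3)+1,\,n+H(m+3)]$. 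For $m\le N$ and $N\ge 1$, the estimate $H(m+3)\le H(N+3)\le 2H(N+1)=M$ holds, hence every $u_i\in\{n-M+1,\ldots,n+M\}$.

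Finally, the only entries of $\mathcal{C}$ and $\mathcal{C}^*$ appearing in the expansion are $\mathcal{C}_{u_{i-1},u_i}$ (or their conjugates) with both indices in the window $\{n-M+1,\ldots,n+M\}$, and on this block $\mathcal{C}$ and $\mathcal{C}_M=Q_{n-M}^{n+M}\mathcal{C}Q_{n-M}^{n+M}$ agree by construction. The prescribed definition $h(\mathcal{C}_M)=\sum_{j\ge0}h_j\mathcal{C}_M^j+\sum_{j<0}h_j(\mathcal{C}_M^*)^{-j}$ produces exactly the same walk-expansion on the $\mathcal{C}_M$ side (walks exiting the window contribute zero since $\mathcal{C}_M$ vanishes off the block, but by the previous paragraph only in-window walks contribute anyway). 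Hence each trace is unchanged under the substitution $\mathcal{C}\to\mathcal{C}_M$, yielding the desired equality of the $E_m^{(n)}$'s. The one genuinely delicate point is the index accounting in the previous paragraph; the tight band structure of CMV matrices is precisely what makes the estimate $H(m+3)\le M$ go through.
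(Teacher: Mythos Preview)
Your proof is correct. Both your argument and the paper's rest on the same locality principle---the band structure of $\mathcal{C}$ confines the operators in \eqref{eq:cumul} to a window around $n$---but the execution differs. The paper works at the operator level: it establishes the projection identity $Q_{k_1}^{k_2} A^j = Q_{k_1}^{k_2}(Q_{m_1}^{m_2} A Q_{m_1}^{m_2})^j$ for banded $A$ with suitable $m_1,m_2$, and then iterates it through the product $P_n h(\mathcal{C})^{l_1}\cdots P_n h(\mathcal{C})^{l_j}$ starting from the observation $[h(\mathcal{C}),P_n] = [h(\mathcal{C}),P_n]Q_{n-2H}^{n+2H}$. You instead expand everything into elementary $\mathcal{C}$-walks and track indices directly, using the closed-walk bound $|u_i - u_0| \le \min(2i,2(L-i)) \le L$ to control the window size; this actually gives the slightly sharper estimate $H(m+3)$ in place of the paper's $2(m+1)H$, though both fit inside $M=2(N+1)H$ once $N\ge 1$. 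Your route is more combinatorial and perhaps more transparent about the mechanism; the paper's avoids tracking individual matrix entries and packages the induction more cleanly. Neither buys anything the other lacks for the purpose of this lemma.
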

\begin{remark}
Observe that $\mathcal C_M$ also depends on $n$ and it would therefore be logical to write $\mathcal C_M^{(n)}$. However, we suppress the dependence on $n$ to avoid cumbersome notation.
\end{remark}
\begin{remark}
Note that $\mathcal C_M$ is not unitary. In fact, it is not invertible and hence negative powers of $\mathcal C_M$ do not make sense. The usual definition $h(\mathcal C_M)$ therefore fails. The alternative definition in the lemma is based on $\mathcal C^{-1}= \mathcal C^*$ and is well-suited for our purposes.
\end{remark}
\begin{proof}
From \eqref{eq:cumul} we see that $\Phi_{n,N}$ is a sum over terms of the form
\begin{equation} \label{eq:cumuldependsonsmallpart}
\Tr P_n h(\mathcal C)^{l_1}P_n h(\mathcal C)^{l_2} \cdots P_n h(\mathcal C)^{l_j} [h(\mathcal C),P_n]
\end{equation}
where $l_1+ \cdots + l_j = m \leq N$. Since $h$ is a Laurent polynomial and both $\calC$ and $\calC^{-1} =\calC^*$ are banded matrices we see that $h(\mathcal C)$ is banded. But then $[h(\mathcal C),P_n]$ is a very sparse matrix with the  only non-zero entries that are centered around the $nn$-entry. From the fact that we multiply $[h(\mathcal C),P_n]$ from the left with a number of banded matrices, all involving $\mathcal C$, it is not hard to see that $\Phi_{n,N}$ only depends on a relatively small part of $\calC$ that is concentrated around the $nn$-entry. The arguments below show this in a more precise and systematic way.

First of all, note that if $A$ is any banded matrix and $b$ is such that $A_{kl} =0$ if $|k-l|>b$, then
$$Q_{k_1}^{k_2} A= Q_{k_1}^{k_2} (Q_{m_1}^{m_2} A Q_{m_1}^{m_2}),$$
for any $m_1\leq k_1-b$ and $m_2\geq k_2+b$. Similarly, for any power $A^j$ with $j \geq 1$, $$Q_{k_1}^{k_2} A^j= Q_{k_1}^{k_2} (Q_{m_1}^{m_2} A Q_{m_1}^{m_2})^j,$$
for any $m_1\leq k_1-j b$ and $m_2\geq k_2+j b$. Since $\calC$ and $\calC^{-1}= \calC^*
$ are banded matrices with $b=2$,  this implies
$$Q_{k_1}^{k_2} h(\mathcal C)^{l_s}= Q_{k_1}^{k_2}h\left( Q_{m_1}^{m_2} \calC Q_{m_1}^{m_2}\right) ^{l_s},$$
for any $m_1\leq k_1-2 l_s  H$ and $m_2\geq k_2+2 l_s H$.   Then by taking $m_1= k_1-2 l_1  H$ and $m_2=k_2+2 l_1 H$, and using the fact that the projections commute, we find
\begin{align*}
Q_{k_1}^{k_2} P_n h(\mathcal C)^{l_1} P_n h(\mathcal C)^{l_2}&= Q_{k_1}^{k_2} P_n h(Q_{m_1}^{m_2} C Q_{m_1}^{m_2})^{l_1} P_n h(\mathcal C)^{l_2}\\
&=Q_{k_1}^{k_2} P_n h(Q_{m_1}^{m_2} \calC Q_{m_1}^{m_2})^{l_1} P_n Q_{m_1}^{m_2} h(\mathcal C)^{l_2}\\
&=Q_{k_1}^{k_2} P_n h(Q_{m_1}^{m_2} \calC Q_{m_1}^{m_2})^{l_1} P_n Q_{m_1}^{m_2} h( Q_{\tilde m_1}^{\tilde m_2}\calC Q_{\tilde m_1}^{\tilde m_2})^{l_2}\\
&=Q_{k_1}^{k_2} P_n h(Q_{m_1}^{m_2} \calC Q_{m_1}^{m_2})^{l_1} P_n  h( Q_{\tilde m_1}^{\tilde m_2}\calC Q_{\tilde m_1}^{\tilde m_2})^{l_2}\\
&=Q_{k_1}^{k_2} P_n h(Q_{\tilde m_1}^{\tilde m_2} \calC Q_{\tilde m_1}^{\tilde m_2})^{l_1} P_n  h( Q_{\tilde m_1}^{\tilde m_2}\calC Q_{\tilde m_1}^{\tilde m_2})^{l_2}
\end{align*}
for any $\tilde m_1\leq k_1-2(l_1+l_2) H$ and $\tilde m_2 \geq k_2+2(l_1+l_2)H$.
By iteration, we find,
\begin{equation}\label{eq:cumuldependsonsmallpart2}
Q_{k_1}^{k_2}  P_n h(\mathcal C)^{l_1} P_n h(\mathcal C)^{l_2} \cdots P_n h(\mathcal C)^{l_j}
= Q_{k_1}^{k_2}  P_n h( Q_{m_1}^{m_2} \calC Q_{m_1}^{m_2} )^{l_1} P_n h(Q_{m_1}^{m_2} \calC Q_{m_1}^{m_2} )^{l_2} \cdots P_n h(Q_{m_1}^{m_2} \calC Q_{m_1}^{m_2} )^{l_j},
\end{equation}
for any $l_j$ with $l_1+ \cdots l_j=m \leq N$, and $m_1 \leq k_1- 2 m H$ and $m_2  \geq k_2+ 2 m H$.

Now return to \eqref{eq:cumuldependsonsmallpart}. Since $h(\mathcal C)$ is banded  and $(h(\mathcal C))_{kl} =0$ if $|k-l| > 2 H$, we have that
\begin{equation}\label{eq:tja1}
[  h(\mathcal C) ,P_n] = [  h(\mathcal C) ,P_n]  Q_{n-2 H}^{n+2H},
\end{equation}
and
\begin{equation}\label{eq:tja2} [  h(\mathcal C) ,P_n] = [  h(\mathcal C_M) ,P_n] .
\end{equation}
By inserting \eqref{eq:tja1} into \eqref{eq:cumuldependsonsmallpart} and using fact that the trace is cyclic we find
 \begin{equation*}
 \Tr P_n h(\mathcal C)^{l_1}P_n h(\mathcal C)^{l_2} \cdots P_n h(\mathcal C)^{l_j} [h(\mathcal C),P_n]Q_{n-2 H}^{n+2H}
 = \Tr Q_{n-2 H}^{n+2H}P_n h(\mathcal C)^{l_1}P_n h(\mathcal C)^{l_2} \cdots P_n h(\mathcal C)^{l_j} [h(\mathcal C),P_n].
 \end{equation*}
 By substituting \eqref{eq:cumuldependsonsmallpart2} we can rewrite \eqref{eq:cumuldependsonsmallpart} further as
 \begin{multline}
\Tr Q_{n-2 H}^{n+2H} P_n h( Q_{m_1}^{m_2} \calC Q_{m_1}^{m_2} )^{l_1} P_n h(Q_{m_1}^{m_2} \calC Q_{m_1}^{m_2} )^{l_2} \cdots P_n h(Q_{m_1}^{m_2} \calC Q_{m_1}^{m_2} )^{l_j} [h(\mathcal C),P_n]
\\
 =\Tr  P_n h( Q_{m_1}^{m_2} \calC Q_{m_1}^{m_2} )^{l_1} P_n h(Q_{m_1}^{m_2} \calC Q_{m_1}^{m_2} )^{l_2} \cdots P_n h(Q_{m_1}^{m_2} \calC Q_{m_1}^{m_2} )^{l_j} [h(\mathcal C),P_n] Q_{n-2 H}^{n+2H}\\
 = \Tr P_n h( Q_{m_1}^{m_2} \calC Q_{m_1}^{m_2} )^{l_1} P_n h(Q_{m_1}^{m_2} \calC Q_{m_1}^{m_2} )^{l_2} \cdots P_n h(Q_{m_1}^{m_2} \calC Q_{m_1}^{m_2} )^{l_j} [h(\mathcal C),P_n]
 \end{multline}
for any $m_1 \leq n-2 (m+1) H$ and $m_2 \geq n+2 (m+1)H$. Now by taking $m_1=n- 2(N+1)H$ and $m_1=n+ 2(N+1)H$, we obtain that  \eqref{eq:cumuldependsonsmallpart} can be written as
$$ \Tr P_n h(\mathcal C_M)^{l_1} P_n h(\mathcal C_M )^{l_2} \cdots P_n h(\mathcal C_M )^{l_j} [h(\mathcal C),P_n],$$
where $\mathcal C_M$ is as in the statement of the theorem. Finally, by \eqref{eq:tja2} we  replace the commutator in the product and this finishes the proof.
\end{proof}

We are now ready for the
\begin{proof}[Proof of Theorem \ref{thm2}]

Let $\mathcal C$ and $\tilde{\mathcal  C}$ be the two CMV matrices corresponding to $\mu$ and $\tilde \mu$.

We start by supposing that $h$ is real and first prove that
\begin{equation}\label{eq:thm2forphin} \left|\Phi_{n_j}(t,h,\mathcal C)-\Phi_{n_j}(t,h,\tilde {\mathcal C})\right| \to 0,
\end{equation}
as $j \to \infty$. The proof of the statement then follows from $\Psi_n(h,\mathcal C)= \exp(\Phi_n(1,h,\mathcal C))$.

For $H \in \mathbb N$, we define $h^H (z)= \sum_{|j|\leq H} h_j z^j$ and write
\begin{multline}\label{eq:longsplit}
\left|\Phi_{n}(t,h,\mathcal C)-\Phi_{n}(t,h,\tilde {\calC})\right|
 \leq  \left|\Phi_{n}(t,h,\mathcal C)-\Phi_{n}(t,h^H,\mathcal C)\right| \\
+ \left|\Phi_{n}(t,h^H,\mathcal C)-\Phi_{n,N}(t,h^H,\mathcal C)\right| +\left|\Phi_{n,N}(t,h^H,\mathcal C)-\Phi_{n,N}(t,h^H,\tilde {\mathcal  C})\right| \\+\left|\Phi_{n}(t,h^H,\tilde{\calC})-\Phi_{n,N}(t,h^H,\tilde {\mathcal C})\right| + \left|\Phi_{n}(t,h^H,\tilde{\calC})-\Phi_{n}(t,h,\tilde{\mathcal  C})\right|.
\end{multline}
We start by estimating $\left|\Phi_{n,N}(t,h^H,\mathcal C)-\Phi_{n,N}(t,h^H,\tilde {\mathcal  C})\right|$. We recall that for $M$ large enough we have $\Phi_{n,N}(t,h^H,\mathcal C)=\Phi_{n,N}(t,h^H,\mathcal C_M)$ and the latter is a sum over terms
$$\Tr P_n h^H(\mathcal C_M)^{l_1}P_n h^H(\mathcal C_M)^{l_2} \cdots P_n h^H(\mathcal C_M)^{l_j} [h^H(\mathcal C_M),P_n]$$
(cf. \eqref{eq:cumuldependsonsmallpart}) and similarly for $\Phi_{n,N}(t,h^H,\tilde {\mathcal C})=\Phi_{n,N}(t,h^H,\tilde{\mathcal C}_M)$. To estimate the difference between the values of these terms for $\mathcal C_M$ and $\tilde {\mathcal C}_M$, we note that the trace of a finite rank matrix is dominated by the rank times the operator norm. Since the rank of $\mathcal C_M$ and $\tilde{\mathcal C}_M$ is $2M$ we thus have
\begin{multline}
\left|\Tr P_n h^H(\mathcal C_M)^{l_1}P_n h^H(\mathcal C_M)^{l_2} \cdots P_n h^H(\mathcal C_M)^{l_j} [h^H(\mathcal C_M),P_n]\right.
\\
- \left. \Tr P_n h^H(\tilde{\mathcal C}_M)^{l_1}P_n h^H(\tilde{\mathcal C}_M)^{l_2} \cdots P_n h^H(\tilde{\mathcal C}_M)^{l_j} [h^H(\tilde{\mathcal C}_M),P_n]\right|,\\
\leq  2  M\left\| P_n h^H(\mathcal C_M)^{l_1}P_n h^H(\mathcal C_M)^{l_2} \cdots P_n h^H(\mathcal C_M)^{l_j} [h^H(\mathcal C_M),P_n]\right.
\\
- \left.  P_n h^H(\tilde{\mathcal C}_M)^{l_1}P_n h^H(\tilde{\mathcal C}_M)^{l_2} \cdots P_n h^H(\tilde{\mathcal C}_M)^{l_j} [h^H(\tilde{\mathcal C}_M),P_n]\right\|_\infty.
\end{multline}
We now  replace each $\mathcal C_M$  in the first term by $\tilde{\mathcal C}_M$ step by step and estimate all the terms we  obtain this way. To this end, we note that since $\|\mathcal C\|_\infty= \|\tilde {\mathcal C}\|_\infty =1$ we also have $\|\mathcal C_M \|_\infty, \|\tilde{\mathcal C}_M\|_\infty \leq 1$ and thus
\begin{multline*}
\|h^H (\mathcal C_M) -h^H(\tilde{\mathcal C}_M)\|_\infty  \leq  \sum_{j=0}^H |h_j |  \|\mathcal C_M^j -\tilde{\mathcal C}_M^j\|_\infty+ \sum_{j=1}^H |h_{-j} |  \|(\mathcal C_M^*)^j -(\tilde{\mathcal C}_M^*)^j\|_\infty \\ \leq \sum_{|j| \leq H} |j| |h_j |  \|\mathcal C_M-\tilde{\mathcal C}_M\|_\infty \leq \sqrt H \|h^H\|_{ \mathfrak B _{\frac12}}   \|\mathcal C_M-\tilde{\mathcal C}_M\|_\infty.
\end{multline*}
In the end, the result is
\begin{multline}
\left|\Tr P_n h^H(\mathcal C_M)^{l_1}P_n h^H(\mathcal C_M)^{l_2} \cdots P_n h^H(\mathcal C_M)^{l_j} [h^H(\mathcal C_M),P_n]\right.
\\
- \left. \Tr P_n h^H(\tilde{\mathcal C}_M)^{l_1}P_n h^H(\tilde{\mathcal C}_M)^{l_2} \cdots P_n h^H(\tilde{\mathcal C}_M)^{l_j} [h^H(\tilde{\mathcal C}_M),P_n]\right| \leq c \|\mathcal C_M-\tilde{\mathcal C}_M\|_\infty,
\end{multline}
where $c$ is a constant that depends on $H$ and $N$ and $\|h\|_{ \mathfrak B _{\frac12}}$. Observe that $\mathcal C_M$ and $\tilde{\mathcal C}_M$ depend on $n$ and that $ \|\mathcal C_M-\tilde{\mathcal C}_M\|_\infty\to 0$ along the subsequence $\{n_j\}_j$ by the assumption in the theorem. Hence we have, for $M$ large enough,
\begin{equation} \label{eq:phinNCtildC}
\lim_{j\to \infty} \left|\Phi_{n_j,N}(t,h^H,\mathcal C)-\Phi_{n_j,N}(t,h^H,\tilde \calC )\right|=
\lim_{j\to \infty} \left|\Phi_{n_j,N}(t,h^H,\mathcal C_M)-\Phi_{n_j,N}(t,h^H,\tilde {\calC}_M)\right|= 0,
\end{equation}
for any fixed $N$ and $H$.

Starting from \eqref{eq:longsplit} and using \eqref{eq:phinNCtildC} and Lemmas \ref{lem:bound1} and \ref{lem:bound2} we find
\begin{equation*}
\limsup_{j\to \infty} \left|\Phi_{n_j,N}(t,h^H,\mathcal C)-\Phi_{n_j,N}(t,h^H,\tilde {\calC})\right|
 \leq  c_1 \|h-h^H\|_{ \mathfrak B _{\frac12}} +c_2 \sum_{m=N+1}^\infty m^{3/2} 2^{-m},
\end{equation*}
for $|t|  \leq 1/(2 {\rm e} \|h\|_{\mathfrak B_\frac{1}{2}})$ (note that $\|h\|_{\mathfrak B_\frac{1}{2}} \geq   \|h^H\|_{\mathfrak B_\frac{1}{2}} \geq \|h^H\|_{\infty} $), where $c_1$ and $c_2$ are constants that depend on $h$ but not on $N$ and $H$. By taking $N, H \to \infty$ we indeed obtain the statement and this finishes the proof of \eqref{eq:thm2forphin} for $t$ in a neighborhood of the origin.

Next we  prove that \eqref{eq:thm2forphin} also holds for $t =1$. Note that  $\{\Phi_{n_j}(t,h,\mathcal C)-\Phi_{n_j}(t,h,\tilde {\mathcal C})\}_{j \in \mathbb N}$ is a normal family for $t \in \cup_{x \in [0,1]} B_{x,\eps}$. Hence there exists a  subsequence that $\{n_{j_\ell}\}$  along which the family converges to an analytic function. Since we have \eqref{eq:thm2forphin} in a neighborhood of the origin, we know that this function must be  identically zero in that neighborhood. By analyticity it is zero for $t \in \cup_{x \in [0,1] }B_{x,\eps}$. This proves that we indeed have \eqref{eq:thm2forphin} for any  $t \in \cup_{x \in [0,1] }B_{x,\eps}$ and in particular for $t=1$.

To finish the proof of  Theorem \ref{thm2} for real-valued $h$, observe that   by \eqref{eq:relationPsiandPhi} and \eqref{eq:bounddervativephiincommutators}  there exists an $M>0$ such that
$$\left|\Psi_n(h,\mathcal C)\right| \leq M,$$
for $n\in \mathbb N$ and any  CMV matrix $\calC$. Hence if $\calC$ and  $\tilde \calC$ are two CMV matrices, then
\begin{equation*}
\left|\Psi_n(h,\mathcal C)-\Psi_n(h,\tilde \calC)\right| = \left|\Psi_n(h,\mathcal C)\right|\left|1-\frac{\Psi_n(h,\tilde \calC)}{\Psi_n(h,\mathcal C)}\right|
 \leq M \left|1-{\rm e}^{\Phi_n(1,h,\tilde \calC)-\Phi_n(1,h,\mathcal C)}\right|.
\end{equation*}
The statement for real-valued $h$ therefore follows from \eqref{eq:thm2forphin}.

The extension to  complex-valued $h$ is  analogous  to the argument given in the proof of Proposition \ref{prop:weak}. We define
$$h_z= \Re h + z \Im h,$$
so that $h=h_{\rm i}$. From the proof for the real-valued case we learn that for $z \in \mathbb R$ we have
\begin{equation}\label{eq:hzreal}
\Psi_{n_j}(h_z,\mathcal C)- \Psi_{n_j}(h_z,\tilde{\mathcal C})\to 0.
\end{equation}
 It remains to prove that this also holds for $z= {\rm i}$.

 It follows from \eqref{eq:heine} and \eqref{eq:strongasympterm} that

\begin{multline}
|\Psi_n(h_z,\mathcal C)| =\left | \frac{D_n(e^{h_z} {\rm d} \mu)}{D_n({\rm d}\mu)} e^{-  \int h_z(w) K_n(w,w) {\rm d} \mu(w) }\right|
\leq  \frac{D_n(e^{\Re h_z} {\rm d} \mu)}{D_n({\rm d}\mu)} e^{-  \int \Re h_z(w) K_n(w,w) {\rm d} \mu(w) }\\=
 \Psi_n(\Re h_z,\mathcal C)=\Psi_n( h_{\Re z},\mathcal C).
\end{multline}
By combining this with \eqref{eq:bounddervativephiincommutators} we see   that there exists an $M>0$ such that
$$|\Psi_n(h_z,\mathcal C)| \leq M,$$for $n \in \mathbb N$, $|z| \leq 2 $ and any  CMV matrix $\mathcal C$.   Hence, by Montel's Theorem,
$$\Psi_{n_j}(h_z,\mathcal C)- \Psi_{n_j}(h_z,\tilde{\mathcal C}),$$
is a normal family of analytic function on the disk $|z|\leq 2$. This means that there exists subsequence such that converges uniformly to an analytic function on the disk. From \eqref{eq:hzreal} we know that this limit must vanish for $z \in \mathbb R$, hence it must be the zero function for all $|z|\leq 2$. This means that we have \eqref{eq:hzreal} for all $|z| \leq 2$. In particular, for  $z={\rm i}$ and we proved the statement.
\end{proof}

\subsection{Proof of Theorem \ref{thm:rightlimit}}\label{sec:proofrightlimit}
Part of the conclusion of Theorem \ref{thm:rightlimit} is that the limit $q(h)$ is positive for real-valued $h$.  Therefore we can write
$$q(h)= {\rm e}^{Q(h)},$$
for some function $Q$ and real valued $h$.
Before we come to the proof of Theorem \ref{thm:rightlimit}, we first present an expression for  $Q(h)$.  To this end, we need the right limit of the CMV matrix, which is the double infinite matrix  given by the following limit
$$( \mathcal C^R)_{k\ell}=\lim_{j \to \infty} (\mathcal C)_{n_j+k,n_j+\ell}, \qquad k,\ell \in \bbZ.$$
Here $\{n_j\}_j$ is the sequence for which $\alpha_{n_j+k}  \to  \beta_k$. Then we define $F_m(A)$ for a $\bbZ \times \bbZ$ matrix $A$ by
\begin{equation}
\label{eq:cumullim}
F_m(A)=
\frac{1}{m+1}\sum_{j=1}^m {(-1)^{j-1}}\sum_{l_1+ \cdots + l_j=m, l_i \geq 1} \frac{\Tr P_- A^{l_1}P_-A^{l_2} \cdots P_- A^{l_j} [A,P_-]}{l_1! \ldots l_j!},
\end{equation}
where $P_-$ is the projection operator on $\ell_2(\bbZ)$ that projects on the negative part of $\bbZ$, i.e.
$$(P_- x)_k= \begin{cases}
x_k, & k<0\\
0, & \text{ otherwise.}
\end{cases}$$
Note that  $F_m(A)$ is well-defined for banded matrices  $A$, since in that case $[A,P_-]$ has only finitely many non-zero entries. If $\|A\|_\infty \leq 1$ then the same arguments that showed that $E_m^{(n)}(h(\mathcal C))$ is well-defined, also show that $F_m(h(A))$  is well-defined, with $h(A)= \sum_{j \geq 0} h_j A^j+ \sum_{j >0} h_{-j} (A^*) ^j$   where $h \in  \mathfrak B _{\frac{1}{2}}$. Moreover, all the boundedness and continuity properties of $E_m^{(n)}(h(\mathcal C))$ that we proved also hold for $F_m(h(A))$. We summarize this in the following lemma.
\begin{lemma} For $h \in  \mathfrak B _{\frac12}$ and a banded $\mathbb Z \times \mathbb Z$ matrix $A$ with $\|A\|_\infty \leq 1$ we define
$$\Xi (t,h,A)= \sum_{m=1}^\infty t^{m+1} F_m(h(A)).$$
Then $t \to \Xi(t,h,A)$ is a well-defined analytic function in a sufficiently small neighborhood of the origin. Moreover, there exists a constant $c>0$ such that
$$|\Xi(t,h_1,A)-\Xi(t,h_2,A)| \leq c \|h_1-h_2\|_{{ \mathfrak B }_{\frac{1}{2}}},$$
for $t$ in a sufficiently  small neighborhood of the origin and $A$ such that $\|A\|_\infty\leq 1$.
\end{lemma}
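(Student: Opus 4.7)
The argument transfers the derivation of the bound \eqref{eq:boundcmn} to the doubly-infinite banded setting with $P_n$ replaced by $P_-$, with the finite-rank analysis supplied by the bandedness of $A$ rather than by the finite-rank nature of $P_n$. Let $b$ denote the bandwidth of $A$, i.e.\ $A_{k\ell}=0$ whenever $|k-\ell|>b$. Then $A^k$ is banded of bandwidth at most $|k|b$, so that $A^k P_-$ is supported in rows $i<|k|b$ once a column $j<0$ is fixed, and $P_- A^k$ is supported in columns $j\geq-|k|b$ once a row $i<0$ is fixed. Hence $[A^k,P_-]$ has rank at most $2|k|b$ and operator norm at most $2\|A\|_\infty^{|k|}\leq 2$. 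Using \eqref{eq:finiterank} and summing over $k$ with weights $h_k$ (treating $A^*$ analogously for the terms with $k<0$), I obtain
\begin{equation*}
\|[h(A),P_-]\|_2 \leq 4\sqrt{2b}\,\|h\|_{\mathfrak B_{\frac12}},
\end{equation*}
showing in particular that $[h(A),P_-]$ is Hilbert--Schmidt, so the traces in \eqref{eq:cumullim} are well-defined as traces of trace-class operators.

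Second, I would bound the generic summand of $F_m(h(A))$ exactly as in the derivation of \eqref{eq:estDm}. Using $P_-^2=P_-$ and $P_-[h(A),P_-]P_-=0$, rewriting $[P_-,h(A)^{l_j}]=\sum_{r=0}^{l_j-1}h(A)^r[P_-,h(A)]h(A)^{l_j-1-r}$, and applying $|\Tr XYZ|\leq\|X\|_\infty\|Y\|_2\|Z\|_2$ together with the elementary estimate $\|h(A)\|_\infty\leq\sum_k|h_k|\leq\|h\|_{\mathfrak B_{\frac12}}$ (which uses only $\|A\|_\infty\leq 1$), I get
\begin{equation*}
|F_m(h(A))|\leq C_b\, m^{3/2}\,\|h\|_{\mathfrak B_{\frac12}}^2\,(e\|h\|_{\mathfrak B_{\frac12}})^{m-1},
\end{equation*}
where the combinatorial factor $m^m/m!$ is bounded via \eqref{combest}. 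This shows that $\Xi(t,h,A)$ is absolutely convergent, and hence analytic, on the disk $|t|<(e\|h\|_{\mathfrak B_{\frac12}})^{-1}$, uniformly in $A$ subject to $\|A\|_\infty\leq 1$.

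For the Lipschitz estimate, I would exploit the multilinearity of the summand in \eqref{eq:cumullim}. Each term is a product of $m+1$ factors of the form $h(A)$ or $[h(A),P_-]$ that is linear in $h$. A telescoping replacement of $h_1$ by $h_2$ in one factor at a time produces $m+1$ terms, in each of which the difference $(h_1-h_2)(A)$ (or $[(h_1-h_2)(A),P_-]$) appears exactly once. Estimating that single factor by $\|h_1-h_2\|_{\mathfrak B_{\frac12}}$ (using the commutator bound from step one when applicable, and $\|\cdot\|_\infty\leq\|\cdot\|_{\mathfrak B_{\frac12}}$ otherwise), while controlling the remaining factors in operator norm by $R:=\max(\|h_1\|_{\mathfrak B_{\frac12}},\|h_2\|_{\mathfrak B_{\frac12}})$ together with the same trace inequality as above, yields
\begin{equation*}
|F_m(h_1(A))-F_m(h_2(A))|\leq C_b\,m^{5/2}\,(eR)^{m-1}\,\|h_1-h_2\|_{\mathfrak B_{\frac12}}.
\end{equation*}
Summing the series in $m$ for $|t|<(2eR)^{-1}$ gives the claimed estimate $|\Xi(t,h_1,A)-\Xi(t,h_2,A)|\leq c\,\|h_1-h_2\|_{\mathfrak B_{\frac12}}$, with $c$ depending on $R$ but not on $A$.

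The main technical obstacle is the bookkeeping in the telescoping of step three: one must verify that replacing a factor produces only a polynomial-in-$m$ overhead and not an exponential one, since a priori each of the $m+1$ factors could contribute a full $\|h\|_{\mathfrak B_{\frac12}}$. Linearity guarantees that exactly one factor is ``differentiated'' in each of the $m+1$ summands (providing the harmless extra $m$), while the remaining factors collectively supply the same $R^{m-1}$-type geometric decay already used in step two, so the extra bookkeeping does not shrink the radius of convergence beyond what is claimed.
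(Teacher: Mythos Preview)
Your proposal is correct and matches the paper's intent: the paper does not give a proof of this lemma but simply asserts that ``the same arguments that showed that $E_m^{(n)}(h(\mathcal C))$ is well-defined'' and that ``all the boundedness and continuity properties'' transfer. You have correctly carried out that transfer, replacing the rank bound on $[\mathcal C^k,P_n]$ (coming from the CMV bandwidth $2$) by the rank bound on $[A^k,P_-]$ coming from the general bandwidth $b$, and then repeating the derivation of \eqref{eq:boundcmn} verbatim.

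One remark on the Lipschitz part. The paper's continuity estimate for $\Phi_n$ (Lemma~\ref{lem:bound1}) is proved via the resolvent/determinant representation \eqref{eq:derivativephiincommutators}, which is not directly available for $\Xi$ since $P_-$ has infinite rank and no Fredholm determinant is defined. Your direct telescoping on the series \eqref{eq:cumullim} is therefore the right substitute, and the paper's phrase ``the same arguments'' should really be read as ``the same ingredients'': the key inputs are $\|[g(A),P_-]\|_2\le C_b\|g\|_{\mathfrak B_{1/2}}$, $\|g(A)\|_\infty\le\|g\|_{\mathfrak B_{1/2}}$, and the combinatorial bound \eqref{combest}, all of which you invoke. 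The extra factor of $m$ from the telescoping is harmless for the radius of convergence, exactly as you note. A minor point: your constant $C_b$ depends on the bandwidth of $A$, so the conclusion is uniform over $A$ only within a fixed bandwidth class; this is all that is needed for the application to $\mathcal C_M^R$, which inherits the CMV bandwidth.
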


The relation between  $E_m^{(n)}$ and $F_m$ is as follows. First we embed the space matrices of $\mathbb N \times \mathbb N$ into the space of $\bbZ \times \bbZ$ matrices by adding zero-entries. Moreover, we extend $P_n$ to an operator on $\ell_2(\bbZ)$ by $(P_nx)_k=x_k$ if $k \leq n$ and by $(P_nx)_k=0$ if $k > n$. Then we can view the operators in the traces in the definition \eqref{eq:cumul} of $E_m^{(n)}$ as operators on $\ell_2(\bbZ)$.
Next, we use  the shift operator $\mathcal S_n$ that maps sequences $x=\{x_k\}_{k \in \bbZ}$ to $\mathcal S_n x= \{x_{n+k}\}_{k \in \bbZ}$. Then $(\mathcal S_n^*x)_k= x_{n-k}$ and  $\mathcal S_n^* P_n \mathcal  S_n= P_-$.  Then it follows by the fact that the trace is cyclic and $\mathcal S_n^*h(\mathcal C)S_n=h(\mathcal S_n^* \mathcal  C  \mathcal S_n)$ that
\begin{equation}\label{eq:fromDtoE}
E_m^{(n)} (h(\mathcal C))= F_m(h(\mathcal S_n^* \mathcal  C  \mathcal S_n)).
\end{equation}

Next we introduce the truncation of the right limit $\mathcal C^R$ defined by
$$\left(\mathcal C_M^R\right)_{jk}=\begin{cases}
(\mathcal C^R)_{j k}, & -M+1\leq j,k\leq M,\\
0, &\text{otherwise},
\end{cases}
$$
and the map  $t \mapsto Q_M(t h)$ given by
$$Q_M(t h)= \sum_{m=1}^\infty t^{m+1}  F_m(h(\mathcal C^R_M)),$$
which defines an analytic function for $t$ in a sufficiently small neighborhood of the origin.  From the proof below we find that the limit\begin{equation}\label{eq:limitQth}
Q(t h)=\lim_{M\to \infty} \sum_{m=1}^\infty t^{m+1}  F_m(h(\mathcal C^R_M)),
\end{equation}
is  a well-defined analytic function for $t$ in a sufficiently small neighborhood of the origin.
Moreover,  $t \mapsto Q(th)$ can be extended to an analytic function on $\cup_{x \in [0,1]} B_{x,\eps}$ for $\eps>0$. Then at $t=1$ we find the value $q(h)={\rm e}^{Q(h)}$ in Theorem~ \ref{thm:rightlimit} for real-valued functions $h$.

\begin{proof}[Proof of Theorem \ref{thm:rightlimit}]
We will first assume that $h$ is real-valued. We will also assume without of loss of generality that the right limit is along the trivial sequence $n_j=j$ so that
$$(\mathcal C^R)_{k\ell} =  \lim_{n\to \infty} ( \mathcal C )_{n+k ,n+\ell}.$$

Let us first consider the case where $h(z)= \sum_{|j| \leq H} h_j z^j$ is a Laurent polynomial.

We expand  again $
\Phi_{n}(t,h,\mathcal C)= \sum_{m=1}^\infty t^{m+1}E_m^{(n)}(h(\mathcal C))
$
and note that
$E_m^{(n)}(h(\mathcal C))=E_m^{(n)}(h(\mathcal C_M))$ for $m \leq
M/(2H)-1$, where $\mathcal C_M$ is as defined in \eqref{eq:defCMM}.
Then by  \eqref{eq:fromDtoE} and an argument similar as in the proof of Theorem \ref{thm2} and using
$$\lim_{n\to \infty} h(\mathcal S_n^* \mathcal  C_M  \mathcal S_n))= h(\mathcal C_M^R),$$
we easily find
$$\lim_{n\to \infty} E_m^{(n)}(h(\mathcal C))=\lim_{n\to \infty} E_m^{(n)}(h(\mathcal C_M))=
\lim_{n\to \infty}  F_m(h(\mathcal S_n^* \mathcal  C_M \mathcal  S_n))= F_m(h(\mathcal C_M^R)),$$
for $m \leq
M/(2H)-1$.  Now it is also important to note that the left-most term does not depend on $M$ and hence none of the terms do and they  hold whenever $m \leq
M/(2H)-1$. Hence we have
$$\lim_{n\to \infty} E_m^{(n)}(h(\mathcal C))= \lim_{M \to \infty} F_m(h(\mathcal C_M^R)),$$ for all $m \in \mathbb N$.

By Lemma \ref{lem:normalfamily} we know that $t\mapsto \Phi_n(t,h, \mathcal C)$ defines a normal family of analytic functions on  $\cup_{x \in [0,1]} B_{x,\eps}$ with $\eps$ as in Lemma \ref{lem:normalfamily}. Hence there exists a convergent subsequence $\Phi_{n_k}$ with analytic limit $\Phi$. We willl show that  $\Phi$ does not depend on the subsequence after which the statement follows. Indeed,  from the above we know that $\Phi$ has series expansion around the origin with coefficients
$$\lim_{M \to \infty} F_m(h(\mathcal C_M^R)),$$
which does not depend on the precise subsequence $n_k$.  This proves that $\Phi(t,h,\mathcal C)= Q(th)$
as given in \eqref{eq:limitQth} and hence we obtain \eqref{eq:generalSSLT1} for Laurent polynomials $h$.

The extension from Laurent polynomials to general $h \in  \mathfrak B _{\frac{1}{2}}$ follows by a straightforward argument  similar to the one in the proof of Theorem \ref{thm2} and is left to the reader.  The statement that $h \mapsto Q(h)$ is continuous with respect to $\|\cdot\|_{ \mathfrak B _{\frac{1}{2}}}$ follows from  the  fact that by \eqref{eq:equicontinuitysobolev} the family of functions $\{\Phi_{n}\}_n$ is equicontinuous with respect to this norm.

The case of complex valued $h$ can be shown by using an argument based on Montel's Theorem, very similar to  the proof  Theorem \ref{thm2}. However, in this argument we may loose positivity of the limit and therefore we can no longer write $q(h)={\rm e}^{Q(h)}$ for complex-valued functions. Since the argument is almost identical to the argument in the proof of Theorem \ref{thm2} we leave the details  to the reader and this concludes the proof. \end{proof}

\section{Proof of Theorem \ref{thm1} and Proposition \ref{prop:alpha}}\label{sec:proofthm1}
 By Theorem \ref{thm2} we see that two (families of) CMV matrices which have the same right limit, also have the same limit for the ratio \eqref{eq:strongasympterm} (if exists). To prove Theorem \ref{thm2} it is therefore sufficient to analyze a particular CMV matrix with $\alpha_j$. In this section we will therefore analyze the simplest case, namely
 $$\alpha_j=\alpha,$$
 for some $\alpha$ in the unit disk.
We will prove that for such $\calC$ and for $ h \in \mathfrak B_{\frac12}$, we have
$$\lim_{n\to\infty} \Psi_n(h, \calC) = \lim_{n\to\infty} \det (I+P_n ({\rm e}^{h(\mathcal C)}-I)P_n) {\rm e}^{-\Tr P_n h(\mathcal C) P_n} = {\rm e}^{Q_\alpha(h)},$$
where $Q_\alpha(h)$ is given in~\eqref{eq:defQa}.

\subsection{Preliminaries}

The idea is to use an identity due to Ehrhardt \cite[Th. 2.2]{E}. He proved that if $A,B$ are two operators for which  the commutator $[A,B]$ is trace class, then
$$\det {\rm e}^{-A } {\rm e}^{A+B} {\rm e}^{-B} = {\rm e}^{-\frac12\Tr [A,B]}.$$
The left-hand side shoud be understood as a Fredholm determinant for the operator ${\rm e}^{-A } {\rm e}^{A+B} {\rm e}^{-B} -I$. So part of the statement is that the latter operator is trace class if $[A,B]$ is trace class.

The following principle is the key to the proof of Theorem \ref{thm1}.

\begin{proposition}\label{prop:commutator}
Let ${\mathcal C}$ be the CMV matrix and $h \in \mathcal B_{\frac12}$, such that there exist $U$ and $L$ satisfying
\begin{enumerate}
\item[(i)] $h(\mathcal C)= L+U$,
\item[(ii)] $L$ is lower triangular and $U$ is upper triangular,
\item[(iii)] $[L,U]$ is of trace class.
\end{enumerate}
Then
$$\det (I+ P_n ({\rm e}^{h(\mathcal C)}-I)P_n) {\rm e}^{-\Tr P_n h(\mathcal C) P_n} \to  {\rm e}^{\frac{1}{2}  \Tr [U,L]},$$
as $n \to \infty$.
\end{proposition}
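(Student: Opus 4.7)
The plan is to use the triangular decomposition to factor $e^{h(\mathcal{C})}$ and then invoke Ehrhardt's identity. Define
$$M := e^{-L}\,e^{h(\mathcal{C})}\,e^{-U},$$
so that $e^{h(\mathcal{C})} = e^L M e^U$. Since $[L,U]$ is trace class by assumption, Ehrhardt's identity tells us that $M - I$ is trace class and
$$\det M = e^{-\tfrac12\Tr[L,U]} = e^{\tfrac12\Tr[U,L]}.$$
Note that $L$ and $U$ are bounded (being pieces of the bounded operator $h(\mathcal{C})$), so $e^L$ and $e^U$ are well-defined bounded operators; moreover, since $L$ is lower triangular, each $L^k$ is lower triangular, and therefore so is $e^L$. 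Analogously $e^U$ is upper triangular.

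Next I exploit the triangular structure to split $P_n e^{h(\mathcal{C})} P_n$ multiplicatively. A direct entry-wise check gives $P_n e^L = P_n e^L P_n$ (because the lower-triangular structure of $e^L$ means that $e^L$ sends columns indexed by $j>n$ into rows indexed by $i\ge j>n$) and, symmetrically, $e^U P_n = P_n e^U P_n$. Substituting $e^{h(\mathcal{C})} = e^L M e^U$ gives the operator identity
$$P_n e^{h(\mathcal{C})} P_n \;=\; (P_n e^L P_n)\,(P_n M P_n)\,(P_n e^U P_n),$$
which as an identity of $n\times n$ matrices on $\operatorname{ran}P_n$ factors the left-hand side as a product of three finite matrices.

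Taking determinants, the triangular factors are easy: $P_n e^L P_n$ is lower triangular with diagonal entries $(e^L)_{kk} = e^{L_{kk}}$ for $k \le n$, so $\det(P_n e^L P_n) = e^{\Tr P_n L P_n}$, and similarly for $U$. Their exponents add to $\Tr P_n (L+U) P_n = \Tr P_n h(\mathcal{C}) P_n$. Using the translation $\det(I + P_n(A-I)P_n) = \det(P_n A P_n)|_{P_n\mathcal{H}}$ between Fredholm and finite-dimensional determinants, we conclude
$$\det\!\bigl(I + P_n(e^{h(\mathcal{C})}-I)P_n\bigr)\,e^{-\Tr P_n h(\mathcal{C}) P_n} \;=\; \det\!\bigl(I + P_n(M-I)P_n\bigr).$$

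The last step is to pass to the limit on the right-hand side. Because $M-I$ is trace class and $P_n \to I$ strongly, the standard argument (verify on finite rank operators and approximate in trace norm) gives $P_n(M-I)P_n \to M-I$ in $\|\cdot\|_1$. Continuity of the Fredholm determinant with respect to the trace norm then yields $\det(I + P_n(M-I)P_n) \to \det M = e^{\tfrac12 \Tr[U,L]}$, which is the claimed limit. The main conceptual point—and the place where care is needed—is the triangular commutation identity $P_n e^L = P_n e^L P_n$ (and its counterpart for $U$), since this is what allows the principal block of a non-commuting product to factor cleanly into the product of principal blocks.
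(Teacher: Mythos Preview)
Your argument is correct and follows essentially the same route as the paper: both proofs use the triangular commutation identities $P_n e^L = P_n e^L P_n$ and $e^U P_n = P_n e^U P_n$ to reduce $\det(I+P_n(e^{h(\mathcal C)}-I)P_n)\,e^{-\Tr P_n h(\mathcal C)P_n}$ to $\det(I+P_n(e^{-L}e^{h(\mathcal C)}e^{-U}-I)P_n)$, and then pass to the limit via trace-norm continuity and Ehrhardt's identity. The only cosmetic difference is that you factor $e^{h(\mathcal C)}=e^L M e^U$ inside the compression and peel off the triangular determinants, whereas the paper writes $e^{-\Tr P_n L P_n}=\det(Q_n+P_n e^{-L}P_n)$ and multiplies these determinants in from the outside; the resulting identity and limit are identical.
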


\begin{proof}
By the triangularity of $U$ and $L$ we have
\begin{equation}\label{eq:triangularity}
P_n L P_n=  P_n L \quad \text{ and }  \quad  P_n U P_n=  U P_n.
\end{equation}
Hence we also have
$$ {\rm e}^{-P_n L P_n}= Q_n+P_n {\rm e}^{-L }P_n, \quad \text{ and }  \quad  {\rm e}^{-P_n U P_n}= Q_n+ P_n {\rm e}^{-U}P_n,$$
and therefore
$$ {\rm e}^{-\Tr P_n L P_n}=\det \left( Q_n+ P_n {\rm e}^{-L}P_n\right), $$
and $$  {\rm e}^{-\Tr P_n U P_n}=\det \left( Q_n+ P_n{\rm e}^{-U}P_n\right).$$
After some simple algebra we find
\begin{multline*}
\det \left( Q_n+ P_n{\rm e}^{h(\mathcal C)}P_n\right) {\rm e}^{-\Tr P_n h(\mathcal C) P_n}
={\rm e}^{-\Tr P_n L P_n}\det \left( Q_n+ P_n{\rm e}^{h(\mathcal C)}P_n\right) {\rm e}^{-\Tr P_n U P_n} \\
=\det \left( Q_n+ P_n {\rm e}^{-L}P_n\right)\det \left( Q_n+ P_n{\rm e}^{h(\mathcal C)}P_n\right) \det \left( Q_n+ P_n{\rm e}^{-U}P_n\right)\\
=\det \left( Q_n+ P_n   {\rm e}^{-L}P_n{\rm e}^{h(\mathcal C)} P_n{\rm e}^{-U}P_n\right).
\end{multline*}
Now use \eqref{eq:triangularity}  again to deduce that
\begin{multline*}
\det \left( Q_n+ P_n{\rm e}^{h(\mathcal C)}P_n\right) {\rm e}^{-\Tr P_n h(\mathcal C) P_n}=\det \left( Q_n+ P_n   {\rm e}^{-L}{\rm e}^{h(\mathcal C)} {\rm e}^{-U}P_n\right)\\
=\det \left( I+ P_n \left(  {\rm e}^{-L}{\rm e}^{h(\mathcal C)} {\rm e}^{-U}-I\right)P_n\right).
\end{multline*}
We recall that if $A$ is trace class then $P_nA P_n \to A$ in trace norm. Moreover, the Fredholm determinant is continuous with respect to the trace norm \cite{SimonTrace}. Hence by taking the limit $n \to \infty$ we find the statement. \end{proof}

The latter proposition works for any decomposition $h(\mathcal C)=L+U$, but it is not difficult to see we do not have much freedom. The off-diagonal entries of $U$ and $L$ are fixed by $h(\mathcal C)$ and we only have freedom for the diagonal entries. This freedom we will need to make sure that the commutator $[U,L]$ is trace class. As we will see, we only have a trace class commutator for a very particular choice in the diagonal entries.   The construction of $U$ and $L$ will take the rest of this section.

\subsection{Unwrapping of the CMV matrix}

 It will be more illuminative for our purposes to ``unwrap'' the structure of a CMV matrix as follows. Let $R$ be the isometry  $\ell^2(\bbN)\to\ell^2(\bbZ)$ (viewed as a $\bbZ\times \bbN$ matrix) defined via
\begin{equation}\label{eq:R}
(R)_{jk} =
\begin{cases}
1 & \mbox{if } j>0, k=2j-1, \\
1 & \mbox{if } j\le 0, k=-2(j-1), \\
0 & \mbox{otherwise}
\end{cases}
\end{equation}
for $j\in\bbZ,k\in\bbN$.
Direct calculation shows that $$\calD:=R\calC R^* : \ell^2(\bbZ)\to\ell^2(\bbZ)$$ takes the block form
\begin{equation}\label{calD}
\calD=
\left(
\begin{array}{cc}
J \calD^*_{11} J^* & J \calD_{12} \\
- \calD^*_{21} J^* & \calD_{22}
\end{array}
\right),
\end{equation}
where $J: \ell^2(\bbN) \to \ell^2(\bbZ\setminus\bbN)$ is the isometry taking the $k$-th standard unit vector $e_k$ of $\ell^2(\bbN)$ into $e_{-k+1}$ of $\ell^2(\bbZ\setminus\bbN)$, and $\calD_{jk}:\ell^2(\bbN)\to\ell^2(\bbN)$ are two-diagonal matrices
\begin{align}
\label{pt1}
\calD_{11} &= \left(
\begin{array}{ccccc}
-\bar{\alpha}_0 {\alpha}_1 &  \rho_1 \rho_2 & 0  & 0  &   \\
0 &  -\bar{\alpha}_2 {\alpha}_3  &  \rho_3 \rho_4 & 0 &   \\
0 &  0 &  -\bar{\alpha}_4  {\alpha}_5 & \rho_5\rho_6 &    \\
0 & 0 & 0 & -\bar{\alpha}_6  {\alpha}_7 & \ldots \\
 \hphantom{-\alpha_3  \bar{\alpha}_4 } &   &   &  & \ldots  \\
\end{array}
\right),
\\
\label{pt2}
\calD_{22} & = \left(
\begin{array}{ccccc}
 \bar{\alpha}_0 & \rho_0 \rho_1 & 0  &  0  &   \\
0 &  -\alpha_1 \bar{\alpha}_2  &  \rho_2 \rho_3 & 0 &   \\
0 &  0 &  -\alpha_3  \bar{\alpha}_4 & \rho_4\rho_5 &    \\
0 & 0 & 0 & -\alpha_5  \bar{\alpha}_6 & \ldots \\
 \hphantom{-\alpha_3  \bar{\alpha}_4 } &   &   &  & \ldots  \\
\end{array}
\right),
\\
\label{pt3}
\calD_{21} &= \left(
\begin{array}{ccccc}
 -\alpha_1 \rho_0 & -\alpha_2 \rho_1 &  0 &  0  &   \\
0 &  - \alpha_3 \rho_2   &  -\alpha_4 \rho_3 & 0 &   \\
0 &  0  &  - \alpha_5 \rho_4  & -\alpha_6 \rho_5 &    \\
0 & 0 & 0  &  -\alpha_7 \rho_6  & \ldots \\
 \hphantom{-\alpha_3  \bar{\alpha}_4 } &   \hphantom{-\alpha_3  \bar{\alpha}_4 } &   \hphantom{-\alpha_3  \bar{\alpha}_4 } &  \hphantom{-\alpha_3  \bar{\alpha}_4 } & \ldots  \\
\end{array}
\right),
\\
\label{pt4}
\calD_{12} &= \left(
\begin{array}{ccccc}
\rho_0 & -\alpha_0 \rho_1 &  0 &  0  &   \\
0 &  -\alpha_1 \rho_2  &  -\alpha_2 \rho_3 & 0 &   \\
0 &  0 &  -\alpha_3 \rho_4 & -\alpha_4 \rho_5 &    \\
0 & 0& 0 & -\alpha_5 \rho_6 & \ldots \\
 \hphantom{-\alpha_3  \bar{\alpha}_4 } &   \hphantom{-\alpha_3  \bar{\alpha}_4 } &   \hphantom{-\alpha_3  \bar{\alpha}_4 } &  \hphantom{-\alpha_3  \bar{\alpha}_4 } & \ldots  \\
\end{array}
\right).
\end{align}

Note that when $\alpha_j\equiv \alpha$ for all $j$ then (ignoring the $(1,1)$-entry) each of the operators~\eqref{pt1}--\eqref{pt4} is Toeplitz. This motivates us to introduce the following notation.

%
%
%

If $s(z)$ and $t(z)$ are Laurent polynomials, we define
$$
\DT(s,t):=
\left(
\begin{array}{cc}
J \T(s)^* J^* & J \T(t) \\
-\T(t)^* J^* & \T(s)
\end{array}
\right).
$$
Similarly, if $s(z)$, $t(z)$, $p(z)$, $q(z)$ are  Laurent polynomials, we define
$$
\QT(s,t,p,q):=
\left(
\begin{array}{cc}
J \T(s)^* J^* & J \T(t) \\
-\T(p)^* J^* & \T(q)
\end{array}
\right).
$$

In particular, $\DT(s,t) = \QT(s,t,t,s)$, of course.

For a function $s(z)$ of a complex variable $z$, we will occasionally use a shortcut $zs$ to denote the function $z s(z)$. We also denote
\begin{align}
s^*(z)&:= \overline{s(1/\bar z)},\\
\tilde{s}(z)&:=s(1/z),\\
\bar{s}(z)&:= \overline{s(\bar{z})}.
\end{align}

In the next lemma we show that each family of $\T$, $\DT$, $\QT$  matrices forms an algebra with respect to the usual matrix multiplication and addition if we agree to ignore finite rank perturbations. The key to these results is the following well-known identity relating Toeplitz and Hankel determinants,
\begin{equation}\label{hankel}
\T(s)\T(t) = \T(s t) - \Ha(s) \Ha(\tilde t).
\end{equation}

\begin{lemma}\label{lemDT}
For two operators $A$ and $B$ let us write $A\stackrel{f.e.}{=} B$ if $A-B$ has finitely many non-zero entries in the standard basis.
\begin{itemize}
\item[(i)] 
The following equalities hold:
\begin{align*}
&\DT(s_1,t_1)   \DT(s_2,t_2)  \stackrel{f.e.}{=}  \DT(s_1 s_2 -\tb t_1 t_2, \tb s_1 t_2 + t_1 s_2), \\
&\QT(s_1,t_1,p_1,q_1)   \QT(s_2,t_2,p_2,q_2)   
\stackrel{f.e.}{=}  \QT(s_1 s_2 -\tb t_1 p_2, \tb s_1 t_2 + t_1 q_2, p_1 s_2+\tb q_1 p_2, -\tb p_1 t_2 + q_1 q_2).
\end{align*}

\item[(ii)] The following identities hold: $T(s)^* = T(s^*)$, $\DT(s,t)^* = \DT(\tb s, -t)$ and $\QT(s,t,p,q)^* = \QT(\tb s,-p,-t,\tb q)$.

\end{itemize}
\end{lemma}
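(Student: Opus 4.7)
\medskip

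\noindent\textbf{Proof plan.} The plan is to reduce everything to the single classical identity
$$
\T(s)\T(t)=\T(st)-\Ha(s)\Ha(\tilde t),
$$
combined with the observation that if $s$ and $t$ are Laurent polynomials then $\Ha(s)$ and $\Ha(\tilde t)$ both have only finitely many non-zero entries. Consequently $\T(s)\T(t)\fr \T(st)$, and this relation is preserved under left/right multiplication by $J$ or $J^{*}$ (which affects only how the matrix is embedded in $\ell^{2}(\bbZ)$, not its rank). The other two ingredients I will use are $J^{*}J=I_{\ell^{2}(\bbN)}$ (since $J$ is an isometry) and $\T(s)^{*}=\T(s^{*})$, which I verify in (ii) first.

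For (ii) I would first write out the entries directly: $(\T(s)^{*})_{jk}=\overline{s_{k-j}}=(s^{*})_{j-k}=(\T(s^{*}))_{jk}$, giving $\T(s)^{*}=\T(\tb s)$. Then, taking block-adjoints of $\DT(s,t)$ and $\QT(s,t,p,q)$ and applying this identity to each block yields the claimed formulas $\DT(s,t)^{*}=\DT(\tb s,-t)$ and $\QT(s,t,p,q)^{*}=\QT(\tb s,-p,-t,\tb q)$ with no further work.

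For (i), having (ii) in hand, I would just carry out the block matrix multiplication for $\DT(s_1,t_1)\DT(s_2,t_2)$. For example, the bottom-right block equals
$$
-\T(t_1)^{*}J^{*}J\T(t_2)+\T(s_1)\T(s_2)=-\T(\tb t_1)\T(t_2)+\T(s_1)\T(s_2)\fr \T(s_1 s_2-\tb t_1 t_2),
$$
which is exactly $\T$ of the first symbol in the claimed $\DT$. The other three blocks are analogous: the middle $J^{*}J$ always collapses, the two $\T$-products combine by the Hankel identity modulo finite rank, and once one identifies the resulting symbol with the corresponding slot of $\DT(s_1 s_2-\tb t_1 t_2,\tb s_1 t_2+t_1 s_2)$ (using e.g.\ that $(\tb s_1 t_2+t_1 s_2)^{*}=s_1\tb t_2+\tb t_1\tb s_2$ for the bottom-left block), the identity follows. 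The calculation for $\QT$ is exactly the same bookkeeping with four independent symbols.

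The main (and really only) thing to be careful about is keeping track of which entries lie in which of the four sub-blocks and applying $\T(\cdot)^{*}=\T((\cdot)^{*})$ consistently when rewriting the starred quantities as $\tb{(\cdot)}$. There is no analytic content beyond the Hankel identity and the finite-rank observation; I do not expect any genuine obstacle.
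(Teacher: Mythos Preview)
Your proposal is correct and follows exactly the approach the paper takes: the paper's own proof simply says that (i) is immediate from the definition together with the fact that $\T(s)\T(t)-\T(st)$ is finite rank (via the Hankel identity), and that (ii) is immediate. You have just spelled out these ``immediate'' steps in more detail.
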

\begin{proof}
(i) is immediate from the definition and the fact that $\T(s)\T(t)-\T(st)$ is of finite rank (see~\eqref{hankel}). (ii) is immediate. 
\end{proof}

 \subsection{Unwrapping $h(\calC)$}

Let $h$ be a Laurent polynomial. The main aim of this section is to 
understand the structure of $h(\calD)$, see Corollary~\ref{cor:hcqt} below.

   As we saw earlier, $\calD$ and its integer powers too (by Lemma~\ref{lemDT}) have the $\DT$ structure (up to finitely many entries). Thus we may write
   \begin{equation}\label{eq:powersD}
   \calD^k \fr \DT(s_k,t_k)
   \end{equation}
   for some Laurent polynomials $s_k$ and $t_k$. 
Instead of working with the symbols $t_k$'s, it will actually be convenient to remove the phase by introducing
\begin{equation}\label{vk}
v_k(z) := \frac{\bar \alpha}{|\alpha|} t_k(z).
\end{equation}

Trivially, $s_0 = 1$, $v_0 = 0$. Since $\calD$ is unitary, we get $\calD^{-k} = (\calD^k)^*$ for all $k\in\bbN$, so by Lemma~\ref{lemDT}(ii), $s_{-k} = \tb s_k$ and $v_{-k} = -v_k$.

From~\eqref{calD}, we obtain 
\begin{align}
\label{s1} s_1(z) & = -|\alpha|^2  + \rho^2 \frac{1}{z}, \\
\label{v1} v_1(z) & = - {|\alpha|} \rho \left(1+\frac{1}{z} \right),
\end{align}
where $\rho:=\sqrt{1-|\alpha|^2}$.

In the following lemma we collect properties of $s_k$'s and $v_k$'s.

\begin{lemma}\label{lemSkvk}
\begin{itemize}
\item[(i)] For all $k\in\bbZ$, $s_k$ and $v_k$ have real coefficients, that is,
 \begin{align}
\label{skvk2} \overline{ {s}_k(\bar z) } &= s_k(z), \\
\label{skvk3} \overline{ {v}_k(\bar z) } & = v_k(z).
 \end{align}
\item[(ii)] For all $k\in\bbZ$:
 \begin{align}
\label{skvk1} v_k(z) &= -\tfrac{|\alpha|}{\rho} \frac{\tb{s}_k(z)-s_k(z)}{z-1}, \\
\label{skvk4}  \tb v_k(z) &= z v_k(z).
 \end{align}
\item[(iii)] For all $k\in\bbN$:
\begin{equation}
\label{recurrence}
\left( \begin{array}{c} s_k \\ v_k \end{array} \right)
=
\left( \begin{array}{cc} s_1 & -z v_1 \\ v_1 & \tilde s_1 \end{array} \right)
\left( \begin{array}{c} s_{k-1} \\ v_{k-1} \end{array} \right)
=
\left( \begin{array}{cc} s_1 & -z v_1 \\ v_1 & \tb s_1 \end{array} \right)^k
\left( \begin{array}{c} 1 \\ 0 \end{array} \right).
\end{equation}
\item[(iv)] For any $\theta\in[0,2\pi)$, let
\begin{equation}\label{antistretched}
\omega(\theta)= 2\arccos (\rho \cos\tfrac{\theta}{2}) \in [\phi,2\pi-\phi),
\end{equation}
where $\phi$ is~\eqref{eq:phi}.
If $z = e^{{\rm i}\theta}$, then
\begin{align}
\label{sExact} s_k(z) &=
 \cos k \omega-{\rm i}  \frac{\sin k\omega}{\sin\tfrac{\omega}{2}} \rho \sin\tfrac{\theta}{2} , \\
\label{vExact} v_k(z) &= - \frac{\sin k \omega}{\sin\tfrac{\omega}{2}} |\alpha| e^{-{\rm i}  \theta/2}.
\end{align}
for any $k\in\bbZ$.
\end{itemize}
\end{lemma}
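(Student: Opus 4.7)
\emph{Proof plan.} The natural order is to prove (iii) first, then (i) and (ii) together by a joint induction, and finally (iv) by diagonalisation.

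For (iii), apply Lemma~\ref{lemDT}(i) to the factorisation $\calD^k = \calD\cdot\calD^{k-1}$ of~\eqref{eq:powersD}. This gives $s_k = s_1 s_{k-1} - \tb t_1\,t_{k-1}$ and $t_k = \tb s_1\,t_{k-1} + t_1 s_{k-1}$. Converting back via $t_j = (\alpha/|\alpha|)v_j$, noting that $\tb t_1 = (\bar\alpha/|\alpha|)\tb v_1 = (\bar\alpha/|\alpha|)\cdot zv_1$ (this uses~\eqref{skvk4} for $k=1$, which is a direct check from~\eqref{v1}), and using $\tb s_1 = \tilde s_1$ (another direct check from~\eqref{s1}), the pair of scalar identities collapses to the matrix recurrence~\eqref{recurrence}.

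For (i)--(ii) I would induct on $k\ge 0$, and then extend to $k<0$ via $s_{-k}=\tb s_k$ and $v_{-k}=-v_k$, which are immediate from $\calD^{-k}=(\calD^k)^*$ and Lemma~\ref{lemDT}(ii). The joint induction is necessary because the identities~\eqref{skvk2}--\eqref{skvk4} interlock: the inductive step for~\eqref{skvk4} at level $k$ needs~\eqref{skvk1} at level $k-1$, while the inductive step for~\eqref{skvk1} at level $k$ uses~\eqref{skvk4} at level $k-1$ to kill cross terms. The base case $k=1$ is a short computation, e.g.\ $\tb s_1 - s_1 = \rho^2(z-1/z) = \rho^2(z-1)(1+1/z)$, which equals $-\tfrac{\rho(z-1)}{|\alpha|}v_1$. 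In the inductive step, reality of coefficients passes through~\eqref{recurrence} unchanged once $\tb s_1 = \tilde s_1$ is known; applying $*$ to the second row of~\eqref{recurrence} and invoking~\eqref{skvk4} for $k-1$ and $k=1$ gives $\tb v_k = z(v_1 \tb s_{k-1} + s_1 v_{k-1})$, which matches $zv_k$ precisely when $v_1(\tb s_{k-1}-s_{k-1}) = (\tb s_1-s_1)v_{k-1}$, and this follows from two applications of~\eqref{skvk1}. For~\eqref{skvk1} at level $k$, applying $*$ to the first row of~\eqref{recurrence} and subtracting gives $\tb s_k - s_k = \tb s_1 \tb s_{k-1} - s_1 s_{k-1}$ (the $v_1 v_{k-1}$ terms cancel thanks to~\eqref{skvk4}), which splits as $(\tb s_1 - s_1)\tb s_{k-1} + s_1(\tb s_{k-1}-s_{k-1})$; substituting~\eqref{skvk1} inductively and collecting with the just-established~\eqref{skvk4} at level $k$ produces the claimed expression.

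For (iv), diagonalise the transfer matrix $M$ from~\eqref{recurrence}. At $z=e^{{\rm i}\theta}$ a direct computation with~\eqref{s1}--\eqref{v1} gives $\det M = (|\alpha|^2+\rho^2)^2 = 1$ and $\operatorname{tr}M = 2\rho^2\cos\theta - 2|\alpha|^2$; the defining relation~\eqref{antistretched}, combined with the double-angle identity, identifies $\operatorname{tr}M = 2\cos\omega$. Hence the eigenvalues are $e^{\pm{\rm i}\omega}$, and Cayley--Hamilton yields
\[
M^k = \frac{\sin k\omega}{\sin\omega}M - \frac{\sin(k-1)\omega}{\sin\omega}I.
\]
Reading $\binom{s_k}{v_k} = M^k\binom{1}{0}$, formula~\eqref{vExact} follows by writing $v_1 = -2|\alpha|\cos(\omega/2)e^{-{\rm i}\theta/2}$ (using $\rho\cos(\theta/2)=\cos(\omega/2)$) and cancelling $2\cos(\omega/2)$ against $\sin\omega = 2\sin(\omega/2)\cos(\omega/2)$. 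For~\eqref{sExact}, rewriting $\sin(k-1)\omega/\sin\omega = \cos\omega\sin k\omega/\sin\omega - \cos k\omega$ isolates the real $\cos k\omega$, and the remaining piece simplifies via $s_1 - \cos\omega = -{\rm i}\rho^2\sin\theta$ and the same half-angle relation to the stated form. The step I expect to be the main obstacle is not any single computation but the bookkeeping of the joint induction for (i)--(ii): since the three identities genuinely depend on one another at each level, the three sub-arguments within the inductive step must be carried out in the correct order to avoid circular reasoning.
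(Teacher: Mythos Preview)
Your proof is correct, and your argument for (iv) via Cayley--Hamilton is if anything cleaner than the paper's, which computes the eigenvectors of the transfer matrix explicitly and then assembles $M^k$ from its spectral decomposition. The main divergence is in (ii): you carry out a joint induction in which \eqref{skvk1}, \eqref{skvk4} and reality of coefficients are propagated together from level $k-1$ to $k$, with the attendant ordering issues you flag. The paper sidesteps this entirely by exploiting that $\calD^{k+1}$ can be factored both as $\calD\cdot\calD^k$ and as $\calD^k\cdot\calD$. The first factorisation yields $v_{k+1}=\tb s_1 v_k+v_1 s_k$, the second $v_{k+1}=\tb s_k v_1+v_k s_1$; equating the two right-hand sides gives
\[
v_k=\frac{v_1(\tb s_k-s_k)}{\tb s_1-s_1}
\]
directly for all $k$, which is \eqref{skvk1}, and \eqref{skvk4} then follows at once. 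This commutativity trick costs one extra application of Lemma~\ref{lemDT}(i) but dissolves the interdependence that forces your joint induction; your route is more pedestrian but entirely self-contained and needs no new ingredient.
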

\begin{proof}
(i), (ii), and (iii) for $k=0$ and $k=1$ can be checked directly. Combining $\calD^{k+1} = \calD \,\calD^{k}$,~\eqref{eq:powersD},~\eqref{vk}, 
and Lemma~\ref{lemDT}(i), we get
\begin{align}
\label{tempo0} s_{k+1} & =  s_1 s_{k} - \tb v_1  v_{k}, \\
\label{tempo1} v_{k+1} & = \tb s_1 v_{k} + v_1 s_{k}.
\end{align}
An easy induction proves that each $s_k$ and $v_k$ has real coefficients, that is,~\eqref{skvk2} and~\eqref{skvk3} hold.

Using $\calD^{k+1} = \calD^{k}  \, \calD$, ~\eqref{eq:powersD},~\eqref{vk}, and Lemma~\ref{lemDT}(i), 
we also get
\begin{equation}\label{tempo2}
v_{k+1} = \tb s_{k} v_{1} + v_{k} s_{1}.
\end{equation}
Equating the right-hand sides of~\eqref{tempo1} and~\eqref{tempo2}, we obtain
$$
v_{k} = \frac{v_1(\tb s_{k} - s_{k})}{\tb s_1 - s_1},
$$
which reduces to~\eqref{skvk1}. Then~\eqref{skvk4} follows immediately.

The recurrence in (iii) is just~\eqref{tempo0} and~\eqref{tempo1} rewritten in the matrix form after an application of~\eqref{skvk2},~\eqref{skvk3},~\eqref{skvk4}.

Let us prove (iv) now. Denote
$A(z) = \bigl(\begin{smallmatrix}
s_1&-z v_1 \\ v_1 & s_1^*
\end{smallmatrix} \bigr),$
the transfer matrix in~\eqref{recurrence}.
Using ~\eqref{s1},~\eqref{v1}, it is easy to see that 
$A(z)$ is unitary if $z = e^{{\rm i}\theta}$. The eigenvalues of $A(z)$ can be seen to be
\begin{equation}
\lambda_{\pm} = \frac{s_1 + s_1^*}{2} \pm \sqrt{\Big(\frac{s_1 + s_1^*}{2}\Big)^2-1}.
\end{equation}
Here $(s_1+s_1^*)/2 = -|\alpha|^2+\rho^2 \cos\theta \in [-1,0)$, and we adopt the convention that $\sqrt{(\tfrac{s_1 + s_1^*}{2})^2-1}$ belongs to ${\rm i}\bbR_+$ when $\theta\in [0,\pi)$ and to $-{\rm i}\bbR_+$ when $\theta\in(\pi,2\pi)$.

With this in mind, it is easy to see  that $\lambda_- = 1/\lambda_+$ and if $\lambda_+ = e^{{\rm i}\omega_+}$ with $\omega_+\in[0,2\pi)$, then
\begin{equation}\label{stretching}
\cos^2 \tfrac{\omega_+}{2}  = \frac{\cos \omega_+ + 1}{2 }= \frac{\lambda_+ + \lambda_- + 2}{4} = {\frac{-|\alpha|^2 + \rho^2 \cos\theta+1}{2}} = \rho^2 \cos^2\tfrac\theta2.
\end{equation}
Tracing the signs of cosines carefully, we can see that $\cos \frac{\omega_+}{2} = \rho \cos\frac\theta2$, that is, $\omega=\omega_+$, see~\eqref{antistretched}.

%

Let us now find the eigenvectors: let $\bigl(\begin{smallmatrix} x \\1 \end{smallmatrix} \bigr)$ and $\bigl(\begin{smallmatrix} -1 \\\bar x \end{smallmatrix} \bigr)$ be (orthogonal) eigenvectors of  $A(z)$ corresponding to the eigenvalues $\lambda_+$ and $\lambda_-$, respectively. Then
\begin{align*}
v_1 x+s_1^* &=\lambda_+,\\
-v_1 + s_1^* \bar x & = \lambda_- \bar x.
\end{align*}
These imply $x=(\lambda_+ - s_1^*)/v_1$ and $\bar x = v_1/(s_1^* - \lambda_-)$, which produce
\begin{align*}
\frac{x \bar x}{1+ x\bar x} &= \frac12 + \frac12 \frac{\lambda_+ +\lambda_- -2s_1^*}{\lambda_+ - \lambda_-}
= \frac{1}{2}-  \frac{\rho^2}{2} \frac{z-1/z}{\lambda_+ - 1/\lambda_+}
= \frac{1}{2}-  \frac{\rho}{2} \frac{\sin\tfrac{\theta}{2}}{\sin\tfrac{\omega}{2}}, \\
\frac{1}{1+ x\bar x} &= 1 -\frac{x \bar x}{1+ x\bar x} =  \frac{1}{2}+  \frac{\rho}{2} \frac{\sin\tfrac{\theta}{2}}{\sin\tfrac{\omega}{2}}.
\end{align*}


Writing $A(z)^k$ as
$
\bigl(\begin{smallmatrix}
x&-1 \\ 1 & \bar x
\end{smallmatrix} \bigr)
\bigl(\begin{smallmatrix}
\lambda^k_+ & 0 \\ 0 & \lambda_-^k
\end{smallmatrix} \bigr)
\bigl(\begin{smallmatrix}
x&-1 \\ 1 & \bar x
\end{smallmatrix} \bigr)^{-1},
 $
recurrence~\eqref{recurrence} gives
\begin{equation*}
s_k = \frac{|x|^2}{1+|x|^2} \lambda_+^k + \frac{1}{1+|x|^2} \lambda_-^k
\end{equation*}
which becomes~\eqref{sExact}. Note that for negative $k$ the formula~\eqref{sExact} holds since $s_{-k} = s_k^*$. Formula~\eqref{vExact} for $v_k$ follows from the one for $s_k$ and~\eqref{skvk1}.
\end{proof}

\begin{corollary}\label{cor:hcqt}
Let $h(z)= \sum_{j=-N}^N h_jz^j$ be a Laurent polynomial and set \begin{align}
\label{defS} S(z) &= \sum_{j=-N}^N h_j s_j(z) = h_0 + \sum_{j=1}^N (h_j s_j(z) + h_{-j} \tb s_j(z)), \\
\label{defV} V(z) &= \sum_{j=-N}^N h_j v_j(z) = \sum_{j=1}^N (h_j - h_{-j}) v_j(z).
\end{align}
Then
\begin{equation} \label{eq:hdinqt}
 Rh(\calC ) R^*= h(\calD)\stackrel{f.e.}{=} \QT(\bar S, \frac{|\alpha|}{\bar \alpha}V, \frac{|\alpha|}{\bar \alpha} \bar V, S).
 \end{equation}
\end{corollary}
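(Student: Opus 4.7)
The first equality $Rh(\calC)R^* = h(\calD)$ is essentially definitional: $R$ is unitary (the formula~\eqref{eq:R} is just a relabelling bijection of standard bases $\bbN\to\bbZ$), $\calD = R\calC R^*$ by construction, and since $\calC$ is unitary (so $\calC^{-1}=\calC^*$) and $h$ is a Laurent polynomial, one has $R\calC^k R^* = \calD^k$ for every $k\in\bbZ$; summing against $h_j$ yields the claim. For the second assertion, the plan is to sum the already-established identifications $\calD^j\fr \DT(s_j,t_j) = \QT(s_j,t_j,t_j,s_j)$ from~\eqref{eq:powersD} and~\eqref{vk} (with $t_j=\tfrac{|\alpha|}{\bar\alpha}v_j$) against $h_j$:
\[
h(\calD) \;=\; \sum_{j=-N}^N h_j \calD^j \;\fr\; \sum_{j=-N}^N h_j \QT(s_j,t_j,t_j,s_j).
\]
Only finitely many $j$ contribute, so the combined $\fr$-error remains finite rank.

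The delicate point is that the blocks $JT(\cdot)^*J^*$ and $-T(\cdot)^*J^*$ of $\QT(a,b,c,d)$ hide an antilinear operation $s\mapsto s^*$ at the symbol level (Lemma~\ref{lemDT}(ii)), together with the scalar rule $(cf)^* = \bar c f^*$. Summing term-by-term therefore produces
\[
\sum_j h_j\, JT(s_j)^*J^* = JT\!\Bigl(\sum_j h_j s_j^*\Bigr)J^*, \qquad -\sum_j h_j\, T(t_j)^*J^* = -T\!\Bigl(\sum_j h_j t_j^*\Bigr)J^*,
\]
and to match the target form $\QT(\bar S,\tfrac{|\alpha|}{\bar\alpha}V,\tfrac{|\alpha|}{\bar\alpha}\bar V,S)$ one must verify that $\sum_j h_j s_j^* = (\bar S)^*$ and $\sum_j h_j t_j^* = (\tfrac{|\alpha|}{\bar\alpha}\bar V)^*$.

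This is the main technical step, and it reduces to invoking Lemma~\ref{lemSkvk}(i), which asserts that $s_j$ and $v_j$ have real coefficients. Reality gives $s_j^* = \tilde s_j$, so $\sum_j h_j s_j^* = \tilde S$; on the other hand the general identity $(\bar f)^* = \tilde f$ for any Laurent polynomial $f$ gives $(\bar S)^* = \tilde S$, and the $(1,1)$ block matches. The analogous computation $t_j^* = \tfrac{|\alpha|}{\alpha}\tilde v_j$ yields $\sum_j h_j t_j^* = \tfrac{|\alpha|}{\alpha}\tilde V = (\tfrac{|\alpha|}{\bar\alpha}\bar V)^*$, matching the $(2,1)$ block. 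The genuinely linear $(1,2)$ and $(2,2)$ entries collapse directly to $JT(\tfrac{|\alpha|}{\bar\alpha}V)$ and $T(S)$, and assembling the four blocks gives~\eqref{eq:hdinqt}. The only real obstacle is the bookkeeping around the conjugate-linearity of $T(\cdot)^*$; the reality statement of Lemma~\ref{lemSkvk}(i) is precisely what converts the naive sums $\sum_j h_j s_j$ and $\sum_j h_j t_j$ into the barred symbols $\bar S,\bar V$ occupying positions $(1,1)$ and $(2,1)$.
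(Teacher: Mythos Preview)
Your proof is correct and follows essentially the same approach as the paper's: sum the identifications $\calD^j\fr\QT(s_j,t_j,t_j,s_j)$, track the conjugate-linearity hidden in the $(1,1)$ and $(2,1)$ blocks via $T(\cdot)^*$, and invoke the reality of $s_j,v_j$ from Lemma~\ref{lemSkvk}(i) to identify the result with $\QT(\bar S,\tfrac{|\alpha|}{\bar\alpha}V,\tfrac{|\alpha|}{\bar\alpha}\bar V,S)$. The paper's version is terser---it writes the summed $\QT$ directly with $\sum_j\bar h_j s_j$ and $\sum_j\bar h_j v_j$ in the first and third slots and then invokes reality---whereas you pass through $\tilde S$ and $(\bar S)^*$ explicitly; but the content is identical.
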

\begin{proof}
Note that  
\begin{equation*}
h(\mathcal D) \fr \sum_{j=-N}^N h_j \QT(s_j,\tfrac{|\alpha|}{\bar \alpha} v_j, \tfrac{|\alpha|}{\bar \alpha} v_j,s_j)
 = \QT\Big( \sum_{j=-N}^N \bar h_j s_j , \tfrac{|\alpha|}{\bar \alpha} \sum_{j=-N}^N { h}_j v_j , \tfrac{|\alpha|}{\bar \alpha} \sum_{j=-N}^N \bar h_j v_j , \sum_{j=-N}^N { h}_j s_j\Big).
\end{equation*}
Since $s_j$ and $v_j$ have real coefficients, we obtain that the above equality can be rewritten as in \eqref{eq:hdinqt}.
\end{proof}

%

\subsection{Construction of $L$ and $U$}

We now come to the construction of $L$ and $U$ in Proposition~\eqref{prop:commutator}. We do this by decomposing each $\mathcal D^k$ into an upper- and lower-triangular part with a careful choice of diagonals.

First we introduce some notations.
For a Laurent polynomial $l(z) = \sum_{j=-q}^p l_j z^j$ let us define
\begin{align}
& l^+(z)  :=  \sum_{j=1}^p l_j z^j,  \qquad l^-(z)  :=  \sum_{j=-q}^{-1} l_j z^j, \\
& l^{+,\circ}(z)  :=  \sum_{j=0}^p l_j z^j, \qquad l^{-,\circ}(z)  :=  \sum_{j=-q}^0 l_j z^j, \qquad
l^{\circ}(z)  :=  l_0.
\end{align}
For a future reference we note that~\eqref{skvk4} and~\eqref{skvk3}   imply
\begin{align}
\label{ttilde1} (v_k^{+,\circ}) \tb{} & = z (v_k^-) ,\\
\label{ttilde2} (v_k^{-}) \tb{} & = z (v_k^{+,\circ}).
\end{align}

We are now ready to define
\begin{align}
\label{ekplus}
\calE_k^+ & := \QT(s_k^-,\tfrac{|\alpha|}{\bar \alpha}  v_k^{+,\circ}, \tfrac{|\alpha|}{\bar \alpha}  v_k^-, s_k^+), \\
\calE_k^- & := \QT(s_k^+,\tfrac{|\alpha|}{\bar \alpha}  v_k^{-}, \tfrac{|\alpha|}{\bar \alpha}  v_k^{+,\circ}, s_k^-), \\
\calE_k^\circ & := \QT(s_k^\circ,0,0,s_k^\circ)
\end{align}
for any $k\in\bbZ$. For a future reference, it is important for us that
$R^* \calE_k^+ R$ is lower triangular, $R^* \calE_k^- R$ is upper triangular, and $R^* \calE_k^\circ R$ is diagonal.

Trivially, $\calD^k \fr 
\calE_k^+ +\calE_k^- +\calE_k^\circ$, see~\eqref{eq:powersD}.

\begin{lemma}\label{lemComm0}
For any $k\in\bbZ$, $[\calD,\calE_k^+] \fr \QT(0,l_k,-l_k,0)$, where $l_k = \tfrac{\rho}{\bar{\alpha}}  v_1 v_k^\circ $. 
\end{lemma}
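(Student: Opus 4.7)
The plan is a direct computation using Lemma~\ref{lemDT}(i) and the identities collected in Lemma~\ref{lemSkvk}. First, by Corollary~\ref{cor:hcqt} applied to $h(z)=z$ (or directly from~\eqref{calD} after noting that $s_1$ and $v_1$ have real coefficients), we may write $\mathcal D \fr \QT(s_1,\frac{|\alpha|}{\bar\alpha}v_1,\frac{|\alpha|}{\bar\alpha}v_1,s_1)$. Applying the $\QT$-multiplication formula in Lemma~\ref{lemDT}(i) on each side gives, modulo finitely many entries,
\begin{align*}
\mathcal D\,\mathcal E_k^+ &\fr \QT\bigl(s_1 s_k^- - \tb v_1\, v_k^-,\ \tfrac{|\alpha|}{\bar\alpha}(\tb s_1 v_k^{+,\circ}+v_1 s_k^+),\ \tfrac{|\alpha|}{\bar\alpha}(v_1 s_k^- +\tb s_1 v_k^-),\ s_1 s_k^+ -\tb v_1\, v_k^{+,\circ}\bigr),\\
\mathcal E_k^+\,\mathcal D &\fr \QT\bigl(s_k^- s_1 - \tb v_k^{+,\circ}\, v_1,\ \tfrac{|\alpha|}{\bar\alpha}(\tb s_k^- v_1+v_k^{+,\circ} s_1),\ \tfrac{|\alpha|}{\bar\alpha}(v_k^- s_1+\tb s_k^+ v_1),\ s_k^+ s_1 -\tb v_k^-\, v_1\bigr),
\end{align*}
where the factors $\frac{|\alpha|}{\alpha}\cdot\frac{|\alpha|}{\bar\alpha}=1$ have been used to simplify the ``diagonal'' entries. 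Subtracting these two, the commutator is a $\QT$-matrix whose four symbols we must identify.

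The first and fourth symbols vanish immediately. For the first, the difference is $-\tb v_1 v_k^- + \tb v_k^{+,\circ} v_1$; invoking~\eqref{skvk4} with $k=1$ gives $\tb v_1 = z v_1$ and the identity~\eqref{ttilde1} gives $\tb v_k^{+,\circ}=z v_k^-$, so both terms equal $z v_1 v_k^-$ and cancel. The fourth symbol vanishes analogously using~\eqref{ttilde2}.

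For the second symbol, one must show
\[
v_1(s_k^+-\tb s_k^-)+v_k^{+,\circ}(\tb s_1-s_1)=\tfrac{\rho}{|\alpha|}\,v_1\,v_k^\circ,
\]
after which multiplication by $\frac{|\alpha|}{\bar\alpha}$ yields $l_k=\frac{\rho}{\bar\alpha}v_1v_k^\circ$. The key identity is~\eqref{skvk1}, which gives $\tb s_k - s_k=-\frac{\rho}{|\alpha|}(z-1)v_k$ and, in particular (with $k=1$), $\tb s_1-s_1=\rho^2(z-1/z)=-\frac{\rho}{|\alpha|}(z-1)v_1$. Taking the strictly positive-power part of the $\tb s_k-s_k$ identity and using that $(z v_k)^+ = z v_k^{+,\circ}$ produces $s_k^+-\tb s_k^-=\frac{\rho}{|\alpha|}(z v_k^{+,\circ}-v_k^+)$. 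Substituting and expanding, the $z v_k^{+,\circ}v_1$ terms cancel and one is left precisely with $\frac{\rho}{|\alpha|}v_1(v_k^{+,\circ}-v_k^+)=\frac{\rho}{|\alpha|}v_1 v_k^\circ$. The third symbol is handled by the mirror calculation: the negative-power part of the same identity yields $s_k^- - \tb s_k^+ = \frac{\rho}{|\alpha|}(z v_k^- - v_k^- - v_k^\circ)$, where the isolated $v_k^\circ$ term arises because the coefficient of $z^0$ in $z v_k$ equals $(v_k)_{-1}$, and this in turn equals $v_k^\circ$ by comparing constants in~\eqref{skvk4}; the analogous cancellation then produces $-\frac{\rho}{|\alpha|}v_1 v_k^\circ$, giving $-l_k$.

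The routine but delicate point is bookkeeping of the various $+$, $-$, and $\circ$ parts of $\tb s_k$ relative to $s_k$, and the fact that $v_k^\circ$ makes an appearance on the negative-power side because of the shift in~\eqref{skvk4}. I expect this to be the main source of potential sign/index errors, but no new idea beyond Lemma~\ref{lemSkvk} is needed.
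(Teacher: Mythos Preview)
Your proposal is correct and follows essentially the same route as the paper: both compute the four $\QT$-symbols of $[\calD,\calE_k^+]$ via Lemma~\ref{lemDT}(i), kill the first and fourth symbols using \eqref{skvk4}, \eqref{ttilde1}, \eqref{ttilde2}, and identify the second and third by projecting the identity \eqref{skvk1} (rewritten as $(z-1)v_k=-\tfrac{|\alpha|}{\rho}(\tb s_k-s_k)$) onto positive and negative powers respectively. Your explicit observation that $(v_k)_{-1}=v_k^\circ$ from the constant term of \eqref{skvk4} is exactly the bookkeeping the paper hides inside its use of \eqref{ttilde1} when it writes $z v_k^- - (v_k^+)\tb{} = (v_k^{+,\circ})\tb{} - (v_k^+)\tb{}$; the two presentations are equivalent.
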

\begin{proof}
Define $q_1,q_2,q_3,q_4$ by
\begin{multline*}
\QT(q_1,q_2,q_3,q_4) \fr [\calD,\calE_k^+] \\
= \QT(s_1,\tfrac{|\alpha|}{\bar \alpha}  v_1,\tfrac{|\alpha|}{\bar \alpha}  v_1,s_1) \QT(s_k^-,\tfrac{|\alpha|}{\bar \alpha}  v_k^{+,\circ}, \tfrac{|\alpha|}{\bar \alpha}  v_k^-, s_k^+) \\
- \QT(s_k^-,\tfrac{|\alpha|}{\bar \alpha}  v_k^{+,\circ}, \tfrac{|\alpha|}{\bar \alpha}  v_k^-, s_k^+)
\QT(s_1,\tfrac{|\alpha|}{\bar \alpha}  v_1,\tfrac{|\alpha|}{\bar \alpha}  v_1,s_1) .
\end{multline*}
Using Lemma~\ref{lemDT}(i) and then~\eqref{skvk3},~\eqref{skvk4},~\eqref{ttilde1}, we get
$$
q_1 = s_1 s_k^-  - (\tfrac{|\alpha|}{\bar \alpha}  v_1)\tb{}  \tfrac{|\alpha|}{\bar \alpha}  v_k^-           - s_k^- s_1 + (\tfrac{|\alpha|}{\bar \alpha}  v_k^{+,\circ}) \tb{}  \tfrac{|\alpha|}{\bar \alpha}  v_1
= - z  v_1 v_k^-     + z  v_k^{-} v_1 = 0.
$$

For $q_2$ let us again use Lemma~\ref{lemDT}(i) to get
\begin{equation*}
q_2 = \tb s_1 \tfrac{|\alpha|}{\bar \alpha}  v_k^{+,\circ}  + \tfrac{|\alpha|}{\bar \alpha}  v_1  s_k^+
- (s_k^-)\tb{}   \tfrac{|\alpha|}{\bar \alpha} v_1 - \tfrac{|\alpha|}{\bar \alpha}  v_k^{+,\circ} s_1
=
\tfrac{|\alpha|}{\bar \alpha}  v_1 (s_k^{+} - (s_k^-) \tb{} ) - \tfrac{|\alpha|}{\bar \alpha}  v_k^{+,\circ} ( s_1 - \tb s_1).
\end{equation*}
Now let us rewrite~\eqref{skvk1} as
\begin{equation}\label{skvk1modif}
v_k -  z v_k = -\tfrac{|\alpha|}{\rho} ( s_k - \tb s_k),
\end{equation}
 and project onto the positive powers of $z$:
$$
v_k^{+} - z v_k^{+,\circ}  = -\tfrac{|\alpha|}{\rho}( s_k^{+} - (s_k^-)\tb{}  ) ,
$$
These two equalities allow us to rewrite the expression for $q_2$ as
$$
q_2 = -\tfrac{\rho}{\bar{\alpha}} v_1 (v_k^{+} - z v_k^{+,\circ}  ) + \tfrac{\rho}{\bar{\alpha}} v_k^{+,\circ} (v_1 - z v_1)
= \tfrac{\rho}{\bar{\alpha}}  v_1 (v_k^{+,\circ} - v_k^+ ) = l_k
.
$$

Similarly, if we project~\eqref{skvk1modif} onto the negative powers of $z$ and use~\eqref{skvk4}, we get
$$
v_k^{-} -  (v_k^+)\tb{} = -\tfrac{|\alpha|}{\rho} ( s_k^{-} - (s_k^+)\tb{} ).
$$
This together with Lemma~\ref{lemDT}(i) and~\eqref{ttilde1} allows to compute $q_3$:
\begin{align*}
q_3 & = \tfrac{|\alpha|}{\bar \alpha} v_1 s_k^{-}  +\tb s_1  \tfrac{|\alpha|}{\bar \alpha} v_k^{-}            - \tfrac{|\alpha|}{\bar \alpha} v_k^{-} s_1 - (s_k^{+})\tb{} \tfrac{|\alpha|}{\bar \alpha} v_1
\\
&= \tfrac{|\alpha|}{\bar \alpha} v_1 (s_k^{-} -(s_k^{+})  \tb{} ) - \tfrac{|\alpha|}{\bar \alpha} v_k^{-} (s_1 - \tilde s_1 )
\\
&= -\tfrac{\rho}{\bar{\alpha}} v_1 ( z v_k^{-} -(v_k^{+})\tb{} )  = -\tfrac{\rho}{\bar{\alpha}} v_1 ( (v_k^{+,\circ})\tb{}  -(v_k^{+}) \tb{} )  = - l_k.
\end{align*}

Finally, we compute $q_4$:
\begin{equation*}
q_4 = -(\tfrac{|\alpha|}{\bar \alpha}  v_1)\tb{}  \tfrac{|\alpha|}{\bar \alpha}  v_k^{+,\circ}   +s_1   s_k^{+}                +
(\tfrac{|\alpha|}{\bar \alpha}  v_k^-)\tb{}   \tfrac{|\alpha|}{\bar \alpha}  v_1 - s_k^{+} s_1
= - \tilde v_1  v_k^{+,\circ} + (v_k^{-})\tb{}  v_1=- z v_1  v_k^{+,\circ} + z v_k^{+,\circ} v_1 = 0,
\end{equation*}
where we have used~\eqref{ttilde2}.
\end{proof}

\begin{lemma}
For any Laurent polynomials $s,t,u_1,u_2$ we have
$$
[\DT(s,t),\QT(u_1,0,0,u_2)] \fr \QT(0, t(u_2- \tb u_1), t(u_1- \tb u_2) ,0).
$$
\end{lemma}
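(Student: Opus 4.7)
The plan is to treat this as a direct computational consequence of the product rule established earlier in Lemma~\ref{lemDT}(i), using the observation that $\DT(s,t)=\QT(s,t,t,s)$. Since both factors are of $\QT$ form (modulo finite-rank corrections, which are preserved under $\stackrel{f.e.}{=}$), the commutator is again of $\QT$ form, and I only need to read off the four symbols.

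First I would compute the product $\QT(s,t,t,s)\,\QT(u_1,0,0,u_2)$. Applying Lemma~\ref{lemDT}(i) with $(s_1,t_1,p_1,q_1)=(s,t,t,s)$ and $(s_2,t_2,p_2,q_2)=(u_1,0,0,u_2)$, the zeros in the second factor kill the terms involving $\tb t_1 p_2$, $\tb s_1 t_2$, $\tb q_1 p_2$ and $\tb p_1 t_2$, so the product reduces cleanly to $\QT(s u_1,\, t u_2,\, t u_1,\, s u_2)$. Next I would compute the product in the opposite order, $\QT(u_1,0,0,u_2)\,\QT(s,t,t,s)$; now the zeros in the first factor kill the terms involving $\tb u_1 \cdot t$ (in the wrong slots) cleanly and the result is $\QT(u_1 s,\, \tb u_1 t,\, \tb u_2 t,\, u_2 s)$.

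Then I would subtract. In the first and fourth slots, $s$ and $u_i$ are scalar symbols and commute as Laurent polynomials, so both of those slots vanish. The second slot gives $t u_2 - \tb u_1 t = t(u_2 - \tb u_1)$ (again using commutativity of multiplication of symbols), and the third slot gives $t u_1 - \tb u_2 t = t(u_1 - \tb u_2)$. This yields exactly $\QT(0,\, t(u_2 - \tb u_1),\, t(u_1 - \tb u_2),\, 0)$, as claimed.

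The only place any thought is required is bookkeeping of the stars, which come from Lemma~\ref{lemDT}(i)'s use of $\tb{(\cdot)}$ in four of the eight terms of the $\QT$ product; there is no real obstacle, since all of the $\tb{(\cdot)}$'s land on the zero symbols of $\QT(u_1,0,0,u_2)$ in one order and on $u_1, u_2$ in the other. Hence the entire proof consists of applying Lemma~\ref{lemDT}(i) twice and collecting terms.
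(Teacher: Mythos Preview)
Your proposal is correct and follows essentially the same approach as the paper: apply Lemma~\ref{lemDT}(i) to each of the two products in the commutator (using $\DT(s,t)=\QT(s,t,t,s)$), subtract, and use commutativity of Laurent-polynomial multiplication to see that the first and fourth slots vanish. The paper's own proof is the same computation condensed into a single line.
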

\begin{proof}
Using Lemma~\ref{lemDT}(i), we get
\begin{multline*}
[\DT(s,t),\QT(u_1,0,0,u_2)]  = \QT(s,t,t,s) \QT(u_1,0,0,u_2) - \QT(u_1,0,0,u_2) \QT(s,t,t,s)
\\
\fr \QT(s u_1 - u_1 s, t u_2 - \tb u_1 t, t u_1 - \tb u_2 t, s u_2 - u_2 s).
\end{multline*}
\end{proof}

Let us now modify $\calE_k^+$ to make the commutator in Lemma~\ref{lemComm0} of trace class:
\begin{equation}\label{ekplushat}
\hat \calE_k^+ : = \calE_k^+ + \QT(  \tfrac{\rho}{|\alpha|} v_k^\circ,0,0,0).
\end{equation}

\begin{lemma}\label{lemEkplus}
For any $k,j\in\bbZ$ we have that $[\calD^j,\hat \calE_k^+ ] \fr \bdnot$. 
\end{lemma}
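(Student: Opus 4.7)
The plan is to verify the claim first in the base case $j=1$, which contains all of the substantive algebra, and then propagate it to arbitrary $j \in \bbZ$ by a telescoping identity together with the banded structure of $\calD$.

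For $j=1$, I would apply the preceding lemma to $\calD \fr \DT(s_1, \tfrac{|\alpha|}{\bar\alpha} v_1)$ and to the correction block $\QT(\tfrac{\rho}{|\alpha|} v_k^\circ, 0, 0, 0)$ appearing in the definition~\eqref{ekplushat} of $\hat\calE_k^+$. Taking $u_1 = \tfrac{\rho}{|\alpha|} v_k^\circ$ and $u_2 = 0$, the crucial observation is that by Lemma~\ref{lemSkvk}(i) the constant $v_k^\circ$ is a \emph{real} number, so $\tb{u_1} = u_1$. The formula in the preceding lemma then yields
\begin{equation*}
[\calD, \QT(\tfrac{\rho}{|\alpha|} v_k^\circ, 0, 0, 0)] \fr \QT\bigl(0,\, -\tfrac{\rho}{\bar\alpha} v_1 v_k^\circ,\, \tfrac{\rho}{\bar\alpha} v_1 v_k^\circ,\, 0\bigr) = \QT(0, -l_k, l_k, 0),
\end{equation*}
which is precisely the negative of the obstruction $\QT(0, l_k, -l_k, 0)$ identified in Lemma~\ref{lemComm0}. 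Adding the two commutators yields $[\calD, \hat\calE_k^+] \fr \bdnot$.

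To extend to arbitrary $j\in\bbZ$, I would invoke the telescoping identity
\begin{equation*}
[\calD^j, \hat\calE_k^+] = \sum_{i=0}^{j-1} \calD^i \, [\calD, \hat\calE_k^+]\, \calD^{j-1-i}, \qquad j\ge 1.
\end{equation*}
Since $\calD$ is block five-diagonal, every power $\calD^i$ remains banded (with bandwidth growing linearly in $|i|$), so multiplying an operator with finitely many nonzero entries by a banded matrix on either side keeps only finitely many nonzero entries. This settles the lemma for $j\ge 1$; the case $j = 0$ is trivial; and for $j \le -1$ the identity $[\calD^j, \hat\calE_k^+] = -\calD^{j}[\calD^{-j}, \hat\calE_k^+]\calD^{j}$, combined with the fact that $\calD^{-1} = \calD^*$ is banded as well, reduces the claim to the already-established positive case.

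The only delicate step is the base case: recognising that the additive correction $\QT(\tfrac{\rho}{|\alpha|} v_k^\circ, 0, 0, 0)$ in~\eqref{ekplushat} was chosen precisely so that its commutator with $\calD$ cancels the off-diagonal residue produced by Lemma~\ref{lemComm0}. Once this algebraic cancellation (which rests on the reality of $v_k^\circ$) is in hand, the extension to all integer~$j$ is purely formal.
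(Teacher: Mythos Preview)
Your proof is correct and follows essentially the same path as the paper's own argument: the base case $j=1$ is handled by combining Lemma~\ref{lemComm0} with the preceding lemma on $[\DT(s,t),\QT(u_1,0,0,u_2)]$, exploiting the reality of $v_k^\circ$ to obtain the exact cancellation $\QT(0,l_k,-l_k,0)+\QT(0,-l_k,l_k,0)=\bdnot$. For general $j$ the paper proceeds by induction via $[\calD^j,\hat\calE_k^+]=\calD[\calD^{j-1},\hat\calE_k^+]+[\calD,\hat\calE_k^+]\calD^{j-1}$, whereas you use the equivalent unfolded telescoping sum; your explicit remark that bandedness of $\calD^{\pm 1}$ preserves the $\fr$ relation makes transparent a step the paper leaves implicit.
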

\begin{proof}
Using the previous two lemmas:
\begin{equation*}
[\calD,\hat \calE_k^+ ] = [\calD, \calE_k^+ ] + [\calD,\QT(  \tfrac{\rho}{|\alpha|} v_k^\circ,0,0,0)]
\fr \QT(0,l_k,-l_k,0) + \QT(0, -\tfrac{|\alpha|}{\bar \alpha} v_1 (\tfrac{\rho}{|\alpha|} v_k^\circ)^*,  \tfrac{|\alpha|}{\bar \alpha} v_1 \tfrac{\rho}{|\alpha|} v_k^\circ,0 ) = \bdnot.
\end{equation*}

The statement for general $j\in\bbZ$ follows from
$$
[\calD^{-1},  \hat \calE_k^+] = - \calD^{-1} [\calD, \hat \calE_k^+] \calD,
$$
and
$$
[\calD^j,  \hat \calE_k^+]  = \calD [\calD^{j-1}, \hat \calE_k^+ ] +
[\calD ,\hat \calE_k^+] \calD^{j-1},
$$
together with an induction in $j$.
\end{proof}
We are now ready for the construction of $L$ and $U$.

\begin{lemma}\label{lemComm}
Let $h(z) = \sum_{j=-N}^N h_j z^j$ a Laurent polynomial and define $S$ and $V$ as in \eqref{defS} and \eqref{defV}, so that
$$h(\calC)= R^*  \QT(\bar S, \tfrac{|\alpha|}{\bar \alpha} V,\tfrac{|\alpha|}{\bar \alpha} \bar V,S)R + F$$
for a matrix $F$ with $F\fr \bdnot$, see Corollary \ref{cor:hcqt}.

Then, with  
\begin{equation}\label{eq:defL}
L =R^* \QT(\bar S^{-} + \tfrac{\rho}{|\alpha|} \bar V^\circ, \tfrac{|\alpha|}{\bar \alpha} V^{+,\circ},\tfrac{|\alpha|}{\bar \alpha} \bar V^{-},S^{+})R^*+ F_{LT},
\end{equation}
where $F_{LT}$ is the strictly lower triangular part of $F$, and $U=h(\calC)-L$, we have that $[U,L]$ is trace class and
\begin{equation}\label{CommTrace}
\Tr [U,L] = 2 \sum_{j=1}^N j S_j S_{-j} -\sum_{j=1}^N j (V_j^2 + V_{-j}^2),
\end{equation}
where $S_j$'s and $T_j$'s are the Laurent coefficients of $S$ and $T$, respectively: $S(z) = \sum S_j z^j$, $T(z) = \sum T_j z^j$.
\end{lemma}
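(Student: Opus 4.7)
The plan is to pass to the unwrapped side via the unitary $R$, use Lemma~\ref{lemEkplus} to realize $[U,L]$ as a finite-rank commutator, and then compute its trace block-by-block.

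Since $L+U=h(\calC)$ and $[L,L]=0$, we have $[U,L]=[h(\calC),L]$. The scaling rule $c\,\QT(s,t,p,q)=\QT(\bar c s,ct,\bar c p,cq)$, together with linearity of the construction, shows that $W:=\sum_j h_j\hat\calE_j^+$ equals $\QT(\tilde L_1,\tilde L_2,\tilde L_3,\tilde L_4)$ exactly, so $RLR^*=W+RF_{LT}R^*$ with $RF_{LT}R^*$ of finite rank. Applying $R(\cdot)R^*$ and using $Rh(\calC)R^*=h(\calD)$, cyclicity of the trace against finite-rank operators reduces the problem to computing $\Tr[h(\calD),W]$. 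By Lemma~\ref{lemEkplus}, $[\calD^k,\hat\calE_j^+]\fr\bdnot$ for all $j,k$, so $[h(\calD),W]=\sum_{j,k}h_jh_k[\calD^k,\hat\calE_j^+]$ has only finitely many non-zero entries and is of finite rank; in particular $[U,L]$ is trace class. Writing $h(\calD)=\Xi+G_\Xi$ with $\Xi:=\QT(\bar S,aV,a\bar V,S)$, $a:=|\alpha|/\bar\alpha$, and $G_\Xi$ finite-rank, one more cyclicity argument gives $\Tr[U,L]=\Tr[\Xi,W]$.

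To evaluate this trace, expand $\Xi W$ and $W\Xi$ using the $\QT$ multiplication rule (Lemma~\ref{lemDT}(i)) and read off the symbols in the $(1,1)$ and $(2,2)$ diagonal blocks of the commutator $[\Xi,W]$. The pure Toeplitz parts reduce to Toeplitz operators with Laurent symbols $(V^{+,\circ})^*\bar V-V^*\bar V^-$ (in the $(1,1)$ block) and $\bar V^{-*}V-\tilde V V^{+,\circ}$ (in the $(2,2)$ block). A direct coefficient-by-coefficient computation shows that both symbols vanish identically as Laurent polynomials, thanks to the reflection identity $V_n=V_{-n-1}$, which follows from reality of the $v_k$'s (Lemma~\ref{lemSkvk}(i)) together with $\tb v_k=zv_k$ (Lemma~\ref{lemSkvk}(ii)). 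What survives are Hankel products coming from $T(f)T(g)=T(fg)-H(f)H(\tilde g)$, whose traces are evaluated using $\Tr H(f)H(k)=\sum_{\ell\ge 1}\ell\, f_\ell k_\ell$. A direct computation then yields $\sum_{\ell\ge 1}\ell(S_\ell S_{-\ell}-V_\ell^2)$ from the $(1,1)$ block and $\sum_{\ell\ge 1}\ell(S_\ell S_{-\ell}-V_{-\ell}^2)$ from the $(2,2)$ block; summation gives \eqref{CommTrace}.

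The main obstacle is the Toeplitz symbol cancellation in the two diagonal blocks. The diagonal correction $\tfrac{\rho}{|\alpha|}\bar V^\circ$ added to $\tilde L_1$ (which is precisely the upgrade from $\calE_k^+$ to $\hat\calE_k^+$ used in Lemma~\ref{lemEkplus}) is tailored so that the $(1,1)$-block Toeplitz symbol collapses, and the $(2,2)$-block cancellation hinges on the non-trivial reflection $V_n=V_{-n-1}$. Once these symbol identities are in place, everything else is clean Toeplitz--Hankel bookkeeping.
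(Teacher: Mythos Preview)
Your approach is correct and essentially the same as the paper's: reduce $[U,L]=[h(\calC),L]$ to a commutator of two $\QT$ operators via Lemma~\ref{lemEkplus} and cyclicity, then read off the trace from the Hankel remainders in each diagonal block. Your identification $W=\sum_j h_j\hat\calE_j^+$ with the $\QT$ in \eqref{eq:defL} via the scaling rule is clean and correct, and your block-by-block bookkeeping lands on the right answer.

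One small misattribution: the diagonal correction $\tfrac{\rho}{|\alpha|}\bar V^\circ$ does \emph{not} play any role in collapsing the $(1,1)$-block Toeplitz symbol. It is a constant, so it contributes neither to Hankel operators nor to commutators of Toeplitz operators; its only function is to make the whole commutator finite rank (this is exactly Lemma~\ref{lemEkplus}), so that the trace is defined and the finite-rank pieces can be discarded. Both diagonal Toeplitz symbols vanish for the same reason, namely the reflection $V_n=V_{-n-1}$ (equivalently $\tilde V=zV$), which you correctly derived from Lemma~\ref{lemSkvk}. The paper does not spell this out either: it simply writes $\Tr(T(0))$, the implicit justification being that once the commutator is known to be finite rank and the Hankel products are finite rank, the residual Toeplitz part must have zero symbol. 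Your direct verification via the reflection identity is a nice complement to that argument.
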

\begin{proof}
First, note that $U$ is upper triangular and $L$ is lower triangular.
Since $F\fr \bdnot$, we also have $F_{LT}\fr \bdnot$. 
 Now note that  $[U,L]= [h(\mathcal C),L]$.
The latter is sum over terms including $F_{LT}$ and  $R^*[\calD^j,\hat{\mathcal E}^+_k ]R$
and hence we see that $[U,L]$ is also  finite rank by Lemma~\ref{lemEkplus}, and hence in particular trace class. Therefore we can apply Lemma~\ref{prop:commutator}.

Using the fact that the trace is invariant under unitary conjugation and that $\Tr[A,B]=0$ if either $A$ or $B$ is finite rank, we find
\begin{equation}
\Tr[U,L]
= \Tr \left[ \QT(\bar S , \tfrac{|\alpha|}{\bar \alpha} V,\tfrac{|\alpha|}{\bar \alpha} \bar V,{S}), \QT(\bar S^{-} + \tfrac{\rho}{|\alpha|} \bar V^\circ, \tfrac{|\alpha|}{\bar \alpha} V^{+,\circ},\tfrac{|\alpha|}{\bar \alpha} \bar V^{-},{S}^{+})\right].
\end{equation}
From the definition of $\DT$ and $\QT$, 
we get
\begin{multline*}
\Tr [U,L] = \Tr \Big( \T(\bar S)^*\T(\bar S^{-} + \tfrac{\rho}{|\alpha|} \bar V^\circ)^* - \T(\tfrac{ |\alpha|}{\bar \alpha} V)\T( \tfrac{ |\alpha|}{\bar \alpha} \bar V^- )^* \Big.
 - \T( \tfrac{|\alpha|}{\bar \alpha} \bar V)^* \T(\tfrac{|\alpha|}{\bar \alpha} V^{+,\circ}) + \T( S ) \T(S^{+}) \\
- \T(\bar S^{-} + \tfrac{\rho}{|\alpha|} \bar V^\circ)^* \T( \bar S)^* + \T(\tfrac{|\alpha|}{\bar \alpha} V^{+,\circ})\T(\tfrac{|\alpha|}{\bar \alpha} \bar V)^*
\Big. + \T( \tfrac{ |\alpha|}{\bar \alpha} \bar V^- )^*  \T(\tfrac{ |\alpha|}{\bar \alpha} V) - \T(S^{+}) \T( S ) \Big) .
\end{multline*}
Note that $T(\bar s)^* = T(\tilde s)$. Finally, by \eqref{hankel} we obtain
\begin{multline*}
\Tr [U,L] =
\Tr( \T(0)) - \Tr\Big( \Ha(\tilde S)\Ha( S^{-} + \tfrac{\rho}{|\alpha|}  V^\circ) - \Ha(\tfrac{ |\alpha|}{\bar \alpha} V) \Ha(\tfrac{ |\alpha|}{ \alpha} V^-) \Big.
- \Ha( \tfrac{|\alpha|}{ \alpha}  V) \Ha(\tfrac{|\alpha|}{\bar \alpha} (V^{+,\circ})\,\tilde{}\,\,) \\
+ \Ha(S)\Ha((S^+)\,\tilde{}\,\,)
- \Ha( (S^{-})\,\tilde{}\,\, + \tfrac{\rho}{|\alpha|}  V^\circ) \Ha( S) + \Ha(\tfrac{|\alpha|}{\bar \alpha} V^{+,\circ})\Ha(\tfrac{|\alpha|}{ \alpha} V)
\Big. + \Ha( \tfrac{ |\alpha|}{ \alpha} (V^-)\,\tilde{}\,\, )  \Ha(\tfrac{ |\alpha|}{\bar \alpha} \tilde V) - \Ha(S^{+}) \Ha( \tilde S ) \Big) .
\end{multline*}
 Now observe that $\Tr \big(\Ha(m) \Ha( n) \big) = \sum_{j=1}^\infty j m_j n_{j}$. Applying this to each term in the previous expression leads to
\begin{equation*}
\Tr [U,L] =
\sum_{j=1}^\infty j S_{-j} S_{j}-  \sum_{j=1}^\infty  j V_j^2 - \sum_{j=1}^\infty  j V_{-j}^2  + \sum_{j=1}^\infty  j S_{j} S_{-j} .
\end{equation*}
This finishes the proof.
\end{proof}
\subsection{Proof of Theorem \ref{thm1}}
We are almost done with the proof of Theorem \ref{thm1}. It remains to put \eqref{CommTrace} in the form \eqref{eq:defQa} and extend the results for Laurent polynomials to $h \in \mathfrak B_{\frac12}$.

We need one final lemma.

\begin{lemma} \label{lem:final}  The maps $h \mapsto \mathcal A^h$ and $h \mapsto \mathcal B^h$ (see~\eqref{eq:Ah}, ~\eqref{eq:Bh}) satisfy the following properties:
\begin{enumerate}

\item[(i)] If $h$ is a Laurent polynomial of degree $N$, then $\Tr[U,L]$ from Lemma~\ref{lemComm} can be written as
\begin{equation}\label{CommTrace2}
\Tr[U,L] =  2 \sum_{j=1}^N j \mathcal{A}^h_j \mathcal{A}^h_{-j}+
2 \sum_{j=1}^N j \mathcal{B}^h_j \mathcal{B}^h_{-j}.
\end{equation}
Here $\mathcal{A}^h_j$ and $\mathcal{B}^h_j$ are the $j$-th Fourier coefficients \eqref{eq:fourier} of $\mathcal{A}^h$ and $\mathcal{B}^h$.
\item[(ii)] If $h$ is Laurent polynomial of degree $N$, then also $\mathcal A^h$ and $\mathcal B^h$ are Laurent polynomials of degree $N$.
\item[(iii)]   For $h \in \mathfrak B_{\frac12}$ we have
$$\sum_{j=1}^\infty j |\mathcal A^h_j|^2  \leq c\|h\|_{\mathfrak B_{\frac12}}, \qquad \sum_{j=1}^\infty j |\mathcal B^h_j|^2  \leq c\|h\|_{\mathfrak B_{\frac12}},$$
for some constant $c>0$.
\end{enumerate}

\end{lemma}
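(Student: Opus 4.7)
From Lemma~\ref{lemSkvk}(iv), for $z=e^{i\theta}$ we have the Laurent-polynomial identities $\cos k\omega = \tfrac{1}{2}(s_k(z) + \tb{s}_k(z))$ and $\tfrac{\sin k\omega}{\sin(\omega/2)}\,e^{-i\theta/2} = -v_k(z)/|\alpha|$. The first immediately yields
\[
\mathcal{A}^h(z) = a_0 + \sum_{k=1}^N a_k\bigl(s_k(z)+\tb{s}_k(z)\bigr),
\]
a Laurent polynomial of degree $\le N$. For $\mathcal{B}^h$, the second identity combined with the elementary computation $2\bigl(\sin\tfrac{\theta}{2}+|\alpha|\cos\tfrac{\theta}{2}\bigr)e^{i\theta/2} = (|\alpha|-i)z + (|\alpha|+i)$ produces $\mathcal{B}^h(z) = P(z)\,V(z)/(2|\alpha|)$, where $P(z):=(1-i|\alpha|)-(1+i|\alpha|)z$ and $V$ is the Laurent polynomial in~\eqref{defV}; so $\mathcal{B}^h$ is also of degree $\le N$.

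\textbf{Part (i), reduction.} Write $h = h^e + h^o$ (the cosine and sine parts); then $\mathcal{A}^h=\mathcal{A}^{h^e}$, $\mathcal{B}^h=\mathcal{B}^{h^o}$, $S = S^e + S^o$, and $V=V^{h^o}$. Direct inspection of~\eqref{defS}--\eqref{defV} gives $S^e = \mathcal{A}^h$ as Laurent polynomials, and the same calculation that produced $\mathcal{B}^h$ in part (ii) yields $S^o(z) = \rho(z-1)V(z)/(2|\alpha|)$. The key algebraic input is~\eqref{skvk4}: $\tb{v}_k = zv_k$, which together with the real coefficients of $v_k$ forces $V(1/z) = zV(z)$, hence $V_{-j}=V_{j-1}$; likewise $S^o(1/z)=-S^o(z)$ and $S^e(1/z)=S^e(z)$. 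These parities give $\sum_{j\ge1} j\,S^e_j S^o_{-j} + \sum_{j\ge 1} j\,S^o_j S^e_{-j} = 0$, so~\eqref{CommTrace} reduces to proving
\[
2\sum_{j\ge 1} j\,S^o_j S^o_{-j} - \sum_{j\ge 1} j\bigl(V_j^2+V_{-j}^2\bigr) = 2\sum_{j\ge 1} j\,\mathcal{B}^h_j\mathcal{B}^h_{-j}.
\]

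\textbf{Part (i), main computation.} Substituting $S^o_j = \rho(V_{j-1}-V_j)/(2|\alpha|)$ and $\mathcal{B}^h_j = \bigl[(1-i|\alpha|)V_j - (1+i|\alpha|)V_{j-1}\bigr]/(2|\alpha|)$, and using $V_{-j-1}=V_j$, both sides become explicit linear combinations of the two sums $\Sigma_2:=\sum_{j\ge 1} j(V_j^2+V_{j-1}^2)$ and $\Sigma_3:=\sum_{j\ge 1} j\,V_j V_{j-1}$. The identity then collapses to the elementary equality $\rho^2+2|\alpha|^2 = 1+|\alpha|^2$, immediate from $\rho^2 = 1-|\alpha|^2$; structurally, this is the source identity $Q(z)Q(1/z) - P(z)P(1/z) = 2|\alpha|^2(z+z^{-1})$ with $Q(z):=\rho(z-1)$. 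Recognizing the parallel roles of $P$ and $Q$, and isolating the common factor $V$ in both $S^o$ and $\mathcal{B}^h$, is the main technical obstacle.

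\textbf{Part (iii).} From $\sum_{j\ge 1} j|f_j|^2 = \|H(f)\|_2^2 \le \|f\|_{H^{1/2}(\mathbb T)}^2$ and the expansions in~(ii), the bound reduces to $\|s_k\|_{H^{1/2}},\,\|v_k\|_{H^{1/2}} \le C\sqrt{k+1}$. Unitarity of the transfer matrix $A(z)$ in~\eqref{recurrence} on $|z|=1$ (an immediate consequence of $\det A \equiv 1$) gives $|s_k|^2 + |v_k|^2 \equiv 1$ on $\mathbb T$, hence $\|s_k\|_\infty,\|v_k\|_\infty \le 1$; Bernstein's inequality for Laurent polynomials of degree $\le k$ then yields $\|s_k'\|_\infty,\|v_k'\|_\infty \le Ck$, and interpolation between $L^2$ and $H^1$ delivers the $\sqrt{k}$ bound. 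Summing $\sum_{k\ge 1} \sqrt{k+1}\,|a_k| \le \|h\|_{\mathfrak B_{\frac12}}$ finishes the estimate for $\mathcal{A}^h$; for $\mathcal{B}^h$ one additionally uses that the fixed polynomial multiplier $P/(2|\alpha|)$ is bounded on $H^{1/2}(\mathbb T)$. Density of Laurent polynomials in $\mathfrak B_{\frac12}$ extends the bound to all $h$.
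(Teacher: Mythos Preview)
Your proof is correct, but it follows a genuinely different route from the paper's in all three parts. For (ii), the paper rewrites $\mathcal{A}^h$ and $\mathcal{B}^h$ via Chebyshev polynomials $T_{2j}$ and $U_{2j-1}$ evaluated at $\rho\cos\tfrac{\theta}{2}$; your use of the already-available Laurent polynomials $s_k, v_k$ is more direct given the machinery of Lemma~\ref{lemSkvk}. For (i), the paper introduces the real auxiliary functions $W(z)=2\sin\tfrac{\theta}{2}\sum b_j\tfrac{\sin j\omega}{\sin(\omega/2)}$ and $R(z)=2\cos\tfrac{\theta}{2}\sum b_j\tfrac{\sin j\omega}{\sin(\omega/2)}$, then writes $S=\mathcal{A}^h+\rho W$, $V=-|\alpha|W-i|\alpha|R$ and uses the parities $W_{-j}=-W_j$, $R_{-j}=R_j$; your factorisation $S^o=\tfrac{Q}{2|\alpha|}V$, $\mathcal{B}^h=\tfrac{P}{2|\alpha|}V$ with the two linear polynomials $Q,P$ makes the common factor $V$ explicit and reduces the computation to the identity $Q(z)Q(1/z)-P(z)P(1/z)=2|\alpha|^2(z+z^{-1})$, which is equivalent to the paper's implicit use of $\rho^2=1-|\alpha|^2$. (Your $S^o$ carries the opposite sign to what a direct computation gives, but this is irrelevant in the quadratic expression.) For (iii), the paper's argument is more economical: from (ii) it knows $\cos j\omega$ is a Laurent polynomial of degree $j$, so $H(\cos j\omega)$ has rank at most $j$ and hence $\|H(\cos j\omega)\|_2\le\sqrt{j}\|\cos j\omega\|_\infty\le\sqrt{j}$; this bypasses your Bernstein-plus-interpolation step entirely. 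One small point: your parenthetical ``an immediate consequence of $\det A\equiv 1$'' only yields unitarity because on $|z|=1$ the matrix $A(z)$ has the form $\bigl(\begin{smallmatrix} a & -\bar b\\ b & \bar a\end{smallmatrix}\bigr)$ (from \eqref{skvk4} and real coefficients), so that $\det A=|a|^2+|b|^2$; it would be worth making this explicit.
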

\begin{remark}
Clearly $\mathcal{A}^h(z)$ is $h^{ev}(w(\theta))$, where $h^{ev}$ is the even part of $h$. Up to a prefactor, $\mathcal{B}^h(z)$ depends on the odd part of $h$ similarly. 
\end{remark}
\begin{proof}
(i) Define the Laurent polynomials
\begin{align}
W(z) & = 
2 \sin\tfrac\theta2   \sum_{j=1}^N b_j  \frac{\sin  j\omega}{\sin \frac{\omega}{2}}, \\
R(z) & = 2 \cos\tfrac\theta2   \sum_{j=1}^N b_j  \frac{\sin  j\omega}{\sin \frac{\omega}{2}}.
\end{align}
Then~\eqref{defS} and~\eqref{defV} imply that
\begin{align}
S(z) & = \mathcal{A}^h(z) + \rho W(z), \\
V(z) & = -|\alpha| W(z) - {\rm i} |\alpha| R(z).
\end{align}
Note that $W_{-j} = -W_j$ and $R_{-j} = R_j$. This gives
$$
\sum_{j=1}^N j S_j S_{-j} = \sum_{j=1}^N j \mathcal{A}^h_j \mathcal{A}^h_{-j} - \rho^2 \sum_{j=1}^N j W_j^2,
$$
and
$$
\sum_{j=1}^N j  (V_j^2 + V_{-j}^2) = 2|\alpha|^2 \sum_{j=1}^N j W_j^2 - 2|\alpha|^2 \sum_{j=1}^N j R_j^2.
$$
Therefore, by Lemma~\ref{lemComm}
\begin{equation}
\Tr[U,L]  =  2  \sum_{j=1}^N j \mathcal{A}^h_j \mathcal{A}^h_{-j} -2  \sum_{j=1}^N j W_j^2  + 2|\alpha|^2 \sum_{j=1}^N j R_j^2,
\end{equation}
which equals to~\eqref{CommTrace2} since $\mathcal{B}^h(z) = W(z) + |\alpha| R(z)$.

(ii) Recall that the Chebyshev polynomials $T_n$ and $U_n$ of the first and the second kind can be defined via
\begin{align}
T_n(\cos t) & = \cos nt, \\
U_n(\cos t) & = \frac{\sin (n+1)t}{\sin t}.
\end{align}
Using these, one can rewrite $\mathcal{A}^h$ and $\mathcal{B}^h$:
\begin{align}
\label{eq:ChebEven}
\mathcal{A}^h(z) & = a_0+2\sum_{j=1}^N a_j  T_{2j}(\rho \cos \tfrac\theta2), \\
\label{eq:ChebOdd}
\mathcal{B}^h(z) & = 2 (\sin\tfrac\theta2 + |\alpha| \cos\tfrac\theta2)  \sum_{j=1}^N b_j   U_{2j-1}(\rho \cos \tfrac\theta2).
\end{align}
Representations~\eqref{eq:ChebEven} and~\eqref{eq:ChebOdd} show that if $h$ is a Laurent polynomial in $z$ then so are $\mathcal{A}^h(z)$ and $\mathcal{B}^h(z)$. Indeed, $T_{2j}(x)$ contains only even powers of $x$, so using $\cos^2\tfrac\theta2 = \tfrac{\cos\theta+1}{2}$, is follows that $\mathcal{A}^h(z)$ can be represented as a linear combination of $\cos^k\theta$ for $k\ge 0$. This shows that $\mathcal{A}^h(z)$ is a Laurent polynomial in $z$. Similarly, $U_{2j-1}(x)$ contains only odd powers of $x$, so using $\cos^2\tfrac\theta2 = \tfrac{\cos\theta+1}{2}$ and $\sin \tfrac\theta2 \cos \tfrac\theta2 = \tfrac{\sin\theta}{2}$, we obtain that $\mathcal{B}^h(z)$  is a Laurent polynomial in $z$.

(iii)  From (ii) we know that $z \to \cos j \omega $ is a Laurent polynomial of degree $j$. Hence, the Hankel matrix $H(\cos j \omega)$ is of rank $j$ and thus
$$\|H(\cos j \omega)\|_2  \leq \sqrt{j} \|H(\cos j \omega)\|_\infty \leq \sqrt{j}. $$
Then
\begin{equation*}
\left(\sum_{j=1}^\infty j |\mathcal A_j^h|^2\right)^{1/2}= \|H(\mathcal A^h)\|_2\leq 2 \sum_{j}|h_j| \|H(\cos j \omega)\|_2
\leq 2 \sum_{j}\sqrt{j} |h_j|  \leq 2 \|h\|_{ \mathfrak B _{\frac12}}.
\end{equation*}
The case of $\mathcal B^h$ is identical.

\end{proof}

We are now ready for the proof of  Theorem \ref{thm1}.

\begin{proof}[Proof of Theorem \ref{thm1}]
We first  assume again that $h$ is real-valued.

By Theorem \ref{thm2} we only have the consider the case $\alpha_{n}\equiv \alpha$. In that case, if $h$ is a Laurent polynomial, then the statement follows from combining Proposition \ref{prop:commutator}, Lemma \ref{lemComm} and the first property in Lemma \ref{lem:final}.

The extension to real-valued $h \in \mathfrak B_{\frac12}$ goes by a standard continuity argument. Let $\{h_N\}$ be a sequence of Laurent polynomials converging to $h$ in $\mathfrak B_{\frac12}$. Then from \eqref{eq:equicontinuitysobolev} and the continuity of $h\mapsto Q_\alpha (h)$ ensured by the third property in Lemma \ref{lem:final} we obtain
$$
\lim_{n\to \infty} \Psi_n(h,\mathcal C)= \lim_{N\to \infty} \lim_{n\to \infty} \Psi_n(h_N,\mathcal C)= \lim_{N\to \infty} {\rm e}^{Q_\alpha(h_N)}={\rm e}^{Q_\alpha(h)},
$$
and this proves the statement.

Finally, for complex-valued functions $h$, we can use  a normal family argument very similar to the ones we used in the proofs of Theorem \ref{thm2}, \ref{thm:rightlimit} and Proposition \ref{prop:weak}. (Observe that $z \to Q_\alpha(h_z)$ is quadratic.) We leave the details to the reader.
\end{proof}
\subsection{Proof of Proposition \ref{prop:alpha}}
\begin{proof}[Proof Proposition \ref{prop:alpha}]
Again by a normal family argument using \eqref{eq:bounddervativephiincommutators} we find that there exists a subsequence $\{n_j\}_{j \in \mathbb N}$ of $\mathbb N$ such that $\Psi_{n_j}(h,\mu)$ converges and we denote the limit by $\Psi(h,\mu)$. We are done if we  show that the limit does not depend on the subsequence and is always given by $\exp(Q_\alpha(h))$.

 Now let $\{\beta_k\}_{k \in \mathbb N}$ be a right limit of $\{\alpha_{n_j}\}_{j \in \mathbb N}$ along $\{j_\ell\}_{\ell \in \mathbb N}$. Then by \eqref{L1} we know that $\beta_k= |\alpha| {\rm e}^{{\rm i} \phi_k}$ for some angles $\phi_k$. Then, by \eqref{L2}, we find that $\phi_k= \phi$ is independent of $k$. Hence $\beta_k=|\alpha| {\rm e}^{{\rm i } \phi}$ for $k \in \mathbb Z$. Since $\{\beta_k\}_{k \in \mathbb N}$ is of course also a right limit of the full sequence $\{\alpha_n\}_{n \in \mathbb N}$ along $\{n_{j_\ell}\}_{\ell \in \mathbb N}$, we can apply Theorem \ref{thm2}.  This proves that $\Psi_{n_{j_\ell}}$  converges to $\exp( Q_{\alpha}(h))$ as $\ell \to \infty$, as it does not depend on the phase $\phi$. Hence $\Psi_{n_j}$ converges to $\exp( Q_{\alpha}(h))$ as $j \to \infty$ and this proves the statement.
\end{proof}
\appendix
\section{Constant Verblunsky coefficients}

If  ${\rm d} \mu= \frac{{\rm d} \theta}{2 \pi}$ then $\Phi_n(z)= z^n$ and $\alpha_n\equiv 0$.  Another example that is of particular interest to us is the case $\alpha_n\equiv \alpha$ with $|\alpha|<1$ (the corresponding orthogonal polynomials bear the name of Geronimus polynomials, see~\cite{Ger} and~\cite[Ex. 1.6.12]{OPUC1}). The associated measure $\mu_\alpha$ is given by
\begin{equation}
\label{eq:muAlpha}
{\rm d}\mu_\alpha(\theta) = w(\theta) \tfrac{{\rm d}\theta}{2\pi} + q \, \delta_\beta(\theta),
\end{equation}
where
the a.c. part of $\mu_\alpha$ is
$$
w(\theta) =
\begin{cases}
\tfrac{1}{|1+\alpha|} \frac{\sqrt{\cos^2(\phi/2)-\cos^2(\theta/2)}}{\sin((\theta-\beta)/2)} \tfrac{{\rm d}\theta}{2\pi} & \mbox{for } \theta\in(\phi,2\pi-\phi), \\
0 & \mbox{for } \theta\in[-\phi,\phi],
\end{cases}
$$
where
$$
\phi = 2\arcsin(|\alpha|),
$$
and $\beta$ is defined from
$$
1+\bar{\alpha} = |1+\bar{\alpha}| \exp(i\beta/2).
$$
The singular part of $\mu_\alpha$ may consist of up to one pure point located at $e^{i\beta}$ with the weight
$$
\mu_\alpha(\{e^{i\beta}\}) = q =
\begin{cases}
 0 & \mbox{if } |\alpha+\tfrac12|\le \tfrac12, \\
\tfrac{2}{|1+\alpha|^2}(|\alpha+\tfrac12|^2 - \tfrac14) & \mbox{if } |\alpha+\tfrac12| > \tfrac12.
\end{cases}
$$

\end{document}